\def\eqref#1{equation~\ref{#1}}
\def\1{\bm{1}}
\DeclareMathAlphabet{\mathsfit}{\encodingdefault}{\sfdefault}{m}{sl}
\SetMathAlphabet{\mathsfit}{bold}{\encodingdefault}{\sfdefault}{bx}{n}
\DeclareMathOperator*{\argmin}{arg\,min}
   \let\old@lstKV@SwitchCases\lstKV@SwitchCases
   \def\lstKV@SwitchCases#1#2#3{}
   \let\lstKV@SwitchCases\old@lstKV@SwitchCases
       \def\lst@PlaceNumber{\lst@linebgrd}%
\def\lst@PlaceNumber{\llap{\normalfont
                   \lst@numberstyle{\thelstnumber}\kern\lst@numbersep}\lst@linebgrd}\\%
\def\lst@PlaceNumber{\rlap{\normalfont
                   \kern\linewidth \kern\lst@numbersep
                   \lst@numberstyle{\thelstnumber}}\lst@linebgrd}%
  \newcommand\btIfInRange[2]{%
      \global\let\bt@inrange\@secondoftwo%
      \edef\bt@rangelist{#2}%
      \foreach \range in \bt@rangelist {%
          \afterassignment\bt@getrangeb%
          \bt@rangea=0\range\relax%
          \pgfmathtruncatemacro\result{ ( #1 >= \bt@rangea) && (#1 <= \bt@rangeb) }%
          \ifnum\result=1\relax%
              \breakforeach%
              \global\let\bt@inrange\@firstoftwo%
          \fi%
      }%
      \bt@inrange%
  }
  \newcommand\bt@getrangeb{%
      \@ifnextchar\relax%
          {\bt@rangeb=\bt@rangea}%
          {\@getrangeb}%
  }
  \def\@getrangeb-#1\relax{%
      \ifx\relax#1\relax%
          \bt@rangeb=100000%
      \else%
          \bt@rangeb=#1\relax%
      \fi%
  }
 \newcommand{\btLstHL}[1]{%
    \btIfInRange{\value{lstnumber}}{#1}%
      {\color{morange!40}}%
      {\def\lst@linebgrdcmd##1##2##3{}}%
  }%
\DeclareMathAlphabet{\mathbbold}{U}{bbold}{m}{n}
\newcommand*{\boldone}{\mathbbold{1}}
  \let\c@figure\c@lstlisting
  \let\ftype@lstlisting\ftype@figure %
\newcommand\reallywidetilde[1]{\ThisStyle{%
  \setbox0=\hbox{$\SavedStyle#1$}%
  \stackengine{-.1\LMpt}{$\SavedStyle#1$}{%
    \stretchto{\scaleto{\SavedStyle\mkern.2mu\AC}{.5150\wd0}}{.6\ht0}%
  }{O}{c}{F}{T}{S}%
}}
\newcommand\mhl[1]{\bgroup\markoverwith
        {\textcolor{#1!40}{\rule[-.5ex]{2pt}{2.5ex}}}\ULon}
\newcommand\mhlr[1]{\bgroup\markoverwith
        {\textcolor{#1!30}{\rule[-.5ex]{2pt}{2.5ex}}}\ULon}
\newcommand\moptimal{complexity-guided}
\newcommand\mOptimal{Complexity-guided}
\newcommand{\mtitle}{\lang{}: Complexity-Guided Data Sampling for Training Neural Surrogates of Programs}
\newcommand{\minput}{x}
\newcommand{\mInput}{\mathcal{X}}
\newcommand{\moutput}{y}
\newcommand{\mOutput}{\mathcal{Y}}
\newcommand{\msurrogate}{\hat{f}}
\newcommand{\mfunction}{f}
\newcommand{\mFunction}{\mInput{} \to \mOutput{}}
\newcommand{\mlearner}{\mathit{tr}}
\newcommand{\mdistribution}{D}
\newcommand{\mDistribution}{\mathcal{D}}
\newcommand{\mnat}{n}
\newcommand{\mNat}{\mathbb{N}}
\newcommand{\mReal}{\mathbb{R}}
\newcommand{\mloss}{\ell{}}
\newcommand{\mstratum}{s}
\newcommand{\mnstrata}{{c}}
\newcommand{\merror}{\epsilon}
\newcommand{\mprob}{\delta}
\newcommand{\mcomplexityrelation}{\zeta}
\newcommand{\mcomplexity}[1]{\mcomplexityrelation\mleft(#1\mright)}
\newcommand\bnf{\,\mathrm{::=}\,}
\newcommand\bor{\;|\;}
\newcommand\epislon\epsilon
\newcommand{\mdiststratum}[1]{\int_{x \in \mstratum{}_{#1}} \mdistribution{}\mleft(x\mright) {\rm d}x}
\DeclareMathOperator*{\expectation}{\mathbb{E}}
\DeclareMathOperator*{\probability}{P}
\DeclareMathOperator*{\indicator}{\boldone}
\newcommand*{\suchthat}{\;\ifnum\currentgrouptype=16 \middle\fi|\;}
\newcommand{\lang}{\textsc{Turaco}}
\declaretheorem[name=Lemma,numberwithin=section]{lemma}
\declaretheorem[name=Note,numberwithin=section]{note}
\lstdefinelanguage{imp}{
    language=C,
    morekeywords={choose, infer, observe, input}
}
\newcommand\tkeyword[1]{\texttt{\color{blue}#1}}
\definecolor{morange}{RGB}{255,100,0}
\definecolor{mblue}{RGB}{86,180,233}
\definecolor{mgreen}{RGB}{0,158,115}
\definecolor{myellow}{RGB}{255,255,0}
\definecolor{mred}{RGB}{204,51,139}
\definecolor{mop}{RGB}{7, 0, 255}
\definecolor{mconst}{RGB}{113,113,113}
\definecolor{mreg}{RGB}{25,23,124}
\newcommand{\cstepsto}{\;\tilde{\Downarrow}\;}
\let\orig@lstnumber=\thelstnumber
\newcommand\lstresetnumber{\global\let\thelstnumber=\orig@lstnumber}
\newcommand*\wthelper[2]{%
        \hbox{\dimen@\accentfontxheight#1%
                \accentfontxheight#11.3\dimen@
                $\m@th#1\widetilde{#2}$%
                \accentfontxheight#1\dimen@
        }%
}
\newcommand*\accentfontxheight[1]{%
        \fontdimen5\ifx#1\displaystyle
                \textfont
        \else\ifx#1\textstyle
                \textfont
        \else\ifx#1\scriptstyle
                \scriptfont
        \else
                \scriptscriptfont
        \fi\fi\fi3
}
\providecommand\phantomcaption{\caption@refstepcounter\@captype}
\begin{document}
\title{\mtitle}

\author{Alex Renda}
\affiliation{
  \institution{MIT CSAIL}            %
  \country{USA}                    %
}
\email{renda@csail.mit.edu}          %

\author{Yi Ding}
\affiliation{
  \institution{Purdue University}           %
  \country{USA}                   %
}
\email{yiding@purdue.edu}         %

\author{Michael Carbin}
\affiliation{
  \institution{MIT CSAIL}           %
  \country{USA}                   %
}
\email{mcarbin@csail.mit.edu}         %

\begin{abstract}
Programmers and researchers are increasingly developing \emph{surrogates} of programs, models of a subset of the observable behavior of a given program, to solve a variety of software development challenges.
Programmers train surrogates from measurements of the behavior of a program on a dataset of input examples.
A key challenge of surrogate construction is determining what training data to use to train a surrogate of a given program.

We present a methodology for sampling datasets to train neural-network-based surrogates of programs.
We first characterize the proportion of data to sample from each region of a program's input space (corresponding to different execution paths of the program)
based on the complexity of learning a surrogate of the corresponding execution path.
We next provide a program analysis to determine the complexity of different paths in a program.
We evaluate these results on a range of real-world programs, demonstrating that complexity-guided sampling results in empirical improvements in accuracy.

\end{abstract}

\begin{CCSXML}
<ccs2012>
<concept>
<concept_id>10011007.10011074.10011092.10011782</concept_id>
<concept_desc>Software and its engineering~Automatic programming</concept_desc>
<concept_significance>500</concept_significance>
</concept>
<concept>
<concept_id>10010147.10010257</concept_id>
<concept_desc>Computing methodologies~Machine learning</concept_desc>
<concept_significance>300</concept_significance>
</concept>
<concept>
<concept_id>10011007.10011074.10011111.10011113</concept_id>
<concept_desc>Software and its engineering~Software evolution</concept_desc>
<concept_significance>300</concept_significance>
</concept>
</ccs2012>
\end{CCSXML}

\ccsdesc[500]{Software and its engineering~Automatic programming}
\ccsdesc[300]{Computing methodologies~Machine learning}
\ccsdesc[300]{Software and its engineering~Software evolution}
\keywords{programming languages, surrogate models, neural networks}  %

\maketitle

\section{Introduction}
Surrogate programming -- the act of replacing a program with a \emph{surrogate} model of its behavior -- has been increasingly leveraged to solve a variety of software development challenges~\citep{renda_programming_2021}.
For example, \citet{esmaeilzadeh_neural_2012} train neural networks to mimic existing numerical programs (a fast Fourier transform, a triangle intersection detection algorithm, a JPEG encoder, etc.) and execute these neural networks on specialized hardware, achieving an average speedup of $2.3\times$ across benchmarks at the cost of $10\%$ accuracy in these numerical programs.
\citet{kustowski_transfer_2019} train a neural network to mimic computer simulations of inertial confinement fusion, then fine-tune the neural network on a small amount of data from real experiments to improve the accuracy of the simulated predictions.
\citet{renda_difftune_2020} train a neural network to mimic a CPU simulator, then exploit the fact that this neural \NA{surrogate} is differentiable to optimize the CPU simulator's simulation parameters with gradient descent.

\subsection{Surrogate Programming}

Beyond these examples, surrogate programming has developed into a diverse set of techniques and applications across many areas of computing and science. \citet{renda_programming_2021} \NA{organize} the uses of neural network based surrogates of programs into three categories.
\vspace*{-0.12em}
\paragraph{Surrogate compilation.}
In \emph{surrogate compilation}, programmers develop a surrogate that replicates the behavior of a program to deploy to end-users in place of that program.
For example, in addition to \citet{esmaeilzadeh_neural_2012} (who use a surrogate to speed up numerical programs), \citet{mendis_thesis_2020} use a surrogate to speed up compiler autovectorization by replacing an integer linear program (ILP) solver with a surrogate.
Key benefits of surrogate compilation include the ability to execute the surrogate on different hardware and the ability to bound or to accelerate the execution time of the surrogate.

\vspace*{-0.12em}
\paragraph{Surrogate adaptation.}
In \emph{surrogate adaptation}, programmers first develop a surrogate of a program and then further train that surrogate on data from a different task.
For example, in addition to \citet{kustowski_transfer_2019} (who use a surrogate to improve the accuracy of inertial confinement fusion simulations),
\citet{tercan_transfer_2018} use this technique to improve the accuracy of computer simulations of plastic injection molding.
Key benefits of surrogate adaptation include that it makes it possible to alter the semantics of the program to perform a different task of interest and that it may be more data-efficient or result in higher accuracy than training a model from scratch for the task.

\vspace*{-0.12em}
\paragraph{Surrogate optimization.}
In \emph{surrogate optimization} programmers develop a surrogate of a program,
use gradient descent against the surrogate to identify program inputs that optimize a downstream objective,
then use the inputs for executing the original program.
In addition to \citet{renda_difftune_2020} (who use this technique to optimize CPU simulation parameters), \citet{tseng_hyperparameter_2019} use this technique to find parameters for camera pipelines that lead to the most photorealistic images.
\citet{she_neuzz_2019} use this technique to find inputs that trigger branches that may cause bugs in the original program.
The key benefit of surrogate optimization is that it can identify optimal inputs faster than optimizing directly against the program, due to the faster execution speed of the surrogate and that the surrogate is differentiable even when the original program is not (allowing the use of gradient descent).

\vspace*{-0.3em}
\subsection{Dataset Generation}

In each of these \NA{scenarios}, training a surrogate of a program requires measuring the behavior of the program on a dataset of input examples.
There are three common approaches to collecting this dataset.
The first is to use data instrumented from running the original program on a workload of interest~\citep{renda_difftune_2020,esmaeilzadeh_neural_2012}.
In the absence of an available workload, another is to uniformly sample (or sample using another manually defined distribution) from the input space of the program~\citep{tseng_hyperparameter_2019,kustowski_transfer_2019}.
The third is to use \emph{active learning}~\citep{settles_active_2009}, a class of online methods that iteratively query labels for the data points that are most useful (however defined) for further training the surrogate~\citep{ipek_exploring_2006,she_neuzz_2019,pestourie_active_2020}.

Each of these approaches face challenges on programs with different behaviors in different regions of the input space.
For example, \citet[Section IV.A]{renda_difftune_2020} identify a scenario in which an instrumented dataset does not exercise a set of control flow paths in the program enough times for the surrogate to learn the program's behavior along those paths, resulting in a surrogate that generates highly inaccurate predictions for inputs in the regions of the input space corresponding to those paths.

\vspace*{-0.3em}
\subsection{Our Approach: Complexity-Guided Sampling}

Rather than treating the program as a black box, our approach uses the source code and semantics of the program under study to guide dataset generation for training a surrogate of the program.
The core concept is to allocate samples based on both the the \emph{complexity} of learning the program's behavior on a given path and the frequency of that path in the input data distribution.
\paragraph{\mOptimal{} sampling.}
Our objective is to find how many samples to allocate to each region of the input space to minimize the expected error of the resulting surrogate.
To reason about the error of a surrogate, we use neural network sample complexity bounds for learning analytic functions~\citep{arora_fine-grained_2019,agarwala_monolithic_2021}.
These bounds give an upper bound on how many samples are required to learn a surrogate of an analytic function to a given error as a function of a \emph{complexity measure} of that function.
Our approach calculates a complexity measure for the function induced by each control flow path in the program and combines that with the frequency of each path according to an input data distribution.
The output of the approach is the proportion of samples to allocate to each region of the input space, minimizing an upper bound on the stratified~surrogate's~error.

\vspace*{-0.15em}
\paragraph{Stratified functions.}
Our core modeling assumption is to
represent the program as a \emph{stratified function}, a piecewise\footnote{We choose the term \emph{stratified} by analogy with the technique of stratified sampling.} function across different regions (\emph{strata}) of the input space.
We use \emph{stratified surrogates} to model such functions.
To construct a stratified surrogate, we train independent surrogates of each component of the stratified function.
During evaluation, a stratified surrogate uses the original program to check which stratum an input is in, then applies the corresponding surrogate.

\vspace*{-0.15em}
\paragraph{Complexity analysis.}
We present a programming language, \lang{}, in which programs denote stratified functions with well-defined complexity measures (specifically, stratified analytic functions).
We provide a static program analysis for \lang{} programs that automatically %
calculates an upper bound on the complexity of each component of the stratified~function~that~the~program~denotes.

\vspace*{-0.15em}
\paragraph{Evaluation.}
To demonstrate that \moptimal{} sampling using our complexity analysis improves surrogate error on downstream tasks, we evaluate our approach on a range of programs, finding that across this selection of programs \moptimal{} sampling improves error relative to baseline distributions by around $5\%$.
We demonstrate that a $5\%$ improvement in error of a surrogate can result in a $28\%$ improvement in execution speed in an application with a maximum error threshold.
We then analyze the classes of problems for which \moptimal{} sampling excels, finding potential improvements in error of up to $30\%$, and the classes of problems for which \moptimal{} sampling using \lang{}'s complexity analysis sampling fails, finding deteriorations in error of up to $500\%$.
\vspace*{-0.15em}
\paragraph{Renderer demonstration.}
We further present a case study of learning a surrogate of a renderer in a video game engine.
We show that our \moptimal{} sampling approach results in between $17\%$ and $44\%$ lower error than training using
baseline distributions that do not take into account path complexity.
These error improvements correspond with perceptual improvements in~the~generated~renders.

\vspace*{-0.15em}
\paragraph{Contributions.}
We present the following contributions:
\vspace*{-0.15em}
\begin{itemize}
\item An approach to allocating samples among strata to train stratified neural network surrogates of stratified analytic functions that minimizes an upper bound on the surrogate's error.
\item A programming language, \lang{}, in which all programs are stratified analytic functions, and a program analysis to bound the complexity of learning surrogates of those programs.
\item Empirical evaluations on real-world programs demonstrating that \moptimal{} sampling using \lang{}'s complexity analysis results in empirical improvements in error, and that these improvements in error result in improvements in downstream applications.
\item Further empirical evaluations of the classes of problems where \moptimal{} sampling using \lang{}'s complexity analysis succeeds and fails.
\end{itemize}
We lay the groundwork for analyzing \moptimal{} sampling approaches for training surrogates of programs.
Our results hold out the promise of surrogate training approaches that intelligently use the program's semantics to guide the design and training of surrogates of programs.

\section{Example}
\label{sec:example}
\Cref{fig:program-case-study} presents an example distilled from our evaluation (\Cref{sec:renderer}) that we use to demonstrate how \moptimal{} sampling results in a more accurate surrogate than \emph{frequency-based sampling}, sampling according to the frequency of paths alone.

\begin{figure}
  \begin{center}
    \begin{subfigure}[b]{0.51\textwidth}
      \input{code/case_learnguage_code.tex}

      \caption{Graphics program calculating the luminance of a pixel as a function of ambient light~and~material~properties.}
      \label{fig:program-case-study}
    \end{subfigure}
    \hfill
    \begin{subfigure}[b]{0.47\textwidth}
      \begin{center}
        \includegraphics[width=\textwidth]{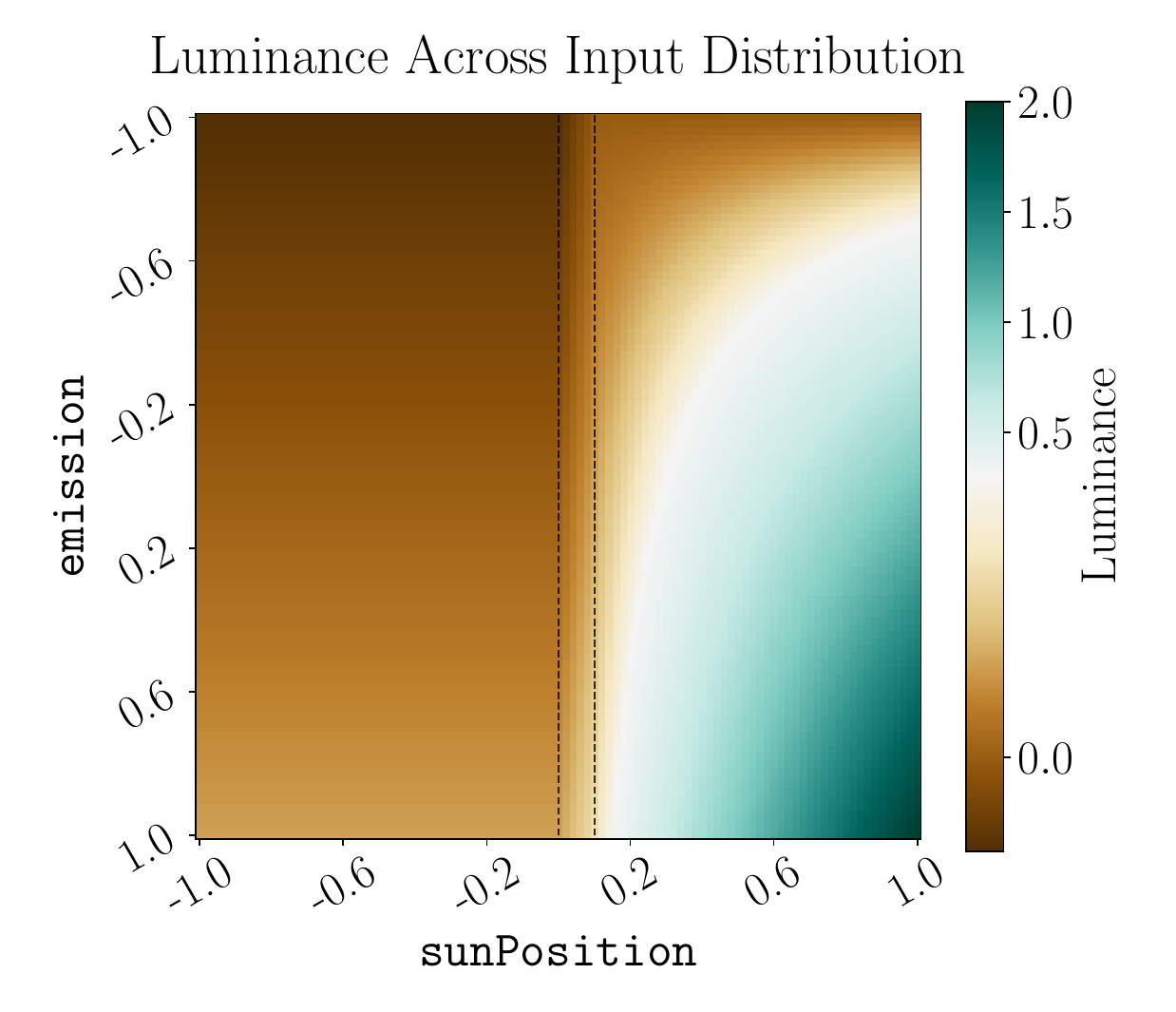}
      \end{center}
      \caption{Output of the program on inputs in $[-1, 1]$, with dashes separating~the~three~paths.}
      \label{fig:case-heatmap}
    \end{subfigure}
    \\

    \begin{center}
    \begin{subfigure}[b]{0.32\textwidth}
      \resizebox{\columnwidth}{!}{%
        \input{code/case_learnguage_code_l.tex}%
      }
      \caption{Nighttime (\texttt{ll}) path.}
      \label{fig:case-l-path}
    \end{subfigure}
    \hfill
    \begin{subfigure}[b]{0.32\textwidth}
      \resizebox{\columnwidth}{!}{%
        \input{code/case_learnguage_code_m.tex}%
        }
      \caption{Twilight (\texttt{rl}) path.}
      \label{fig:case-m-path}
    \end{subfigure}
    \hfill
    \begin{subfigure}[b]{0.32\textwidth}
      \resizebox{\columnwidth}{!}{%
        \input{code/case_learnguage_code_r.tex}%
        }
      \caption{Daytime (\texttt{rr}) path.}
      \label{fig:case-r-path}
    \end{subfigure}
    \end{center}
  \end{center}
  \caption{Example program, outputs, and traces.}
\end{figure}

\paragraph{Program under study.}
\Cref{fig:program-case-study} presents a graphics program that calculates the luminance (i.e., brightness) at a point in a scene as a function of
\texttt{sunPosition}, the height of the sun in the sky (i.e., the time of day), and \texttt{emission}, which describes how reflective the material is at that point.

The program first checks whether it is nighttime (\Cref{line:case-cond-1}), and sets the ambient lighting variable to zero accordingly.
The program next checks whether the sun position is below a threshold indicating direct sunlight (\Cref{line:case-cond-2}) and sets the emission variable accordingly.
The output is then the sum of the ambient light and the light emitted by the material.
\Cref{fig:case-heatmap} presents the output of this program over the valid input range of \texttt{sunPosition} and \texttt{emission} (i.e., between $-1$ and $1$ for both variables).

The path conditions (\Cref{line:case-cond-1,line:case-cond-2}) partition the program into three traces:
nighttime, when \texttt{sunPosition} is less than 0 (\Cref{fig:case-l-path});
twilight, when \texttt{sunPosition} is between $0$ and $0.1$ (\Cref{fig:case-m-path});
and daytime, when \texttt{sunPosition} is greater than $0.1$ (\Cref{fig:case-r-path}).
These paths are separated by dashed black lines in \Cref{fig:case-heatmap}.

\paragraph{Complexity.}
Training \NA{a surrogate} of this program poses a particular challenge because these traces have not only different behavior but also different relative complexities:
when \texttt{sunPosition} is less than $0.1$ the function is linear, but when \texttt{sunPosition} is above $0.1$ the function is quadratic.
This notion of complexity is quantified by the \emph{sample complexity} of each trace: traces that are more complex require more samples to learn to a given error than traces that are less complex.
\Cref{fig:case-path-errors} presents the error as a function of the training dataset size of surrogates of each trace trained in isolation, showing that
indeed the quadratic daytime path the has the highest error, followed by twilight then nighttime.

\paragraph{\mOptimal{} sampling.}
Our objective is to find the number of data points to sample from each path to minimize the expectation of error of a surrogate of the overall program, given a data distribution and a data budget.
To accomplish this, our approach leverages the
complexity of each path
and the
frequency of each path in the data distribution,
prioritizing sampling paths that are more complex (requiring more samples to learn) and that are more frequent (and thus more important to learn).

First our approach determines the sample complexity of each trace along each path, the number of samples required to learn a surrogate of the trace (in isolation) to a given error.
Our approach extends the sample complexity results of \citet{agarwala_monolithic_2021}, who give an upper bound on the number of samples required to learn a neural network approximation of a given analytic function.
Using this bound (as implemented by our \lang{} analysis described in \Cref{sec:language}), our approach determines that the twilight path takes $1.4\times$ as many samples to train a surrogate to a given error as the nighttime path,
and the daytime path requires $3.7\times$ as many samples.

Then given a distribution with the frequency of each path, our approach determines the \moptimal{} sampling rates for each path.
In this example we assume that the data has a uniform distribution over inputs between $-1$~and~$1$, resulting in path frequencies for the nighttime path (\texttt{sunPosition~<~0}) of $50\%$, the twilight path (\texttt{0~<~sunPosition~<~0.1}) of $10\%$, and the daytime path (\texttt{0.1~<~sunPosition})~of~$40\%$.
With this, our approach determines that the nighttime path should be sampled at $36.9\%$ of the data budget (undersampling relative to its frequency because it is simple to learn), the twilight path at $14.0\%$, and the daytime path at $49.1\%$ (oversampling relative to its frequency because it is complex to learn).

\paragraph{Stratified surrogates.}
The class of surrogate model for which we derive the above approach is that of a \emph{stratified neural surrogate} -- a set of disjoint neural networks which are applied based on which path the inputs induce in the program.
Concretely, this means that we train one surrogate per path, and pick which to apply for each input at evaluation time.
For this example program, picking which surrogate to apply just requires comparing \texttt{sunPosition} against constant threshold values. %

\begin{figure*}
  \begin{center}
    \begin{subfigure}[t]{0.49\textwidth}
      \includegraphics[width=\textwidth]{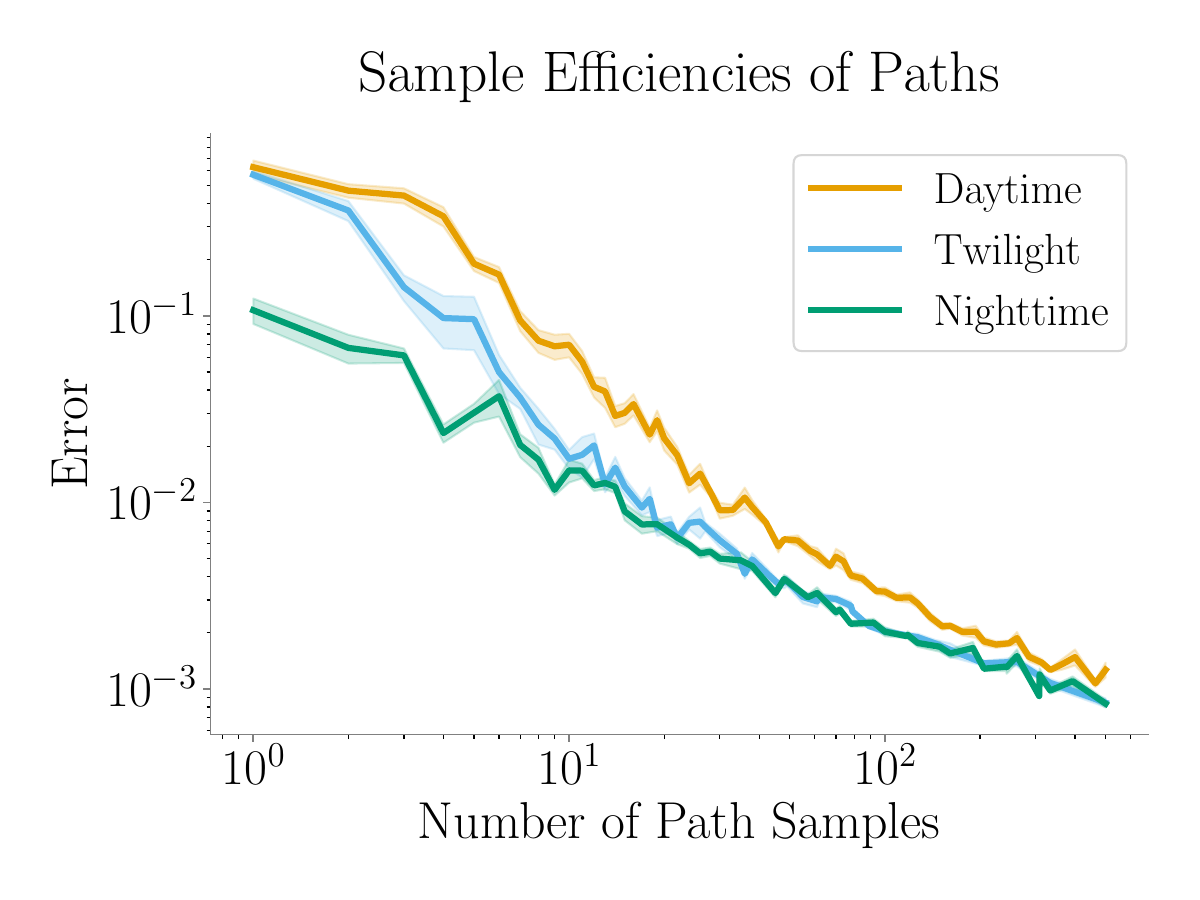}
      \caption{Per-path surrogate errors (log-log plot).}
      \label{fig:case-path-errors}
    \end{subfigure}
    \begin{subfigure}[t]{0.49\textwidth}
      \includegraphics[width=\textwidth]{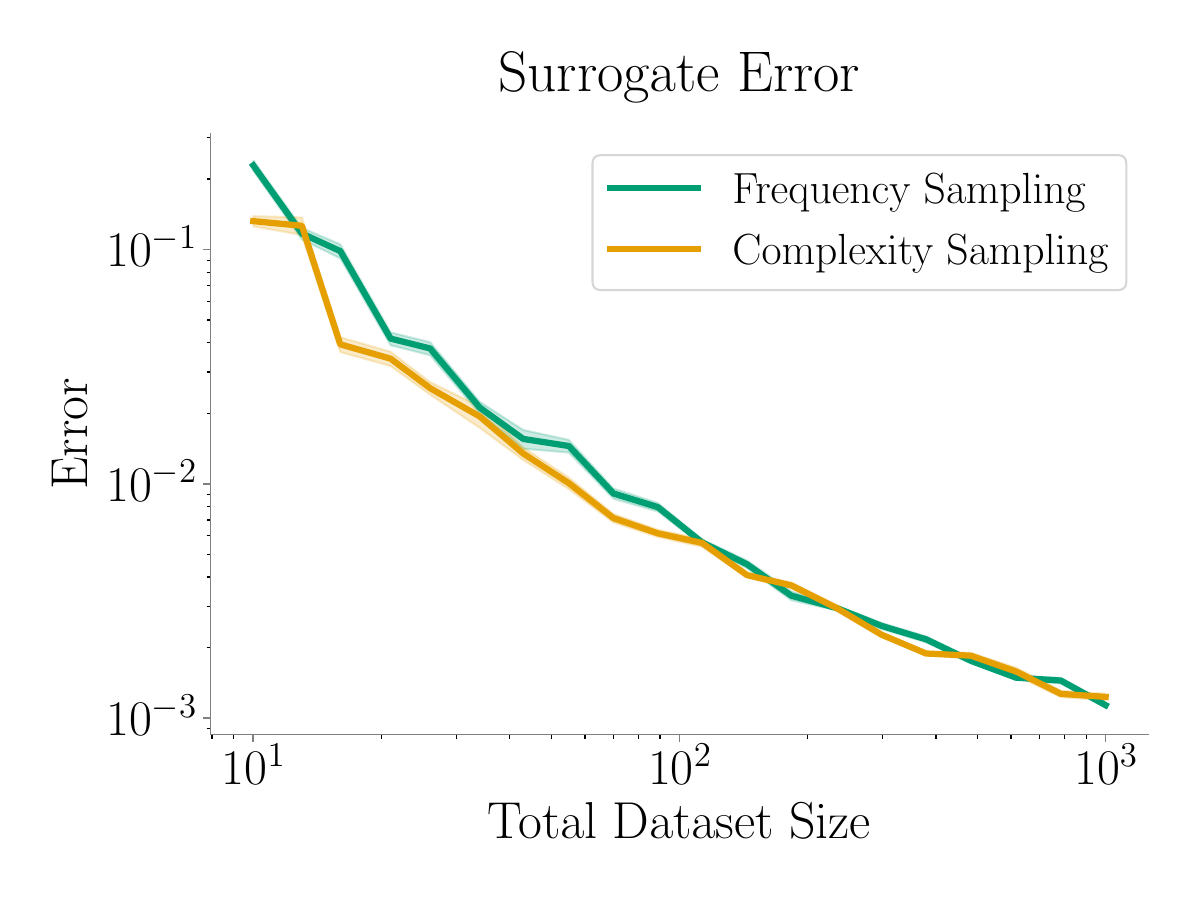}
      \caption{Stratified surrogate errors (log-log plot). Our approach decreases the error by $15\%$.}
      \label{fig:case-all-errors}
    \end{subfigure}
  \end{center}
  \caption{Per-path surrogate errors (left) and combined errors (right)~for~the~example.}

  \label{fig:case-results}
\end{figure*}

\paragraph{Results.}
\Cref{fig:case-all-errors} presents the error as a function of the training dataset size of stratified surrogates of the entire program
for a baseline of sampling according to path frequency alone and for \moptimal{} sampling.
\Cref{fig:case-all-errors} shows that the \moptimal{} sampling approach results in lower error than sampling according to path frequency alone.
For datasets of total size below $70$ samples, the surrogate trained with \moptimal{} sampling has a geometric mean decrease in error of $27.5\%$.
For datasets of total size above $70$ samples, the surrogate trained with \moptimal{} sampling has a geometric mean decrease in error of $5.5\%$.
Across the entire range of dataset sizes evaluated in this plot, the surrogate trained with \moptimal{} sampling has a geometric mean decrease in error of $15\%$.
In sum, our approach results in a surrogate that produces a more accurate luminance calculation, and therefore a better final output from the graphics program, than a surrogate trained using frequency-based path sampling.

\section{Complexity-Guided Sampling}
\label{sec:formalism}

In this section we present the stratified surrogate sample allocation problem and derive our solution, complexity-guided stratified surrogate dataset selection.

\subsection{The Stratified Surrogate Sample Allocation Problem}
Our goal is to learn a \emph{stratified surrogate}, $\msurrogate{}$, of a \emph{stratified function}, $\mfunction{}$, constrained by a \emph{sample budget},~$n$, that defines the number of data samples to be used by the learning algorithm.

We approach this problem through the definition of a stratified function as a piecewise function; we term each piece a \emph{stratum}.
We then define a stratified surrogate as a stratified function itself, with each stratum a surrogate of a corresponding stratum of the stratified function.
Learning a stratified surrogate therefore requires learning a surrogate for each stratum.

We assume that we have a technique for learning a surrogate of a function, $\mfunction{}$, given a sample budget.
Our precise goal is thus to partition the overall sample budget, $n$, into per-stratum sample budgets for each stratum of the stratified function, with the objective of minimizing the overall error of the stratified surrogate.

\subsubsection{Stratified Functions and Surrogates}

We define a \emph{stratified function} $\mfunction$ as follows: \[
  \mfunction{}\mleft(\minput{}\mright) \triangleq \begin{cases}
    \mfunction{}_1\mleft(\minput{}\mright) & \text{if}\; \minput{} \in \mstratum{}_1 \\
    \vdots \\
    \mfunction{}_\mnstrata{}\mleft(\minput{}\mright) & \text{if}\; \minput{} \in \mstratum{}_\mnstrata{} \\
  \end{cases}
\]
where $f$ and each $f_i$ is a function from inputs $\minput{}: \mInput{}$ to outputs $\moutput{}: \mOutput{}$, $\mnstrata{}$ is the number of strata, $\{\mstratum{}_i\}_{i=1}^c$ are strata, and where $\cup_i \mstratum{}_i = \mInput{}$ and $\forall i \neq j.\; \mstratum{}_i \cap \mstratum{}_j = \varnothing$.
We define a \emph{stratified surrogate} $\msurrogate{}$ as a stratified function with components $\msurrogate{}_i$.
\subsubsection{The Stratified Surrogate Sample Allocation Problem}

To restate, our goal is to learn a stratified surrogate $\msurrogate{}$ of a stratified function $\mfunction{}$. %

Formally, we define a \emph{learning algorithm}, a function that learns a surrogate of a given input function, as
a \NA{random function} $\mlearner{}: \mleft(\mFunction{}\mright) \times \mDistribution{} \times \mNat{} \times \mleft(\mOutput{} \times \mOutput{} \to \mReal{}_{\geq 0}\mright) \to \mleft(\mFunction{}\mright)$ that takes a function $\mfunction{}: \mFunction{}$ from inputs $\minput{}: \mInput{}$ to outputs $\moutput{}: \mOutput{}$, a distribution $\mdistribution{}: \mDistribution{}$ over inputs $\minput{}$, a number of training examples $\mnat{}: \mNat{}$, and a loss function $\mloss{}: \mOutput{} \times \mOutput{} \to \mReal{}_{\geq 0}$ which measures the cost of an incorrect prediction, and returns a function (representing the output surrogate) $\msurrogate{}: \mFunction{}$.

We also define notation for data distributions $\mdistribution{}$.
Let $\mdistribution{}\mleft(\minput{}\mright)$ be the probability that $\minput{}$ is sampled from $\mdistribution{}$, and $\mdiststratum{i}$ be the probability mass of all data points within $\mstratum{}_i$ over $\mdistribution{}$ (reducing to a summation for discrete distributions).
Let $\mdistribution{}\mleft(\minput{} \middle| \mstratum{}_i\mright)$, the distribution of $x$ within stratum~$s_i$, be defined as:

\begin{equation*}
  \mdistribution{}\mleft(\minput{} \middle| \mstratum{}_i\mright) \triangleq
  \begin{cases}
    \frac{\mdistribution\mleft(\minput{}\mright)}{\int_{x' \in \mstratum{}_{i}} \mdistribution{}\mleft(x'\mright) {\rm d}x'} & x \in s_i \\
    0 & \text{otherwise}
\end{cases}
\end{equation*}

We next define a \emph{stratified learning algorithm}.
A stratified learning algorithm learns a stratified surrogate of a stratified function by learning each component surrogate independently (given their respective dataset budgets).
We use the following notation to denote the operation of a stratified learning algorithm, where $\vec{n}$ is a vector of sample budgets for each stratum: \[
  \msurrogate{} \sim \mlearner{}\mleft(\mfunction{}, \mdistribution{}, \vec{n}, \mloss{}\mright) \triangleq \mleft\{\msurrogate{}_i \sim \mlearner{}\mleft(\mfunction{}_i, \mdistribution{}\mleft(\minput \middle| \mstratum_i \mright), \vec{\mnat{}}_i, \mloss{}\mright) \mright\}
\]

We formalize stratified surrogate sample allocation with the following optimization problem:
\begin{equation}
  \label{eq:distributionsampling-expectation}
  \begin{split}
    \argmin_{\vec{n}} %
    \expectation{}_{\msurrogate{} \sim \mlearner{}\mleft(\mfunction{}, \mdistribution{}, \vec{\mnat{}}, \mloss{}\mright)}
      \mleft[
      \expectation_{\minput{} \sim \mdistribution{}} \mleft[
      \mloss\mleft(\msurrogate\mleft(\minput\mright), \mfunction\mleft(\minput\mright)\mright)
      \mright]
      \mright]
    \text{~~such that~~}
      \sum_i \vec{\mnat{}}_i \leq n
  \end{split}
  \raisetag{24pt}
\end{equation}
The objective of this problem is to find a vector of per-stratum sample budgets $\vec{n}$ that
in the expectation over the outcomes of the stratified surrogate learning algorithm (the outer expectation)
minimize the expected loss over the data distribution (the inner expectation),
subject to a constraint that the total number of samples used is no more than $n$.

\subsection{Complexity-Guided Stratified Surrogate Dataset Selection}
\label{sec:optimal-sampling}
In this section, our goal is to solve \Cref{eq:distributionsampling-expectation}.
To solve this optimization problem, we need to model the relationship between the sample budget afforded to the learning algorithm for each stratum and the error of the resulting surrogate.
We leverage the PAC learning framework for neural networks to derive a conservative probabilistic upper bound on the error of the surrogate.
We then solve this optimization problem with our derived upper bound in place of our original objective.

\subsubsection{PAC Learning}
To reason about the error of a surrogate, we use the \emph{probably approximately correct} (PAC) learning framework~\citep{valiant_theory_1984}.
The PAC learning framework bounds the number of examples needed to learn a surrogate as a function of the allowable error threshold for the surrogate.

\Cref{eq:learnability} defines a given function $\mfunction{}$ as \emph{probably approximately correctly learnable}~\citep{valiant_theory_1984} (abbreviated as learnable) for a given learning algorithm $\mlearner{}$ and loss function $\mloss{}$ if for all distributions~$D$, with probability $1 - \mprob{}$ the learning algorithm returns a surrogate $\msurrogate{}$ that approximately matches the original function $\mfunction{}$ over the distribution $\mdistribution{}$ (i.e., the expectation of the error is bounded by $\epsilon$):
 \begin{equation}
   \label{eq:learnability}
   \begin{split}
     \forall \mdistribution{}, \merror \in (0, 1), \mprob \in (0, 1).\, \exists n.
     \probability_{\msurrogate{} \sim \mlearner{}\mleft(\mfunction{}, \mdistribution{}, \mnat{}, \mloss{}\mright)}\mleft(\expectation_{\minput \sim \mdistribution} \mleft[\mloss{}\mleft(\msurrogate{}\mleft(\minput{}\mright), \mfunction{}\mleft(\minput{}\mright)\mright)\mright] \leq \merror{} \mright) \geq 1 - \mprob
   \end{split}
 \end{equation}

 \subsubsection{Neural Network Sample Complexity Measures}
 \label{sec:nn-sample-compl}
It is an open problem to determine the exact relationship between the number of samples $n$ and the target error threshold $\epsilon$ in the PAC bound (\Cref{eq:learnability}) for neural networks on arbitrary target functions $f$.
Rather than use the exact relationship, we use an upper bound on $\epsilon$ as a function of $n$, and minimize the induced upper bound.

\citet{arora_fine-grained_2019} and \citet{agarwala_monolithic_2021} present such an upper bound for learning analytic functions $f$ with neural networks.
\citeauthor{agarwala_monolithic_2021} define a \emph{sample complexity measure} $\mcomplexity{\mfunction} \in \mathbb{R}_{\geq 0}$, where higher values denote functions that require more samples $n$ to learn $\mfunction{}$ to a given error $\epsilon$.
With this sample complexity measure $\mcomplexity{f}$,
\Cref{eq:learnability} holds
for all analytic $f$, $n$, $D$, $\epsilon$, and $\delta$
with:
\begin{equation}
  \label{eq:sample-complexity}
 \exists K.\, \epsilon \leq K \sqrt{\frac{\mcomplexity{\mfunction} + \log \mleft(\mprob^{-1}\mright)}{\mnat}}
\end{equation}
where $K$ is an unknown constant.
\citeauthor{agarwala_monolithic_2021} define $\mcomplexity{f}$
using the \emph{tilde}~$\tilde{f}$ of $f$, defined as follows for univariate functions:
\begin{equation}
  \label{eq:tilde-definition-univar}
  f(x) = \sum_{k=0}^\infty a_k x^k
  \hspace*{4em}
  \tilde{f}(x) \triangleq \sum_{k=0}^\infty |a_k| x^k
\end{equation}
The tilde measures the magnitude of each coefficient of $f$'s analytic representation; this is a measure of the influence of hard-to-model higher-order terms.
We work with the following generalization of the tilde to multivariate analytic functions, where $\vec{x}\Vert1$ denotes concatenating a $1$ to $\vec{x}$:
\begin{equation}
  \label{eq:tilde-definition-multivar}
  f\mleft(\vec{x}\mright) = \sum_{k=0}^\infty \sum_{v \in V_k} a_v \prod_{i=1}^k \mleft(\beta_{v, i} \cdot \vec{x}\Vert1\mright)
  \hspace*{3.8em}
  \tilde{f}\mleft(x\mright) = \sum_{k=0}^\infty \mleft(\sum_{v \in V_k} \mleft|a_v\mright| \prod_{i=1}^k \mleft\Vert \beta_{v, i} \mright\Vert_2\mright)x^k
\end{equation}
\citeauthor{agarwala_monolithic_2021} present the multivariate generalization; we contribute the novel generalization to $\vec{x}\Vert1$, which allows us to handle functions that are not analytic around 0 such as $\log$.

With the definition of the tilde, we now present \citeauthor{agarwala_monolithic_2021}'s core theorem, which says that the tilde induces a sample complexity measure for analytic functions:
\begin{restatable}{theorem}{complexity}
  \label{thm:complexity-definition}
For a sufficiently wide (see \citet[Theorem~5.1]{arora_fine-grained_2019}) 2-layer neural network trained with gradient descent for sufficient steps (ibid.),
if $f$ is analytic,
$\vec{x}$ is on the $d$-dimensional unit sphere, %
and $\mloss$ is 1-Lipschitz,
then $f\mleft(\vec{x}\mright)$
is learnable in the sense of \Cref{eq:learnability,eq:sample-complexity} with:
\[
  \mcomplexity{f} = \tilde{f}'\mleft(1\mright)^2
\]
\end{restatable}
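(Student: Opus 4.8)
The plan is to obtain \Cref{thm:complexity-definition} as a specialization of the neural tangent kernel (NTK) generalization theory of \citet{arora_fine-grained_2019}, combined with the analytic-function complexity estimates of \citet{agarwala_monolithic_2021}. Concretely, I would proceed in three stages: (i) instantiate the fine-grained generalization bound of \citet[Theorem~5.1]{arora_fine-grained_2019} to recover a bound of the shape of \Cref{eq:sample-complexity}; (ii) control the data-dependent kernel complexity appearing in that bound by a data-independent quantity that depends only on the analytic structure of $\mfunction{}$; and (iii) verify that this quantity specializes to $\tilde{f}'(1)^2$, handling the $\vec{x}\Vert 1$ augmentation introduced in \Cref{eq:tilde-definition-multivar}.

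First, I would invoke \citet[Theorem~5.1]{arora_fine-grained_2019}, which states that for a 2-layer ReLU network of sufficient width trained by gradient descent for sufficiently many steps, with probability at least $1 - \mprob{}$ over the random initialization the population loss is bounded, up to lower-order terms, by $\sqrt{2\,\vec{y}^\top (\mathbf{H}^\infty)^{-1}\vec{y}/\mnat{}}$, where $\vec{y}_i = \mfunction{}\mleft(\minput{}_i\mright)$ on an i.i.d.\ sample of size $\mnat{}$ from $\mdistribution{}$ and $\mathbf{H}^\infty$ is the infinite-width NTK Gram matrix. The $1$-Lipschitz hypothesis on $\mloss{}$ lets me pass from the network's output error to the loss $\mloss{}\mleft(\msurrogate{}\mleft(\minput{}\mright), \mfunction{}\mleft(\minput{}\mright)\mright)$, the $\log\mleft(\mprob^{-1}\mright)$ term in \Cref{eq:sample-complexity} absorbs the failure probability of the generalization event, and the unknown constant $K$ absorbs the factor of $2$ together with the lower-order terms.

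The principal obstacle is stage (ii): bounding the data-dependent quantity $\vec{y}^\top (\mathbf{H}^\infty)^{-1}\vec{y}$ by something depending only on $\mfunction{}$, namely (in the large-$\mnat{}$ limit) the squared RKHS norm $\|\mfunction{}\|_{\mathcal{H}}^2$ of the kernel $\mathcal{H}$. Because $\vec{x}$ lies on the unit sphere, the NTK is a dot-product kernel and diagonalizes in the spherical-harmonic basis, so $\|\mfunction{}\|_{\mathcal{H}}^2 = \sum_k \|\mfunction{}_k\|_2^2/\mu_k$, with $\mu_k$ the degree-$k$ eigenvalue and $\mfunction{}_k$ the degree-$k$ harmonic component. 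I would then expand the analytic $\mfunction{}$ in its power series, use the triangle inequality in $\mathcal{H}$ to bound $\|\mfunction{}\|_{\mathcal{H}}$ by a sum of per-monomial RKHS norms, and insert the known eigenvalue asymptotics of the 2-layer ReLU NTK. The delicate accounting is to show that the per-degree contribution is weighted linearly in $k$, so that the coefficient magnitudes $|a_k|$ enter exactly as $\sum_k k|a_k| = \tilde{f}'(1)$, with the square arising from the RKHS-norm structure; this reproduces $\mcomplexity{f} = \tilde{f}'(1)^2$ and is the quantitative content of \citeauthor{agarwala_monolithic_2021}'s estimate.

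Finally, for stage (iii) I would check that the $\vec{x}\Vert 1$ generalization of the tilde is compatible with the spherical decomposition: appending a constant coordinate keeps the augmented inputs on a rescaled sphere, so the dot-product-kernel argument still applies, and the per-factor norms $\|\beta_{v,i}\|_2$ appearing in \Cref{eq:tilde-definition-multivar} are precisely the quantities that arise when bounding each monomial's RKHS norm. Collecting these bounds and substituting $\mcomplexity{f} = \tilde{f}'(1)^2$ into the generalization bound yields \Cref{eq:sample-complexity}, establishing learnability in the sense of \Cref{eq:learnability}. I expect the linear-in-$k$ weighting and the clean emergence of the derivative $\tilde{f}'(1)$ to be where all the technical care is concentrated, while the remaining reductions are routine applications of prior NTK results.
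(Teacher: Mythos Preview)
Your approach is correct and traces the same underlying route as the paper (Arora's NTK generalization bound followed by Agarwala's analytic-function complexity estimate), but the paper's proof is far terser: it simply defers to \citet[Theorem~8]{agarwala_monolithic_2021} and records only two deviations. First, the paper observes that Equation~15 in \citeauthor{agarwala_monolithic_2021} contains a typo whose correction shows the $|a_0|$ term (equivalently, the $\tilde{g}(0)$ contribution in their Equation~14) is unnecessary, so the complexity is exactly $\tilde{f}'(1)^2$ rather than something involving the constant term; your proposal asserts the linear-in-$k$ weighting but does not isolate this point. Second, for the $\vec{x}\Vert 1$ extension the paper simply notes that \citeauthor{agarwala_monolithic_2021}'s Corollary~3 already appends a $1$ to the network input, so no new argument is needed; your rescaled-sphere justification is reasonable but more work than the paper actually does. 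In short, your stages (i)--(iii) reconstruct \citeauthor{agarwala_monolithic_2021}'s argument from scratch, whereas the paper takes that argument as a black box and just patches it in two places.
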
 \noindent
We present the proof of this theorem in \Cref{app:complexity-algebra}.
The proof is a novel extension to inputs $\vec{x}\Vert1$ of \citet{agarwala_monolithic_2021}'s proof.

\subsubsection{Complexity-Guided Stratified Surrogate Dataset Selection}
Coming back to the stratified surrogate sample allocation problem (\Cref{eq:distributionsampling-expectation}), our goal is to find per-stratum sample budgets $\vec{n}$ that minimize the expectation of error of the stratified surrogate.
To help solve this optimization problem, we can refactor \Cref{eq:distributionsampling-expectation} to separate out each stratum as follows:
\begin{equation}
  \label{eq:distributionsampling-expectation-split}
  \begin{split}
    \argmin_{\vec{n}}
    \expectation_{s_i \sim \mdiststratum{i}} \mleft[
    \expectation_{\msurrogate{}_i \sim \mlearner{}\mleft(\mfunction{}_i, \mdistribution{}\mleft(\minput \middle| \mstratum_i \mright), \vec{\mnat{}}_i, \mloss{}\mright)} \mleft[
    \expectation_{x \sim \mdistribution{}\mleft(x \middle| s_i\mright)} \mleft[
    \mloss\mleft(\msurrogate_i\mleft(\minput\mright), \mfunction_i\mleft(\minput\mright)\mright)
    \mright]
    \mright]
    \mright]
    \text{~~such that~~}
    \sum_i \vec{\mnat{}}_i \leq n
  \end{split}
\end{equation}
This refactoring exploits that a stratified learning algorithm learns the surrogate for each stratum independently:\footnote{Specifically, we decompose the innermost expectation in \Cref{eq:distributionsampling-expectation} over strata using the law of total expectation, move the expectation over strata to the outside using that expectation is linear, then rewrite the expectation over the stratified learning algorithm to be the expectation over the single stratum under consideration, using that the stratified learning algorithm learns the surrogate for each stratum independently.}
we can decompose the expected loss of the learning algorithm into the expectation over strata (the outermost expectation in \Cref{eq:distributionsampling-expectation-split})
of the expectation over the outcomes of the surrogate learning algorithm on that stratum (the middle expectation)
of the expected loss of the expected loss over the data distribution on that stratum (the inner expectation).

\subsubsection{Predicted Error of a Surrogate}
Instead of optimizing \Cref{eq:distributionsampling-expectation-split} directly, our approach is to optimize the conservative probabilistic upper bound $\epsilon_i$ given by the PAC framework for each surrogate.

We define the \emph{predicted error} $\hat{\epsilon}_{f_i,n_i,\delta_i}$ of a stratified surrogate component to be the upper bound (with probability $1-\delta_i$) of the error of the surrogate $\hat{f}_i$ against the function $f_i$.
Concretely, the predicted error is the error for a given $n_i$ and $\delta_i$ assuming that \Cref{eq:sample-complexity} is tight with $K=1$ (the value of $K$ cancels out in our analysis, so this choice is just for notational convenience):
\begin{equation}
  \label{eq:predicted-error}
  \hat{\epsilon}_{f_i,n_i,\delta_i} \triangleq \sqrt{\frac{\zeta\mleft(f_i\mright) + \log\mleft(\delta_i^{-1}\mright)}{n_i}}
\end{equation}

We then replace the expectation of error in \Cref{eq:distributionsampling-expectation-split} in each stratum with the predicted error of that stratum, resulting in the objective that our approach optimizes:
\begin{equation}
  \label{eq:distributionsampling-predicted}
  \begin{split}
    \argmin_{\vec{n}}
    \expectation_{s_i \sim \mdiststratum{i}} \mleft[
    \hat{\epsilon}_{f_i,\vec{n}_i,\delta_i}
    \mright]
    \text{~~such that~~}
    \sum_i \vec{\mnat{}}_i \leq n
  \end{split}
\end{equation}
The objective of this problem is to find a vector of per-stratum sample budgets $\vec{n}$ that
in the expectation over strata (the outer expectation)
minimize the predicted error of the surrogate for that stratum,
subject to a constraint that the total number of samples used is no more than $n$.

Finally we can solve this optimization problem.
For a given stratified function $f$, sample budget $n$, and per-stratum failure~probabilities~$\delta_i$:
\begin{restatable}{theorem}{distributionsampling}
  \label{th:distributionsampling}
  \Cref{eq:distributionsampling-predicted} is minimized at:
	\begin{gather}
\vec{n}_i = n\frac{
	\mleft(\mleft(\mdiststratum{i}\mright) \sqrt{\mcomplexity{\mfunction_i} + \log\mleft(\delta_i^{-1}\mright)}\mright)^{\frac{2}{3}}
}{
  \sum_{j=1}^{\mnstrata} \mleft(\mleft(\mdiststratum{j}\mright) \sqrt{\mcomplexity{\mfunction_j} + \log\mleft(\delta_j^{-1}\mright)}\mright)^{\frac{2}{3}}
      }
	\end{gather}
\end{restatable}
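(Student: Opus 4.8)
The plan is to treat \Cref{eq:distributionsampling-predicted} as a convex constrained optimization problem and solve it with a single Lagrange multiplier. First I would unfold the expectation over strata into an explicit weighted sum. Since $\expectation_{s_i \sim \mdiststratum{i}}$ weights each stratum $i$ by its probability mass $\mdiststratum{i}$, and since \Cref{eq:predicted-error} gives $\hat{\epsilon}_{f_i,\vec{n}_i,\delta_i} = \sqrt{(\mcomplexity{\mfunction_i} + \log(\delta_i^{-1}))/\vec{n}_i}$, the objective becomes
\[
  g\mleft(\vec{n}\mright) = \sum_{i=1}^{\mnstrata} \mleft(\mdiststratum{i}\mright)\sqrt{\frac{\mcomplexity{\mfunction_i} + \log\mleft(\delta_i^{-1}\mright)}{\vec{n}_i}}.
\]
Abbreviating $a_i \triangleq \mleft(\mdiststratum{i}\mright)\sqrt{\mcomplexity{\mfunction_i} + \log(\delta_i^{-1})} \geq 0$, this reads $g(\vec{n}) = \sum_i a_i \vec{n}_i^{-1/2}$, to be minimized over $\vec{n}_i > 0$ subject to $\sum_i \vec{n}_i \leq n$.

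Next I would observe that the budget constraint is active at any minimizer: $g$ is strictly decreasing in each coordinate $\vec{n}_i$, so if $\sum_i \vec{n}_i < n$ one could increase some component and strictly decrease $g$. Hence I may replace the inequality by the equality $\sum_i \vec{n}_i = n$. Forming the Lagrangian $L(\vec{n}, \lambda) = \sum_i a_i \vec{n}_i^{-1/2} + \lambda(\sum_i \vec{n}_i - n)$ and setting $\partial L / \partial \vec{n}_i = -\tfrac{1}{2} a_i \vec{n}_i^{-3/2} + \lambda = 0$ yields $\vec{n}_i^{3/2} = a_i/(2\lambda)$, i.e.\ $\vec{n}_i \propto a_i^{2/3}$. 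Substituting this proportionality back into $\sum_i \vec{n}_i = n$ fixes the constant and gives
\[
  \vec{n}_i = n\,\frac{a_i^{2/3}}{\sum_{j=1}^{\mnstrata} a_j^{2/3}}
  = n\,\frac{\mleft(\mleft(\mdiststratum{i}\mright)\sqrt{\mcomplexity{\mfunction_i} + \log(\delta_i^{-1})}\mright)^{2/3}}{\sum_{j=1}^{\mnstrata} \mleft(\mleft(\mdiststratum{j}\mright)\sqrt{\mcomplexity{\mfunction_j} + \log(\delta_j^{-1})}\mright)^{2/3}},
\]
which is exactly the claimed formula.

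To confirm this critical point is the global minimizer rather than a mere stationary point, I would note that each map $\vec{n}_i \mapsto \vec{n}_i^{-1/2}$ is strictly convex on $(0, \infty)$ (its second derivative $\tfrac{3}{4}\vec{n}_i^{-5/2}$ is positive), so $g$ is convex, the feasible set is convex, and the KKT point is therefore the unique global minimum. The algebra itself is routine; the step I expect to require the most care is precisely this pair of qualitative arguments — justifying that the constraint is active and upgrading the first-order condition to global optimality via convexity — together with the boundary behavior as $\vec{n}_i \to 0$.

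As a clean alternative that avoids the Lagrangian and makes the equality case transparent, I would invoke H\"older's inequality with exponents $3/2$ and $3$ applied to the identity $a_i^{2/3} = \mleft(a_i \vec{n}_i^{-1/2}\mright)^{2/3}\mleft(\vec{n}_i\mright)^{1/3}$, giving $\sum_i a_i^{2/3} \leq \mleft(\sum_i a_i \vec{n}_i^{-1/2}\mright)^{2/3}\mleft(\sum_i \vec{n}_i\mright)^{1/3}$. With $\sum_i \vec{n}_i = n$ this rearranges to the lower bound $g(\vec{n}) \geq (\sum_i a_i^{2/3})^{3/2}/\sqrt{n}$, and H\"older's equality condition forces $\vec{n}_i \propto a_i^{2/3}$, reproducing the same allocation. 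A remaining edge case to dispatch is a stratum with $\mdiststratum{i} = 0$, where $a_i = 0$ and the formula correctly assigns $\vec{n}_i = 0$.
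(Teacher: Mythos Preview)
Your proposal is correct. Your primary Lagrangian argument is essentially the paper's approach: the paper writes out the full KKT system (with separate multipliers $\mu_n$ for the budget and $\mu_i$ for the nonnegativity constraints $n_i \geq 0$), plugs in the candidate $n_i = n\,p_i^{2/3}/\sum_j p_j^{2/3}$ with $\mu_i = 0$, verifies stationarity, feasibility, and complementary slackness, and then invokes convexity for global optimality. You streamline this by first arguing monotonicity forces the budget constraint to be active and then using a single multiplier, but the mechanics and the convexity appeal are the same.

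Your H\"older alternative is a genuinely different route that the paper does not take. It has the advantage of giving the sharp lower bound $g(\vec n) \geq \bigl(\sum_i a_i^{2/3}\bigr)^{3/2}/\sqrt{n}$ directly and reading off the minimizer from the equality case, bypassing both the Lagrangian and the separate convexity check. The paper's KKT presentation, on the other hand, makes the handling of the inequality constraint $\sum_i n_i \leq n$ and the implicit constraints $n_i \geq 0$ more explicit, which is arguably cleaner if one worries about boundary cases (your treatment of $a_i = 0$ covers the only interesting one).
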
 \noindent
\Cref{th:distributionsampling} defines how much data our complexity-guided sampling approach samples from each stratum.
Specifically, data is sampled from each stratum proportionally to: \[
  \mleft(\mleft(\mdiststratum{i}\mright) \sqrt{\mcomplexity{\mfunction_i} + \log\mleft(\delta_i^{-1}\mright)}\mright)^{\frac{2}{3}}
\]
This term incorporates the frequency of that stratum ($\mdiststratum{i}$), the complexity of that stratum ($\mcomplexity{\mfunction_i}$), and a term from the failure probability $\delta_i$.
We present the proof of \Cref{th:distributionsampling} in \Cref{app:formalism}.

For convenience, throughout the rest of this paper we assume that all $\delta_i$ are set to be equal ($\forall i, j.\, \delta_i=\delta_j$).
Because each surrogate training is independent, this induces an overall PAC failure probability $\delta=1 - \prod_i \mleft(1 - \delta_i\mright)$.

\subsubsection{Tightness of the Predicted Error Optimization}
We note that the optimal solution to \Cref{eq:distributionsampling-predicted} is not necessarily the optimal solution to \Cref{eq:distributionsampling-expectation}.

First, optimizing the predicted error is not the same as optimizing the expectation of error: specifically, there is a gap between the optimal solution to \Cref{eq:distributionsampling-predicted} and the optimal solution to \Cref{eq:distributionsampling-expectation}.
We note that assuming that the per-example loss is bounded by some value $L$, the expectation of error found by the optimal $\vec{n}$ for \Cref{eq:distributionsampling-predicted} is bounded by:
\begin{equation}
  \expectation_{s_i \sim \mdiststratum{i}} \mleft[\mleft(1 - \delta_i\mright)\sqrt{\frac{\mcomplexity{\mfunction_i} + \log \mleft(\mprob_i^{-1}\mright)}{\vec{\mnat}_i}} + \delta_i L \mright]
\end{equation}

Second, the bound on the predicted error itself may be loose.
We note that while the predicted error itself may be a loose bound on the error, our approach does not require exact values from these bounds, but instead compares the predicted error of each different component of the stratified function to minimize the overall predicted error.

\section{\lang{}: Programs as Stratified Functions}
\label{sec:language}

In this section we present \lang{}, a programming language in which programs denote learnable stratified functions.
We provide a program analysis for \lang{} programs that calculates an upper bound on the complexity of each component of the stratified function that the program denotes.

\subsection{Syntax and Standard Interpretation}
\label{sec:syntax}
\begin{wrapfigure}[11]{r}{0.4\textwidth}
  \centering
  \vspace*{-2em}
  \begin{minipage}{0.4\textwidth}
  \begin{align*}
    p \bnf& \texttt{{\color{blue}fun} (}x, \dots, x\texttt{) \{}s\,\texttt{;}\, \texttt{{\color{blue}return} }x \} \\
    s \bnf&
            \tkeyword{skip}
            \bor s \,\texttt{;}\,  s
            \bor x \;\texttt{=}\; e
            \\&
            \bor \texttt{{\color{blue}if} (}e \texttt{ > 0) \{} s \texttt{\} {\color{blue}else} \{} s \texttt{\} }
    \\
    e \bnf& x \bor v \bor e + e \bor e * e \bor {-e} \bor \tkeyword{sin}\texttt{(}e\texttt{)} \\
          &\bor \tkeyword{exp}\texttt{(}e\texttt{)} \bor \tkeyword{log}\texttt{\{}v\texttt{\}(}e\texttt{)}
    \\
    x \bnf& \text{set of variable names} \\
    v \bnf& \text{set of floating-point values}
  \end{align*}
  \end{minipage}

  \caption{Syntax of \lang{}.}
  \label{fig:lang-syntax}
  \end{wrapfigure}
\Cref{fig:lang-syntax} presents the syntax of \lang{}, a loop-free language similar to IMP~\citep{winksel_formal_1993}.
A \lang{} program~$p$ takes a list of inputs~$x$, executes a top-level statement~$s$, and returns a single variable~$x$.
Statements~$s$ are skips, sequences, assignments, or if statements.
Expressions~$e$ are variables~$x$, floating-point values~$v$, binary operations, or unary operations.

\lang{} supports analytic functions (e.g., $\sin$, $\exp$), including those which are analytic only on a subset of the reals (e.g., $\log$).
We restrict the supported operations to those required to implement the evaluation in \Cref{sec:evaluation}.
\makeatletter
\let\markeverypar\everypar
\newtoks\everypar
\everypar\markeverypar
\markeverypar{\the\everypar\looseness=-1\relax}
\makeatother

\begin{figure}[h]
  \begin{mathpar}
    \centering
    \inferrule{ }{\langle \sigma , v\rangle \Downarrow v}

    \inferrule{ }{\langle \sigma , x\rangle \Downarrow \sigma\mleft(x\mright)}

    \inferrule{
      \langle \sigma , e_1\rangle \Downarrow v_1 \\
      \langle \sigma , e_2\rangle \Downarrow v_2 \\
    }{\langle \sigma , e_1 \texttt{+}  e_2\rangle \Downarrow v_1 + v_2}

    \inferrule{
      \langle \sigma , e_1\rangle \Downarrow v_1 \\
      \langle \sigma , e_2\rangle \Downarrow v_2 \\
    }{\langle \sigma , e_1 \texttt{*} e_2\rangle \Downarrow v_1 \cdot v_2}

    \inferrule{
      \langle \sigma , e\rangle \Downarrow v \\
    }{\langle \sigma , {-e} \rangle \Downarrow -v}

    \inferrule{
      \langle \sigma , e\rangle \Downarrow v \\
    }{\langle \sigma , \texttt{{\color{blue}sin}(}e\texttt{)}\rangle \Downarrow \sin\mleft(v\mright)}

    \inferrule{
      \langle \sigma , e\rangle \Downarrow v \\
    }{\langle \sigma , \texttt{{\color{blue}exp}(}e\texttt{)}\rangle \Downarrow \exp\mleft(v\mright)}

    \inferrule{
      \langle \sigma , e\rangle \Downarrow v \\
      \mleft|b - v\mright| < b
    }{\langle \sigma , \texttt{{\color{blue}log}\{}b\texttt{\}(}e\texttt{)}\rangle \Downarrow \log\mleft(v\mright)}
  \end{mathpar}
  \caption{Big-step evaluation relation for expressions in \lang{}.}
  \label{fig:evaluation-expression-relational-app}

\end{figure}

\paragraph{Standard execution semantics.}
\Cref{fig:evaluation-expression-relational-app} presents the big-step evaluation relation for expressions in \lang{}.
The expression relation $\langle \sigma, e \rangle \Downarrow v$ says that under variable store $\sigma$ (assigning values to all variables in $e$), the expression $e$ evaluates to value $v$.
These semantics are standard to IMP-like languages with the exception of that for $\texttt{{\color{blue}log}\{}b\texttt{\}(}e\texttt{)}$:
note that the expression $\texttt{{\color{blue}log}\{}b\texttt{\}(}e\texttt{)}$ takes an additional parameter $b$ and requires $\mleft|b - v\mright| < b$.
We discuss this requirement in \Cref{sec:analysis} and \Cref{app:log-rule}.

The big-step evaluation relation for statements and full \lang{} programs are standard and are presented in \Cref{app:language-semantics}.

\subsection{Complexity Analysis}
\label{sec:analysis}
We now present a program analysis that gives an upper bound on the complexity of \emph{traces} of \lang{} programs,
sequences of statements without if statements.
The analysis uses two core concepts:
a \emph{complexity interpretation} of expressions to calculate an upper bound on the tilde of expressions (\Cref{sec:nn-sample-compl}), %
and a standard \emph{dual-number execution}~\citep{wengert_simple_1964,griewank_derivatives_2008} of the complexity interpretation to calculate the derivative of the upper bound on the tilde, which as we show below is also an upper bound on the derivative of the tilde.
The result of the dual-number execution allows us to upper bound the complexity of a trace of a \lang{} program.

\subsubsection{Program Analysis}
First we walk through the rules of the program analysis, presented as a big-step evaluation relation.

\begin{figure}[h!]
  \input{tilde-expressions-figure}
\end{figure}

\Cref{fig:tilde-expression-relational-app} presents the relation used to calculate the tilde for expressions in \lang{}.
The relation $\mleft \langle \tilde{\sigma}, e \mright\rangle \cstepsto \mleft(\tilde{v}, \tilde{v}'\mright)$
says that under the variable complexity mapping $\tilde{\sigma}$ (mapping variables to tuples with their respective tilde and tilde derivative),
the expression $e$ has $\tilde{e} \leq \tilde{v}$ and $\tilde{e}' \leq \tilde{v}'$.

Broadly, we define the rules in \Cref{fig:tilde-expression-relational-app} using the definition of the tilde, the fact that the tilde is compositional (as we prove in \Cref{sec:tilde-compositional}) and the definition of a dual-number execution.
For instance, the tilde of a constant $v$ is the absolute value $\mleft|v\mright|$ of that constant with a derivative of 0, and the tilde of $e_1 \texttt{+} e_2$ is the sum of the tilde of each expression with a derivative of the sum of their derivatives.

A slightly more complex rule is that of $\texttt{{\color{blue}sin}(}e\texttt{)}$, which
computes $\sin\mleft(x\mright) = \sum_{n=0}^\infty \frac{(-1)^n}{(2n+1)!}x^{2n+1}$.
Thus, $\widetilde{\sin}\mleft(x\mright) = \sum_{n=0}^\infty \mleft|\frac{(-1)^n}{(2n+1)!}\mright|x^{2n+1} = \sum_{n=0}^\infty \frac{x^{2n+1}}{(2n+1)!} = \sinh\mleft(x\mright)$.
The derivative is then $\sinh'\mleft(x\mright) = x'\cosh\mleft(x\mright)$.
Because we know $\mleft \langle \tilde{\sigma}, e \mright\rangle \cstepsto \mleft(\tilde{v}, \tilde{v}'\mright)$ and because the tilde is compositional (\Cref{lem:tilde-algebra}), we can plug in $\tilde{v}$ and $\tilde{v}'$ to get $\mleft(\sinh\mleft(\tilde{v}\mright), \tilde{v}'\cosh\mleft(\tilde{v}\mright)\mright)$.

The most complex rule is the rule for $\texttt{{\color{blue}log}\{}b\texttt{\}(}e\texttt{)}$.
To handle that $\log(x)$ is not analytic around $0$, $\texttt{{\color{blue}log}\{}b\texttt{\}(}e\texttt{)}$ expands $\log\mleft(x\mright)$ around $x=b$.
The value of $b$ is a nuisance parameter that must be set to allow the expansion around $x=b$ to converge for all inputs (inducing the $\mleft|b - v\mright| < b$ requirement in \Cref{fig:evaluation-expression-relational-app}) while minimizing the overall program complexity.
Note that the condition $b > \tilde{v}\sqrt{b^2+1}$ can always be satisfied by applying the identity $\log\mleft(x\mright) = \log\mleft(\frac{x}{c}\mright) + \log\mleft(c\mright)$.

\begin{figure}[h]
  \input{tilde-statements-figure}
\end{figure}

\Cref{fig:tilde-statement-relational-app} presents the relation for calculating the tilde of all variables computed by traces (branch-free statements) in \lang{}.
The relation $\mleft \langle \tilde{\sigma}, t \mright \rangle \cstepsto \tilde{\sigma}'$
says that under the variable complexity mapping $\tilde{\sigma}$,
executing the trace $t$ computes variables with tildes and tilde derivatives upper-bounded by those of $\tilde{\sigma}'$.

\begin{figure}[h]
  \input{tilde-traces-figure}
\end{figure}

\Cref{fig:complexity-trace-relational-app} presents the complexity relation for traces in \lang{}.
The relation $\zeta_{\cstepsto}\mleft(t, x\mright) \leq z$ says that the trace $t$ has complexity upper bounded by $z$ for computing variable $x$ (under the assumptions in \citet{agarwala_monolithic_2021}).

\begin{figure}[h]
  \input{traces-collection-figure}
\end{figure}

\Cref{fig:collect-statement-relational-app} presents the trace collection relation for \lang{} statements.
The relation $\mleft\langle\tau, s\mright\rangle \leadsto \tau'$ says that under the trace mapping $\tau$ (mapping paths that reach this statement to the trace of statements executed thus far), executing the statement $s$ can result in
possible paths and corresponding traces $\tau'$.

\begin{figure}[h]
  \input{program-collection-figure}
\end{figure}

\Cref{fig:collect-program-relational-app} presents the trace collection relation for \lang{} programs.
The trace collection relation $\mleft\langle\texttt{{\color{blue}fun} (}x_0, x_1 \dots, x_n\texttt{) \{}s \,\texttt{;}\, \texttt{{\color{blue}return} } x \texttt{\}}\mright\rangle \leadsto \tau$ says that executing the program can result in possible paths and corresponding traces $\tau$.

\subsubsection{Tilde Calculus}
\label{sec:tilde-compositional}
This section presents the core lemma stating that the upper bound on the tilde is compositional, and that the derivative of the upper bound is an upper bound on the derivative.
The bounds on the tilde are from \citet{agarwala_monolithic_2021}; we extend these bounds to also bound the derivative of the tilde.
\begin{restatable}{lemma}{tildealgebra}
  \label{lem:tilde-algebra}
  The tilde and its derivative have upper bounds that are compositional with respect to $f$:
  \begin{align*}
    f\mleft(\vec{x}\mright) = g\mleft(\vec{x}\mright) + h\mleft(\vec{x}\mright) &\Rightarrow \forall x \geq 0.\, \tilde{f}\mleft(x\mright) \leq \tilde{g}\mleft(x\mright) + \tilde{h}\mleft(x\mright) \land \tilde{f}'\mleft(x\mright) \leq \tilde{g}'\mleft(x\mright) + \tilde{h}'\mleft(x\mright) \\
    f\mleft(\vec{x}\mright) = g\mleft(\vec{x}\mright) \cdot h\mleft(\vec{x}\mright) &\Rightarrow \forall x \geq 0.\, \tilde{f}\mleft(x\mright) \leq \tilde{g}\mleft(x\mright) \cdot \tilde{h}\mleft(x\mright) \land \tilde{f}'\mleft(x\mright) \leq \tilde{g}'\mleft(x\mright) \tilde{h}\mleft(x\mright) + \tilde{g}\mleft(x\mright) \tilde{h}'\mleft(x\mright) \\
    f\mleft(\vec{x}\mright) = g\mleft(h\mleft(\vec{x}\mright)\mright) &\Rightarrow \forall x \geq 0.\, \tilde{f}\mleft(x\mright) \leq \tilde{g}\mleft(\tilde{h}\mleft(x\mright)\mright) \land \tilde{f}'\mleft(x\mright) \leq \tilde{g}'\mleft(\tilde{h}\mleft(x\mright)\mright) \cdot \tilde{h}'\mleft(x\mright) \\
    &\hspace*{8em} \text{(when $\tilde{h}\mleft(x\mright)$ is in the radius of convergence of $g$)}
  \end{align*}
\end{restatable}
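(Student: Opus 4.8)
The plan is to prove all three claims through a single, stronger notion of \emph{coefficient-wise domination}, which makes the new derivative bounds essentially free once the (known) tilde bounds are in hand. Write every tilde as an ordinary univariate power series $\tilde{f}(x) = \sum_{k=0}^\infty A_k x^k$ with $A_k \geq 0$, where $A_k$ aggregates, as in \Cref{eq:tilde-definition-multivar}, all degree-$k$ terms weighted by $|a_v|\prod_i \|\beta_{v,i}\|_2$. For two such nonnegative-coefficient series I would write $P \preceq Q$ to mean $[P]_k \leq [Q]_k$ for every $k$. The key meta-step is: if $\tilde{f} \preceq R$ with $R$ having nonnegative coefficients and convergent at $x$, then since $x \geq 0$ makes every $x^k$ and $x^{k-1}$ nonnegative, term-by-term differentiation gives both $\tilde{f}(x) \leq R(x)$ and $\tilde{f}'(x) = \sum_k k A_k x^{k-1} \leq \sum_k k [R]_k x^{k-1} = R'(x)$ at once. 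Thus it suffices to establish coefficient-wise domination of $\tilde{f}$ by $\tilde{g}+\tilde{h}$, by $\tilde{g}\cdot\tilde{h}$, and by $\tilde{g}\circ\tilde{h}$ respectively; the stated derivative bounds then follow immediately because $(\tilde{g}\tilde{h})' = \tilde{g}'\tilde{h}+\tilde{g}\tilde{h}'$ and $(\tilde{g}\circ\tilde{h})' = \tilde{g}'(\tilde{h})\,\tilde{h}'$ by the ordinary product and chain rules.

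For addition, the degree-$k$ coefficient of $f=g+h$ is the sum of those of $g$ and $h$, so after collecting terms the triangle inequality yields $[\tilde{f}]_k \leq [\tilde{g}]_k + [\tilde{h}]_k$, i.e. $\tilde{f} \preceq \tilde{g}+\tilde{h}$. For multiplication, grouping the expansion of $f=g\cdot h$ by total degree produces a Cauchy product: each degree-$n$ contribution of $f$ is a sum of products of a degree-$j$ term of $g$ with a degree-$(n-j)$ term of $h$. Taking absolute values and combining the $\ell_2$-norm weights of the concatenated linear forms multiplicatively bounds the aggregated coefficient by $\sum_{j} [\tilde{g}]_j [\tilde{h}]_{n-j} = [\tilde{g}\tilde{h}]_n$, giving $\tilde{f} \preceq \tilde{g}\tilde{h}$. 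These two coefficient-wise facts are exactly the tilde inequalities of \citet{agarwala_monolithic_2021}; the meta-step above is precisely what upgrades them to the derivative bounds we contribute.

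The composition case is where the real work lies, and is the step I expect to be the main obstacle. Writing $g(y) = \sum_{m=0}^\infty \gamma_m y^m$, I would first iterate the multiplication domination to obtain $\widetilde{h^m} \preceq \tilde{h}^{\,m}$ for each $m$, then sum over $m$ to get $\tilde{f} \preceq \sum_m |\gamma_m|\,\widetilde{h^m} \preceq \sum_m |\gamma_m|\,\tilde{h}^{\,m} = \tilde{g}(\tilde{h})$. The delicate points are the interchange of the infinite outer summation with the tilde operation and, for the derivative, term-by-term differentiation of $\sum_m |\gamma_m|\,\tilde{h}(x)^m$. Both are justified exactly by the stated side condition that $\tilde{h}(x)$ lie in the radius of convergence of $g$: since $\tilde{g}$ and $g$ share the same radius of convergence (their coefficients differ only by absolute values), the series $\sum_m |\gamma_m|\,\tilde{h}(x)^m$ converges there, which both dominates $\tilde{f}(x)$ and, by uniform convergence on compact subsets of the disk of convergence, legitimizes differentiating term by term to yield $\tilde{f}'(x) \leq \tilde{g}'(\tilde{h}(x))\,\tilde{h}'(x)$.

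The remaining bookkeeping is to confirm that the multivariate degree-aggregation of \Cref{eq:tilde-definition-multivar} behaves identically to the univariate monomial calculus under these three operations. This reduces to the observation that a single linear form $\beta\cdot(\vec{x}\Vert1)$ has tilde $\|\beta\|_2\,x$, so that a degree-$k$ product term contributes $|a_v|\prod_i\|\beta_{v,i}\|_2\,x^k$, and products of such terms compose their coefficients (the $|a_v|$ factors and the $\ell_2$-norm factors) multiplicatively, exactly matching the univariate Cauchy-product structure used above.
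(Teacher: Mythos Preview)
Your proposal is correct and follows essentially the same route as the paper: the paper proves the three cases as separate sublemmas (addition, multiplication, composition), in each case constructing an explicit representation of $f$ from those of $g$ and $h$ (disjoint union of index sets for sums, Cauchy-product pairing for products, iteration of the product construction for composition) and then computing both $\tilde{f}$ and $\tilde{f}'$ directly under that representation. Your coefficient-wise domination $\preceq$ is a slightly cleaner packaging---it lets you get the derivative inequalities from the value inequalities in one stroke via your meta-step, whereas the paper re-derives the derivative by hand in each sublemma---but the underlying constructions, the use of the ordinary product and chain rules, and the composition argument via $\widetilde{h^m}\leq \tilde{h}^{\,m}$ summed against $|\gamma_m|$ are identical.
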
 \noindent
The proof of this lemma is presented in \Cref{app:complexity-algebra}.

\subsection{Soundness}
This section proves that the \lang{} complexity analysis is sound: that it computes an upper bound on the true complexity of learning a trace.
We prove this by induction on expressions and traces.
Our approach is based on the observation that at a given program point the value of each variable was computed by some function $f_x$ applied to the input.
We use the notation $\mleft\{f_x\mright\}$ as shorthand for $\mleft\{f_x \,\middle|\, x \in \sigma\mright\}$, a set of functions indexed by $x \in \sigma$.
Our inductive hypothesis requires that $\tilde{\sigma}$ contain the tilde and tilde derivative of each of these functions (evaluated at $1$, as in \Cref{thm:complexity-definition}).
We use the notation $\tilde{\sigma} \vdash \mleft\{f_x\mright\}$ to denote the predicate that each $f_x$ have tilde and tilde derivative bounded by $\tilde{\sigma}$:
\begin{gather*}
  \tilde{\sigma} \vdash \mleft\{f_x\mright\} \Leftrightarrow \forall x \in \tilde{\sigma}.\, \mleft(\tilde{\sigma}\mleft(x\mright) = \mleft(\tilde{v}, \tilde{v}'\mright) \Rightarrow \mleft(
  0 \leq \tilde{f_x}\mleft(1\mright) \leq \tilde{v} \land 0 \leq \tilde{f_x}'\mleft(1\mright) \leq \tilde{v}'\mright)\mright)
\end{gather*}

We also note that the standard execution semantics big-step relation $\Downarrow$ both for expressions and for traces is a function.
We use $\mleft\llbracket \cdot \mright\rrbracket$ as notation to refer to that function for expressions and $\mleft\llbracket \cdot \mright\rrbracket_x$ to refer to that function for traces followed by taking the value of the variable $x$:
\begin{gather*}
  \mleft\llbracket e \mright\rrbracket\mleft(\sigma\mright) = v \Leftrightarrow \mleft\langle\sigma, e\mright\rangle \Downarrow v \\
  \mleft\llbracket t \mright\rrbracket_x\mleft(\sigma\mright) = v \Leftrightarrow \mleft\langle\sigma, t\mright\rangle \Downarrow \sigma' \land \sigma'\mleft(x\mright) = v
\end{gather*}
We use the notation $\circ$ to denote function composition.
The functions denoted by expressions and traces have multiple inputs;
in this context, composition with a set of functions $\mleft\{f_x\mright\}$ is defined as follows:
\begin{gather*}
  \mleft(\mleft\llbracket \cdot \mright\rrbracket \circ \mleft\{f_x\mright\}\mright)\mleft(\sigma\mright) \triangleq \mleft\llbracket \cdot \mright\rrbracket \mleft(\mleft\{x \mapsto f_x\mleft(\sigma\mright)\mright\}\mright)
\end{gather*}

Now we state the core theorems of correctness for the \lang{} analysis:
\begin{restatable}{lemma}{tildeexpression}
  \label{lem:tilde-expression}
  The tilde big-step relation for expressions upper bounds the tilde and tilde derivative:
  \begin{equation*}
    \mleft( \langle \tilde{\sigma}, e \rangle \;\tilde\Downarrow\; \mleft(\tilde{v}, \tilde{v}'\mright) \land \tilde{\sigma} \vdash \mleft\{f_x\mright\} \mright) \Rightarrow \mleft( \widetilde{\mleft\llbracket e \mright\rrbracket \circ \mleft\{f_x\mright\}}\mleft(1\mright) \leq \tilde{v} \land \widetilde{\mleft\llbracket e \mright\rrbracket \circ \mleft\{f_x\mright\}}'\mleft(1\mright) \leq \tilde{v}' \mright)
  \end{equation*}
\end{restatable}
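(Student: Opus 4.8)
The plan is to prove the statement by structural induction on the expression $e$, using the compositional tilde bounds of \Cref{lem:tilde-algebra} at each step. The inductive hypothesis is exactly the lemma applied to each immediate subexpression of $e$, under the same $\tilde\sigma$ and $\{f_x\}$. Throughout I will lean on two nonnegativity facts. First, any tilde and any tilde derivative evaluated at $1$ is nonnegative, since $\tilde g(x) = \sum_k |a_k| x^k$ and $\tilde g'(x) = \sum_k k|a_k| x^{k-1}$ are sums of nonnegative terms for $x \ge 0$. Second, every subexpression output $\tilde v_1, \tilde v_1'$ of the relation is itself nonnegative, because by the inductive hypothesis it dominates the corresponding tilde (or tilde derivative) at $1$, which is nonnegative. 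These facts are what let me push the bounds through the monotone elementary tilde functions.

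The two base cases are immediate from the definitions. When $e = v$ is a constant, $\llbracket v \rrbracket \circ \{f_x\}$ is the constant function $v$, whose tilde is the constant $|v|$ and whose tilde derivative is $0$, matching the rule's output $(|v|, 0)$. When $e = x$ is a variable, $\llbracket x \rrbracket \circ \{f_x\} = f_x$ and the rule outputs $\tilde\sigma(x)$, so the claim is exactly the hypothesis $\tilde\sigma \vdash \{f_x\}$ instantiated at $x$.

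For the inductive cases I set $g = \llbracket e_1 \rrbracket \circ \{f_x\}$ (and $h = \llbracket e_2 \rrbracket \circ \{f_x\}$ where relevant), so that the composed function for $e$ is $g + h$, $g \cdot h$, $-g$, $\sin \circ g$, $\exp \circ g$, or the shifted $\log \circ g$. In each case I invoke the matching clause of \Cref{lem:tilde-algebra} to bound the tilde and its derivative at $x = 1$ by an expression in $\tilde g(1), \tilde h(1), \tilde g'(1), \tilde h'(1)$, then apply the inductive hypothesis to replace these by $\tilde v_1, \tilde v_2, \tilde v_1', \tilde v_2'$. For $+$ and $-$ this is a direct substitution. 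For $\cdot$ I additionally use the nonnegativity of $\tilde v_1, \tilde v_2$ so that the product-rule bound $\tilde g'(1)\tilde h(1) + \tilde g(1)\tilde h'(1)$ is monotone in each factor, matching the rule's output. For $\sin$ and $\exp$ I use the elementary tildes $\widetilde{\sin} = \sinh$ and $\widetilde{\exp} = \exp$ (as sketched in the text), note that their radii of convergence are infinite so the composition clause applies unconditionally, and invoke monotonicity of $\sinh$, $\cosh$, and $\exp$ on $[0,\infty)$ together with $\tilde g(1) \le \tilde v$ to pass from, e.g., $\sinh(\tilde g(1))$ to $\sinh(\tilde v)$, and likewise for the derivatives.

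The main obstacle is the $\log\{b\}$ case. Here I must first compute the tilde of the shifted logarithm used by the semantics — the expansion of $\log$ about the point determined by $b$, carried out in the multivariate setting of \Cref{eq:tilde-definition-multivar} — and check that it produces exactly the closed forms $|\log(b)| + \log(b) - \log\!\mleft(b - \tilde v\sqrt{b^2+1}\mright)$ and $\tilde v'/\mleft(b - \tilde v\sqrt{b^2+1}\mright)$ that appear in the rule, where the factor $\sqrt{b^2+1}$ is the $\ell_2$ norm of the affine form arising from the $\vec x \Vert 1$ concatenation. Second, and most delicately, the composition clause of \Cref{lem:tilde-algebra} carries the side condition that $\tilde g(1)$ lie inside the radius of convergence of the shifted $\log$; I must verify that the analysis's premise $b > \tilde v\sqrt{b^2+1}$, together with $\tilde g(1) \le \tilde v$, is precisely what guarantees this, so that the bound is both well defined and sound. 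I would relegate this computation and the full tilde-of-$\log$ derivation to the appendix (\Cref{app:log-rule,app:complexity-algebra}) and dispatch the remaining cases by the routine substitutions above.
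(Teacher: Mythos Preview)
Your proposal is correct and follows essentially the same route as the paper's proof: structural induction on $e$, dispatching each case via the corresponding clause of \Cref{lem:tilde-algebra} (with negation handled by observing $\widetilde{-g}=\tilde g$), and treating $\texttt{log}\{b\}$ as the one nontrivial case requiring an explicit multivariate-tilde computation of the shifted logarithm and a check that the premise $b>\tilde v\sqrt{b^2+1}$ guarantees convergence. Your explicit appeal to nonnegativity of the intermediate $\tilde v$ values to justify monotone substitution is exactly what the paper packages as \Cref{lem:tilde-monotonic}; the arguments are otherwise identical.
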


\begin{restatable}{lemma}{tildetrace}
  \label{lem:tilde-trace}
  The tilde big-step relation for traces upper bounds on the tilde and~tilde~derivative:
  \begin{equation*}
    \mleft(\mleft\langle\tilde{\sigma}, t\mright\rangle \cstepsto \tilde{\sigma}' \land \tilde{\sigma} \vdash \mleft\{f_x\mright\} \mright) \Rightarrow \tilde{\sigma}' \vdash \mleft\{ \mleft\llbracket t \mright\rrbracket_y \circ \mleft\{f_x\mright\} \mright\}
  \end{equation*}
\end{restatable}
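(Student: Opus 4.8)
The plan is to prove \Cref{lem:tilde-trace} by structural induction on the trace $t$, whose three constructors (skip, assignment, sequence) match exactly the three rules of the tilde relation in \Cref{fig:tilde-statement-relational-app}. The inductive hypothesis is the statement of the lemma itself, taken to range over all stores $\tilde{\sigma}$ and all function families $\{f_x\}$: whenever $\langle\tilde{\sigma}, t\rangle \cstepsto \tilde{\sigma}'$ and $\tilde{\sigma} \vdash \{f_x\}$, the output store $\tilde{\sigma}'$ soundly bounds the tilde and tilde derivative (at $1$) of each denoted function $\llbracket t \rrbracket_y \circ \{f_x\}$. The two external tools are \Cref{lem:tilde-expression}, which discharges the expression computed in an assignment, and the compositionality of the standard big-step semantics, which is needed to glue the two halves of a sequence together.

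The base case $t = \texttt{skip}$ is immediate: the tilde relation gives $\tilde{\sigma}' = \tilde{\sigma}$, and executing \texttt{skip} leaves the store unchanged so that $\llbracket \texttt{skip} \rrbracket_y \circ \{f_x\} = f_y$ for every $y$; thus $\tilde{\sigma}' \vdash \{\llbracket \texttt{skip} \rrbracket_y \circ \{f_x\}\}$ is literally the hypothesis $\tilde{\sigma} \vdash \{f_x\}$. For the assignment case $t = (x \texttt{=} e)$, the tilde relation yields $\langle\tilde{\sigma}, e\rangle \cstepsto (\tilde{v}, \tilde{v}')$ and $\tilde{\sigma}' = \tilde{\sigma}[x \mapsto (\tilde{v}, \tilde{v}')]$. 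For each $y \neq x$ the denoted function is unchanged, $\llbracket t \rrbracket_y \circ \{f_x\} = f_y$ with $\tilde{\sigma}'(y) = \tilde{\sigma}(y)$, so the bounds transfer from the hypothesis; for $y = x$ the denoted function is $\llbracket e \rrbracket \circ \{f_x\}$, and \Cref{lem:tilde-expression} applied to the premise $\langle\tilde{\sigma}, e\rangle \cstepsto (\tilde{v}, \tilde{v}')$ together with $\tilde{\sigma} \vdash \{f_x\}$ gives exactly $\widetilde{\llbracket e \rrbracket \circ \{f_x\}}(1) \leq \tilde{v}$ and $\widetilde{\llbracket e \rrbracket \circ \{f_x\}}'(1) \leq \tilde{v}'$, the two conjuncts required for $\tilde{\sigma}' \vdash \{\llbracket t \rrbracket_y \circ \{f_x\}\}$.

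For the sequencing case $t = t_1 \texttt{;}\, t_2$, the tilde relation produces an intermediate store $\tilde{\sigma}'$ with $\langle\tilde{\sigma}, t_1\rangle \cstepsto \tilde{\sigma}'$ and $\langle\tilde{\sigma}', t_2\rangle \cstepsto \tilde{\sigma}''$. The inductive hypothesis on $t_1$ gives $\tilde{\sigma}' \vdash \{g_y\}$ where $g_y \triangleq \llbracket t_1 \rrbracket_y \circ \{f_x\}$. I then re-instantiate the inductive hypothesis on $t_2$, using the family $\{g_y\}$ in place of $\{f_x\}$ and the store $\tilde{\sigma}'$ in place of $\tilde{\sigma}$; since $\tilde{\sigma}' \vdash \{g_y\}$, this yields $\tilde{\sigma}'' \vdash \{\llbracket t_2 \rrbracket_z \circ \{g_y\}\}$. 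What remains is to identify $\llbracket t_2 \rrbracket_z \circ \{g_y\}$ with the target $\llbracket t_1 \texttt{;}\, t_2 \rrbracket_z \circ \{f_x\}$, which follows from the semantic composition identity $\llbracket t_1 \texttt{;}\, t_2 \rrbracket_z = \llbracket t_2 \rrbracket_z \circ \{\llbracket t_1 \rrbracket_y\}$ and the associativity of composition-with-a-family under the paper's definition of $\circ$.

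I expect the main obstacle to be precisely this final identification in the sequencing case. It requires an auxiliary argument that the standard big-step statement semantics composes sequentially (executing $t_1$ threads its output store into $t_2$), and it requires care in re-instantiating the $\vdash$ predicate so that its quantifier $\forall y \in \tilde{\sigma}'$ lines up across the two applications of the inductive hypothesis, with the intermediate family $\{g_y\}$ indexed over exactly the domain of $\tilde{\sigma}'$. Once this semantic substitution fact is in hand, the base and assignment cases are routine bookkeeping on top of \Cref{lem:tilde-expression}.
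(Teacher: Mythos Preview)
Your proposal is correct and follows essentially the same approach as the paper's proof: structural induction on the trace with the three cases handled exactly as you describe, invoking \Cref{lem:tilde-expression} for the assignment case and using the semantic composition identity $\llbracket t_1\texttt{;}\,t_2\rrbracket_z = \llbracket t_2\rrbracket_z \circ \{\llbracket t_1\rrbracket_y\}$ together with associativity of $\circ$ for the sequencing case. The paper's write-up is terser but the structure and the key steps are identical to yours.
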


\begin{restatable}{theorem}{analysisupperbound}
  \label{thm:analysis-upper-bound}
  The complexity relation computes an upper bound on the true complexity:
  \begin{equation*}
    \zeta_{\cstepsto}\mleft(t, x\mright) \leq z \Rightarrow \zeta\mleft(\mleft\llbracket t\mright\rrbracket_x\mright) \leq z
  \end{equation*}
\end{restatable}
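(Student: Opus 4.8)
The plan is to treat \Cref{thm:analysis-upper-bound} as a corollary that assembles three ingredients already established earlier: a base case for the initial complexity store, the trace soundness result (\Cref{lem:tilde-trace}), and the definition of the complexity measure from \Cref{thm:complexity-definition}. Since \Cref{fig:complexity-trace-relational-app} contains exactly one rule, I would begin by inverting the hypothesis $\zeta_{\cstepsto}\mleft(t, x\mright) \leq z$: this yields a complexity store $\tilde{\sigma}$ with $\mleft\langle \mleft\{x_i \mapsto \mleft(1,1\mright)\mright\}, t\mright\rangle \cstepsto \tilde{\sigma}$ and $\tilde{\sigma}\mleft(x\mright) = \mleft(\tilde{v}, \tilde{v}'\mright)$ such that $\tilde{v}'^2 \leq z$. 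The goal then reduces to showing $\zeta\mleft(\mleft\llbracket t\mright\rrbracket_x\mright) \leq \tilde{v}'^2$.

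First I would discharge the base case: the initial store $\mleft\{x_i \mapsto \mleft(1,1\mright)\mright\}$ satisfies $\mleft\{x_i \mapsto \mleft(1,1\mright)\mright\} \vdash \mleft\{f_{x_i}\mright\}$, where each $f_{x_i}\mleft(\vec{x}\mright) = \vec{x}_i$ is the projection onto the $i$-th input coordinate. The key computation is the multivariate tilde (\Cref{eq:tilde-definition-multivar}) of a projection: writing $f_{x_i}\mleft(\vec{x}\mright) = \beta \cdot \vec{x}\Vert1$ with $\beta$ the standard basis vector selecting coordinate $i$, we have $\mleft\Vert\beta\mright\Vert_2 = 1$, so $\widetilde{f_{x_i}}\mleft(x\mright) = x$ and hence $\widetilde{f_{x_i}}\mleft(1\mright) = 1$ and $\widetilde{f_{x_i}}'\mleft(1\mright) = 1$. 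These match the pair $\mleft(1,1\mright)$ seeded for each input variable, so the predicate holds.

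Next I would apply \Cref{lem:tilde-trace} instantiated with the initial store and the projections $\mleft\{f_{x_i}\mright\}$. Combined with the base case, it gives $\tilde{\sigma} \vdash \mleft\{\mleft\llbracket t\mright\rrbracket_y \circ \mleft\{f_{x_i}\mright\}\mright\}$; specializing to the variable $x$ and using $\tilde{\sigma}\mleft(x\mright) = \mleft(\tilde{v}, \tilde{v}'\mright)$ yields $0 \leq \widetilde{\mleft\llbracket t\mright\rrbracket_x \circ \mleft\{f_{x_i}\mright\}}'\mleft(1\mright) \leq \tilde{v}'$. Because composing the trace semantics with the identity projections recovers the trace function itself, $\mleft\llbracket t\mright\rrbracket_x \circ \mleft\{f_{x_i}\mright\} = \mleft\llbracket t\mright\rrbracket_x$, so $0 \leq \widetilde{\mleft\llbracket t\mright\rrbracket_x}'\mleft(1\mright) \leq \tilde{v}'$. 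Finally, by the definition $\zeta\mleft(g\mright) = \widetilde{g}'\mleft(1\mright)^2$ from \Cref{thm:complexity-definition} and monotonicity of squaring on the nonnegatives, I obtain $\zeta\mleft(\mleft\llbracket t\mright\rrbracket_x\mright) = \widetilde{\mleft\llbracket t\mright\rrbracket_x}'\mleft(1\mright)^2 \leq \tilde{v}'^2 \leq z$, which is the claim.

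The heavy lifting---the structural induction over expressions and traces showing that the dual-number execution tracks upper bounds on the tilde and its derivative---is already packaged in \Cref{lem:tilde-expression,lem:tilde-trace}, so what remains is bookkeeping. I expect the main subtlety to be getting the base case exactly right: one must confirm that the projection functions, under the multivariate tilde of \Cref{eq:tilde-definition-multivar}, have tilde and tilde derivative \emph{equal} to $1$ at the evaluation point $x = 1$ (rather than merely bounded by it), since this is what justifies seeding the analysis with $\mleft(1,1\mright)$, and to track the nonnegativity guaranteed by the $\vdash$ predicate, which is exactly what is needed to square the derivative inequality without reversing it.
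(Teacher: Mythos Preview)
Your proposal is correct and follows essentially the same approach as the paper's proof: invert the single rule defining $\zeta_{\cstepsto}$, check that the initial store $\{x_i \mapsto (1,1)\}$ validates the projections, apply \Cref{lem:tilde-trace}, and finish by squaring using \Cref{thm:complexity-definition}. Your write-up is in fact more explicit than the paper's (you compute the tilde of the projections via \Cref{eq:tilde-definition-multivar} and you track the nonnegativity needed to square the inequality), but the structure is identical; one small note is that inversion of the rule in \Cref{fig:complexity-trace-relational-app} gives $z = \tilde{v}'^2$ exactly rather than merely $\tilde{v}'^2 \leq z$, though your weaker phrasing still suffices.
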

The proofs of \Cref{lem:tilde-expression,lem:tilde-trace} and \Cref{thm:analysis-upper-bound} are presented in \Cref{app:language}.

\subsection{Precision}
\label{sec:precision}
We note that the analysis is a sound but imprecise approximation of complexity, in that the upper bound it computes is not tight.
For example, consider the expression $x \texttt{+} \texttt{(-}x\texttt{)}$:
under the \lang{} analysis,
$\langle\mleft\{x \mapsto \mleft(1, 1\mright)\mright\}, x \texttt{+} \texttt{(-}x\texttt{)} \rangle \cstepsto \mleft(2, 2\mright)$ even though $\mleft\llbracket x \texttt{+} \texttt{(-}x\texttt{)}\mright\rrbracket\mleft(\sigma\mright) = 0$.

\subsection{Extensions}
Our implementation extends \lang{} to support vector-valued variables, applying all operations elementwise.
Following \citet{agarwala_monolithic_2021}, we define the complexity of learning a vector-valued function to be the sum of the complexity of learning each output component.
Our implementation also supports other syntactic sugar including a minus operation and division by constants.

\paragraph{Loops.}
Our implementation of \lang{} also supports fixed- and bounded-length loops, though they are not required for any case study in the paper (thus we do not present them in this paper).
However, unbounded loops pose a challenge because our approach trains a distinct surrogate per path, which is not possible with unbounded loops.
This restriction to statically bounded length loops is a common feature of analyses that reason about numerical approximation, including reliability analyses~\citep{carbin_rely_2013,misailovic_chisel_2014,boston_probability_2015} and floating-point error analyses~\citep{darulova_sound_2014,magron_certified_2017,solovyev_rigorous_2018}.
Reasoning about loops with dynamic, input-dependent bounds requires separate techniques (e.g., \citet{boston_probability_2015}).

\section{Evaluation}
\label{sec:evaluation}

In this section we evaluate our \moptimal{} sampling approach using \lang{}'s complexity analysis to determine sampling rates for a range of benchmark programs.
We demonstrate that \moptimal{} sampling consistently results in more accurate surrogates than those trained using baseline distributions (the frequency distribution of paths and the uniform distribution of paths).
We also demonstrate that such an improvement in surrogate error can result in an improvement in execution speed in an application with a maximum error threshold.

In \Cref{sec:evaluation-table} we first evaluate across both a set of real-world programs, showing expected error improvements in practice, and also a set of synthetic programs, showing cases where the complexity-guided sampling approach shines and cases where it fails.
Then in \Cref{sec:renderer} we dive into a case study on a specific large-scale program, a demonstration 3D renderer~\citep{renderer}, \NA{such as forms the core} of a graphics rendering pipeline for a movie or 3D game engine~\citep{renderman,real_time_3d_rendering}.

{
\let\section\subsection
\let\subsection\subsubsection
\let\subsubsection\paragraph
\section{Evaluation Across Programs}
\label{sec:evaluation-table}

In this section we evaluate our \moptimal{} sampling approach using \lang{}'s complexity analysis to determine sampling rates for a range of benchmark programs.
We evaluate both a set of real-world programs, showing expected error improvements in practice, and also a set of synthetic programs, showing cases where the complexity-guided sampling approach shines and cases where it fails.

\begin{table*}
  \caption{Average change in error across all budgets from using complexity-guided sampling compared to baselines on each benchmark (higher values means \moptimal{} sampling has lower error).}
  \label{tab:benchmark-results}
  \resizebox{\textwidth}{!}{%
  \begin{tabular}{ccc|cc|cc}
  \toprule
  \multicolumn{3}{c|}{\textbf{Benchmark}} & \multicolumn{4}{c}{\textbf{Baseline}} \\
  \multirow{2}{1.6cm}{\centering\bf Program}
  & \multirow{2}{0.8cm}{\centering\bf LoC}
  & \multirow{2}{0.8cm}{\centering\bf Paths}
  & \multirow{2}{2cm}{\centering\bf Frequency \\ (Predicted)}
  & \multirow{2}{2cm}{\centering\bf Frequency \\ (Empirical)}
  & \multirow{2}{2cm}{\centering\bf Uniform \\ (Predicted)}
  & \multirow{2}{2cm}{\centering\bf Uniform \\  (Empirical)} \\&&&&&&\\ \midrule
Luminance & $14$ & $3$ & $2.58\%$ & $15.01\%$ & $6.97\%$ & $15.17\%$ \\
Huber & $13$ & $3$ & $0.49\%$ & $8.15\%$ & $1.93\%$ & $9.54\%$ \\
BlackScholes & $15$ & $2$ & $4.43\%$ & $3.61\%$ & $1.30\%$ & $4.00\%$ \\
Camera & $69$ & $3$ & $2.83\%$ & $0.56\%$ & $0.22\%$ & $1.36\%$ \\
EQuake & $34$ & $2$ & $7.45\%$ & $2.25\%$ & $7.45\%$ & $2.25\%$ \\
Jmeint & $176$ & $18$ & $2.34\%$ & $0.01\%$ & $8.44\%$ & $1.02\%$ \\
\midrule
Geomean & & & $3.33\%$ & $4.81\%$ & $4.33\%$ & $5.43\%$ \\
  \bottomrule
  \end{tabular}%
}
\end{table*}
\begin{table*}

\caption{Benchmark statistics.}
\label{tab:benchmark-stats}
\begin{tabular}{ccc|ccc} \toprule
  \multirow{2}{1.6cm}{\centering \bf Benchmark} & \multirow{2}{1.5cm}{\centering\textbf{Path}} & \multirow{2}{2cm}{\centering\textbf{Complexity}} & \multirow{2}{2cm}{\centering\bf Frequency \\ Distribution} & \multirow{2}{2cm}{\centering\bf Uniform \\ Distribution} & \multirow{2}{2cm}{\centering\bf Complexity \\ Distribution} \\&&&&\\ \midrule
\multirow{3}{1.6cm}{\bf\centering Luminance}
& \texttt{ll} & 0.01 & $50.00\%$ & $33.33\%$ & $36.94\%$  \\
& \texttt{rl} & 1.21 & $10.00\%$ & $33.33\%$ & $13.98\%$  \\
& \texttt{rr} & 9.00 & $40.00\%$ & $33.33\%$ & $49.07\%$  \\
\midrule
\multirow{3}{1.6cm}{\bf\centering Huber}
& \texttt{ll} & 9.00 & $50.00\%$ & $33.33\%$ & $44.25\%$  \\
& \texttt{lr} & 9.00 & $25.00\%$ & $33.33\%$ & $27.88\%$  \\
& \texttt{r} & 9.00 & $25.00\%$ & $33.33\%$ & $27.88\%$  \\
\midrule
\multirow{2}{1.6cm}{\bf\centering BlackScholes}
& \texttt{l} & 165.72 & $75.00\%$ & $50.00\%$ & $59.34\%$  \\
& \texttt{r} & 485.23 & $25.00\%$ & $50.00\%$ & $40.66\%$  \\
\midrule
\multirow{3}{1.5cm}{\bf\centering Camera}
& \texttt{ll} & 0.86 & $44.54\%$ & $33.33\%$ & $36.96\%$  \\
& \texttt{lrl} & 0.81 & $35.48\%$ & $33.33\%$ & $31.63\%$  \\
& \texttt{rrr} & 9.53 & $19.98\%$ & $33.33\%$ & $31.41\%$  \\
\midrule
\multirow{2}{1.6cm}{\bf\centering EQuake}
& \texttt{l} & 56.29 & $50.00\%$ & $50.00\%$ & $26.99\%$  \\
& \texttt{r} & 1169.50 & $50.00\%$ & $50.00\%$ & $73.01\%$  \\
\midrule
\multirow{18}{1.6cm}{\bf\centering Jmeint}
& \texttt{lllrrrll} & 7236100.00 & $18.74\%$ & $5.56\%$ & $13.25\%$  \\
& \texttt{lllrrrlrl} & 7236100.00 & $5.31\%$ & $5.56\%$ & $5.72\%$  \\
& \texttt{lllrrrlrrl} & 7236100.00 & $5.31\%$ & $5.56\%$ & $5.71\%$  \\
& \texttt{lllrrrrll} & 7236100.00 & $5.30\%$ & $5.56\%$ & $5.71\%$  \\
& \texttt{lllrrrrlrl} & 7236100.00 & $2.52\%$ & $5.56\%$ & $3.48\%$  \\
& \texttt{lllrrrrlrrl} & 7236100.00 & $2.51\%$ & $5.56\%$ & $3.47\%$  \\
& \texttt{lllrrrrrll} & 7236100.00 & $5.30\%$ & $5.56\%$ & $5.71\%$  \\
& \texttt{lllrrrrrlrl} & 7236100.00 & $2.52\%$ & $5.56\%$ & $3.48\%$  \\
& \texttt{lllrrrrrlrrl} & 7236100.00 & $2.52\%$ & $5.56\%$ & $3.48\%$  \\
& \texttt{rrrrrrll} & 7236100.00 & $18.67\%$ & $5.56\%$ & $13.22\%$  \\
& \texttt{rrrrrrlrl} & 7236100.00 & $5.29\%$ & $5.56\%$ & $5.71\%$  \\
& \texttt{rrrrrrlrrl} & 7236100.00 & $5.29\%$ & $5.56\%$ & $5.70\%$  \\
& \texttt{rrrrrrrll} & 7236100.00 & $5.34\%$ & $5.56\%$ & $5.74\%$  \\
& \texttt{rrrrrrrlrl} & 7236100.00 & $2.52\%$ & $5.56\%$ & $3.48\%$  \\
& \texttt{rrrrrrrlrrl} & 7236100.00 & $2.51\%$ & $5.56\%$ & $3.47\%$  \\
& \texttt{rrrrrrrrll} & 7236100.00 & $5.29\%$ & $5.56\%$ & $5.70\%$  \\
& \texttt{rrrrrrrrlrl} & 7236100.00 & $2.53\%$ & $5.56\%$ & $3.49\%$  \\
& \texttt{rrrrrrrrlrrl} & 7236100.00 & $2.52\%$ & $5.56\%$ & $3.48\%$  \\
  \bottomrule
\end{tabular}

\end{table*}

\paragraph{Methodology.}
Following the input scale assumptions from \citet{agarwala_monolithic_2021}, we sample each input variable uniformly between $[-1, 1]$ or $[0, 1]$ as appropriate for the program.
We insert scale factors as appropriate given the expected data distribution of the original program.
We then uniformly sample inputs from these ranges.
This induces both a data distribution over inputs and a path frequency distribution.

For all benchmarks other than the Jmeint benchmark, we evaluate using a training data budget using 10 points logarithmically spaced between 10 and 1000.
For the Jmeint benchmark, which is more data intensive, we evaluate using a training data budget using 10 points logarithmically spaced between 1000 and 10000.
When computing the complexity-guided sampling distribution, we use $\delta=0.1$.

For each path in each benchmark, we train a 1-hidden-layer MLP with 1024 hidden units with a ReLU activation, using 10,000 steps of Adam with learning rate $0.0005$ and~batch~size~128.
We run the training for 5 trials.

We report both the predicted error (\Cref{eq:predicted-error}) improvement and the empirical improvement, the geometric mean improvement in error across trials.
As in \Cref{sec:renderer}, improvement is defined as the mean percentage error between the predicted error for complexity-guided sampling and the baseline sampling method.

\subsection{Results}
\Cref{tab:benchmark-results} presents the results of the evaluation across 6 benchmark programs: Luminance, Huber, BlackScholes, Camera, EQuake, and Jmeint.
\Cref{tab:benchmark-stats} presents path and distribution statistics for each benchmark program.

We find that across this selection of programs, from predicted error improvements of $3.33\%$ against the frequency sampling baseline, complexity-guided sampling results in an empirical improvement of $4.81\%$;
from predicted error improvements of $4.33\%$ against the uniform sampling baseline, complexity-guided sampling results in an empirical improvement of $5.43\%$.
Such a magnitude of decrease in error can significantly affect a system end-to-end.
For example, consider \Cref{tab:speedups}.
This table shows the results of a hyperparameter search to choose the fastest-to-execute neural network that meets an error threshold of $10\%$\footnote{
This methodology is also used by \citet{esmaeilzadeh_neural_2012}.
}
(this table is a replication of Table 4 in Appendix A of \citet{renda_programming_2021}, with an added column of ``Error - $5\%$'').
\citeauthor{renda_programming_2021} chose the network with size 64, which has a $1.57\times$ speedup; however, a decrease in error of $5\%$ would result instead in choosing the network with size 32, which has a $2.01\times$ speedup, a $28\%$ improvement in application performance.

\begin{table}
  \caption{
The validation error and speedup of BERT models over a range of candidate embedding widths.
This table is a replication of Table 4 in Appendix A of \citet{renda_programming_2021}, with an added column of ``Error - $5\%$''.
}
  \label{tab:speedups}

  \begin{center}
    \begin{tabular}{c|ccc}
      \toprule
      \textbf{Embedding Width} & \textbf{Error} & \textbf{Error - $5\%$} & \textbf{Speedup over W=128} \\
      \midrule
      128 & $8.9\%$ & $8.5\%$ & $1\times$ \\
      64 & $9.5\%$ & $9.0\%$ & $1.57\times$ \\
      32 & $10.1\%$ & $9.6\%$ & $2.01\times$ \\
      16 & $10.8\%$ & $10.3\%$ & $2.22\times$ \\
      \bottomrule
    \end{tabular}%
  \end{center}
\end{table}

\begin{figure}
  \begin{subfigure}[t]{0.45\textwidth}
    \input{code/huber.tex}
    \caption{Huber benchmark, which calculates a variant of the Huber loss for $\texttt{x} \in [-1, 1]$ and $\delta \in [0, 1]$.}
    \label{fig:huber}
  \end{subfigure}
  \hspace*{1em}
  \begin{subfigure}[t]{0.49\textwidth}
    \input{code/blackscholes.tex}
    \caption{BlackScholes benchmark, which performs a part of the Black Scholes option pricing model (with \lstinline|otype| positive for puts and negative for calls.)}
    \label{fig:blackscholes}
  \end{subfigure}
  \caption{Huber and BlackScholes benchmarks.}
\end{figure}

\paragraph{Luminance.}
The luminance benchmark is that of \Cref{sec:example}, and is presented in \Cref{fig:program-case-study}.
This benchmark has 3 paths: when $\texttt{sunPosition} < 0$ (path \texttt{ll} with complexity $0.01$), when $0 < \texttt{sunPosition} < 0.1$ (path \texttt{rl} with complexity $1.2$), and when $\texttt{sunPosition} > 0.1$ (path \texttt{rr} with complexity $9$).

Against the frequency baseline, compared to a predicted error improvement of $2.58\%$, complexity-guided sampling results in an improvement of $15.01\%$.
Against the uniform baseline, compared to a predicted error improvement of $6.97\%$, complexity-guided sampling results in an improvement of $15.17\%$.

\paragraph{Huber.}
\Cref{fig:huber} presents the Huber benchmark, which calculates the Huber loss for $\texttt{x} \in [-1, 1]$ and $\texttt{delta} \in [0, 1]$.
This benchmark has 3 paths: when $-\texttt{delta} < \texttt{x} < \texttt{delta}$ (path \texttt{ll} with complexity $9$), when $\texttt{x} < -\texttt{delta}$ (path \texttt{lr} with complexity $9$), and when $\texttt{x} > \texttt{delta}$ (path \texttt{r} with complexity $9$).

Against the frequency baseline, compared to a predicted error improvement of $0.49\%$, complexity-guided sampling results in an improvement of $8.15\%$.
Against the uniform baseline, compared to a predicted error improvement of $1.93\%$, complexity-guided sampling results in an improvement of $9.54\%$.

\paragraph{BlackScholes.}
\Cref{fig:blackscholes} presents the BlackScholes benchmark, which performs a part of the Black Scholes option pricing model (with \lstinline|otype| positive for puts and negative for calls), for inputs uniform in $[0, 1]$ (other than \texttt{otype}, which is uniform in $[-1, 1]$).
This benchmark is a fragment of the Black-Scholes benchmark in the AxBench benchmark suite~\citep{yazdanbakhsh_axbench_2017}.
This benchmark has 2 paths: when $\texttt{otype} > 0$ (path \texttt{l} with complexity $165.72$; for puts) and when $\texttt{otype} < 0$ (path \texttt{r} with complexity $485.23$; for calls).

Against the frequency baseline, compared to a predicted error improvement of $4.43\%$, complexity-guided sampling results in an improvement of $3.61\%$.
Against the uniform baseline, compared to a predicted error improvement of $1.30\%$, complexity-guided sampling results in an improvement of $4.00\%$.

\paragraph{Camera.}
\Cref{fig:camera} in \Cref{app:evaluation} presents the Camera benchmark, which performs a part of the conversion from blackbody radiator color temperature to the CIE 1931 x,y chromaticity approximation function, for inputs $\texttt{x} \in [-1, 1]$, $\texttt{y} \in [-1, 1]$, $\texttt{invKiloK} \in [0, 1]$, and $\texttt{T} \in [0.1, 0.5]$ (note that \texttt{T} is used exclusively to determine the path).
This benchmark is included in the Frankencamera platform~\citep{frankencamera}, and is based off of an implementation by \citet{Kang2002a}.
This benchmark has three paths: when $\texttt{T} < 0.2222$ (path \texttt{ll} with complexity $0.86$), when $0.2222 < \texttt{T} < 0.4$ (path \texttt{lrl} with complexity $0.81$), and when $0.4 < \texttt{T}$ (path \texttt{rrr} with complexity $9.53$).

Against the frequency baseline, compared to a predicted error improvement of $2.83\%$, complexity-guided sampling results in an improvement of $0.56\%$.
Against the uniform baseline, compared to a predicted error improvement of $0.22\%$, complexity-guided sampling results in an improvement of $1.36\%$.

\paragraph{EQuake.}
\Cref{fig:quake} in \Cref{app:evaluation} presents the EQuake benchmark, which computes the displacement of an object after one timestep in an earthquake simulation.
This benchmark is a fragment of the 183.equake benchmark in the SPECfp2000 benchmark suite~\citep{specfp}.
This benchmark has 2 paths: when $\texttt{t} > 0.5$ (path \texttt{l} with complexity $56.29$) and when $\texttt{t} < 0.5$ (path \texttt{r} with complexity $1169.50$).

Against both the frequency and uniform baselines, compared to a predicted error improvement of $7.45\%$, complexity-guided sampling results in an improvement of $2.25\%$.

\paragraph{Jmeint.}
\Cref{fig:jmeint} in \Cref{app:evaluation} presents the Jmeint benchmark, which calculates whether two 3D triangles intersect, and several auxiliary variables related to their intersection.
All inputs are sampled from $[-1, 1]$.
This benchmark is a fragment of the Jmeint benchmark in the AxBench benchmark suite~\citep{yazdanbakhsh_axbench_2017}.
This benchmark has 18 paths; each path has the same complexity of $72361000$, but with different frequencies.

Against the frequency baseline, compared to a predicted error improvement of $2.34\%$, complexity-guided sampling results in an improvement of $0.01\%$, a negligible change in error.
Against the uniform baseline, compared to a predicted error improvement of $8.44\%$, complexity-guided sampling results in an improvement of $1.02\%$.

We note that this benchmark has the highest complexity of any evaluated program (requiring more samples and still resulting in higher overall errors), and also that empirically some paths do appear to be significantly easier to learn despite the matching complexities.

\subsection{Analysis: Complexity-Guided Sampling Successes}
\label{sec:analysis-complexity-success}

In this section, we demonstrate examples of where the complexity-guided sampling technique results in significantly better error than~baselines.

\paragraph{Complex paths.}

The first case is when some paths are significantly more complex than others: neither the frequency nor the uniform baseline take into account path complexity, so we expect both baselines to undersample the complex path.

\Cref{fig:complexity-path-good} presents an example of such a case.
In this example, the complexity of the $\texttt{x} < 0.5$ path (\texttt{l}) is $137677$, while the complexity of the $\texttt{x} < 0.5$ path (\texttt{r}) is $57$.
The frequency of both the \texttt{l} and the \texttt{r} paths are $50\%$.
The complexity-guided sampling approach samples the \texttt{l} path with probability $93\%$ and the \texttt{r} path with probability $7\%$.

Against both the frequency and uniform baselines, compared to a predicted error improvement of $22.72\%$, complexity-guided sampling results in an improvement of $10.9\%$.

\paragraph{Skewed frequency distribution.}

The second case is when some paths are significantly more frequent than others.
This confers advantages over the uniform baseline, which does not take into account path frequency, and also over the frequency baseline, which does not take into account the functional form of the learning bound in \Cref{eq:sample-complexity} (i.e., that error decreases proportionally to the square root of the number of samples).

\Cref{fig:skewed-frequency} presents an example of such a case.
In this example, all paths have a complexity of $14$, while the frequencies are either $10\%$ (for paths \texttt{l}, \texttt{rl}, and \texttt{rrl}) or $70\%$ (for path \texttt{rrr}).
The complexity-guided sampling approach samples the $10\%$-frequency paths with probability $15\%$, and the $70\%$-frequency path with probability $55\%$.

Against the frequency baseline, compared to a predicted error improvement of $3.75\%$, complexity-guided sampling results in an improvement of $28.38\%$.
Against the uniform baseline, compared to a predicted error improvement of $14.08\%$, complexity-guided sampling results in an improvement of $20.76\%$.

Note that this type of path distribution (with rare paths below $1\%$ of the input data distribution) matches the distribution of paths in the renderer evaluation in \Cref{sec:renderer}, and results in a similar significant overperformance of the predicted error improvement.

\subsection{Analysis: Complexity-Guided Sampling Failures}

In this section, we demonstrate core examples of where the complexity-guided sampling technique results in significantly worse error than baselines.

\begin{figure}
  \begin{subfigure}[t]{0.45\textwidth}
    \input{code/skewed_complexity.tex}
    \caption{Success: synthetic example with skewed path complexities ($137678$ v.s. $57$) where complexity-guided sampling significantly improves error.}
    \label{fig:complexity-path-good}
  \end{subfigure}
  \hspace*{1em}
  \begin{subfigure}[t]{0.49\textwidth}
  \input{code/skewed_frequency.tex}
  \caption{Success: synthetic example with skewed path frequencies ($10\%$ v.s. $70\%$) where complexity-guided sampling significantly improves error.}
  \label{fig:skewed-frequency}
\end{subfigure}

\vspace*{1em}

  \begin{subfigure}[t]{0.3\textwidth}
    \input{code/complexity_imprecise.tex}
    \caption{Failure: synthetic example with a function on which the complexity bound is imprecise.}
    \label{fig:complexity-imprecise}
  \end{subfigure}
  \hspace*{0.75em}
  \begin{subfigure}[t]{0.31\textwidth}
    \input{code/complexity_nonuniformity.tex}
    \caption{Failure: synthetic example with a function that has different effective complexities across scales.}
    \label{fig:complexity-nonuniformity}
  \end{subfigure}
  \hspace*{0.75em}
  \begin{subfigure}[t]{0.31\textwidth}
  \input{code/analysis_bad.tex}
  \caption{Failure: synthetic example with a function on which the \lang{} analysis is imprecise.}
  \label{fig:analysis-bad}
\end{subfigure}

\caption{Examples of complexity-guided sampling successes and failures.}
\vspace*{-0.5em}
\end{figure}

\paragraph{Complexity imprecision.}

The first case is when the complexity results in too loose of an upper bound on the resulting error of a surrogate of that function.
In this case, the complexity-guided sampling approach can oversample from the corresponding path.

\Cref{fig:complexity-imprecise} presents an example of such a case.
In this example, the complexity of the \texttt{l} path is $18638$ and the complexity of the \texttt{r} path is $16$.
Though we are not aware of any tighter bounds on the complexity of learning $\sin\mleft(4x\mright)$ for $x \in \mleft[0.5, 1\mright]$, in practice we find that neural networks are able to learn this function to low error with relatively few samples.

The frequency of each path is $50\%$.
The complexity-guided sampling approach samples the \texttt{l} path with probability $90.9\%$ and the \texttt{r} path with probability $9.1\%$.
Against both the frequency and uniform baselines, compared to a predicted error improvement of $20.88\%$, complexity-guided sampling results in a change of $-92.59\%$, a significant increase in error.

\paragraph{Nonuniform complexity.}

The second case is when the complexity of a learning function varies significantly across different scales.
In this case, the complexity-guided sampling approach can oversample from the corresponding path.

\Cref{fig:complexity-nonuniformity} presents an example of such a case.
In this example, the complexity of the \texttt{l} path is $12129$ and the complexity of the \texttt{r} path is $2.38$.
This causes an issue with the complexity-guided sampling because with a large target error (e.g., $\epsilon > 0.01$), the \texttt{l} path is essentially zero (and therefore should have low complexity).
However, with a small target error (e.g., $\epsilon < 0.00001$), the \texttt{l} path is very complex.
Because the sample complexity bounds themselves are scale-independent upper bounds, they do not by default incorporate this knowledge.

The frequency of each path in this example is $50\%$.
The complexity-guided sampling approach samples the \texttt{l} path with probability $92.9\%$ and the \texttt{r} path with probability $7.1\%$.
Against both the frequency and the uniform baselines, compared to a predicted error improvement of $22.7\%$, complexity-guided sampling results in a change of $-351\%$, a $3.5\times$ increase~in~error.

\paragraph{Analysis imprecision.}

The third case is when \lang{}'s analysis of the complexity is imprecise: \Cref{thm:analysis-upper-bound} proves that \lang{}'s complexity analysis computes an upper bound on the tilde, but this upper bound may also be loose (as discussed in \Cref{sec:precision}).
In this case, the complexity-guided sampling approach can oversample from the corresponding path.

\Cref{fig:analysis-bad} presents an example of such a case.
In this example, the calculated complexity of the \texttt{l} path is $161604$ and the calculated complexity of the \texttt{r} path is $56.6$.
If we were to perform algebraic simplification (which \lang{} does not), we would find the complexity of the \texttt{l} path to instead be $4$.

The frequency of each path is $50\%$.
With \lang{}'s computed complexities, the complexity-guided sampling approach samples the \texttt{l} path with probability $93.3\%$ and the \texttt{r} path with probability $7.7\%$.
Against both the frequency and the uniform baselines, compared to a predicted error improvement of $23.03\%$, complexity-guided sampling results in a change of $-491.22\%$, a $5\times$ increase~in~error.

}

{
\let\section\subsection
\let\subsection\subsubsection
\let\subsubsection\paragraph

\newcommand\picwidth{0.49\textwidth}
\begin{figure*}
  \begin{center}
    \begin{subfigure}[b]{\picwidth}
      \centering
      \includegraphics[width=\textwidth]{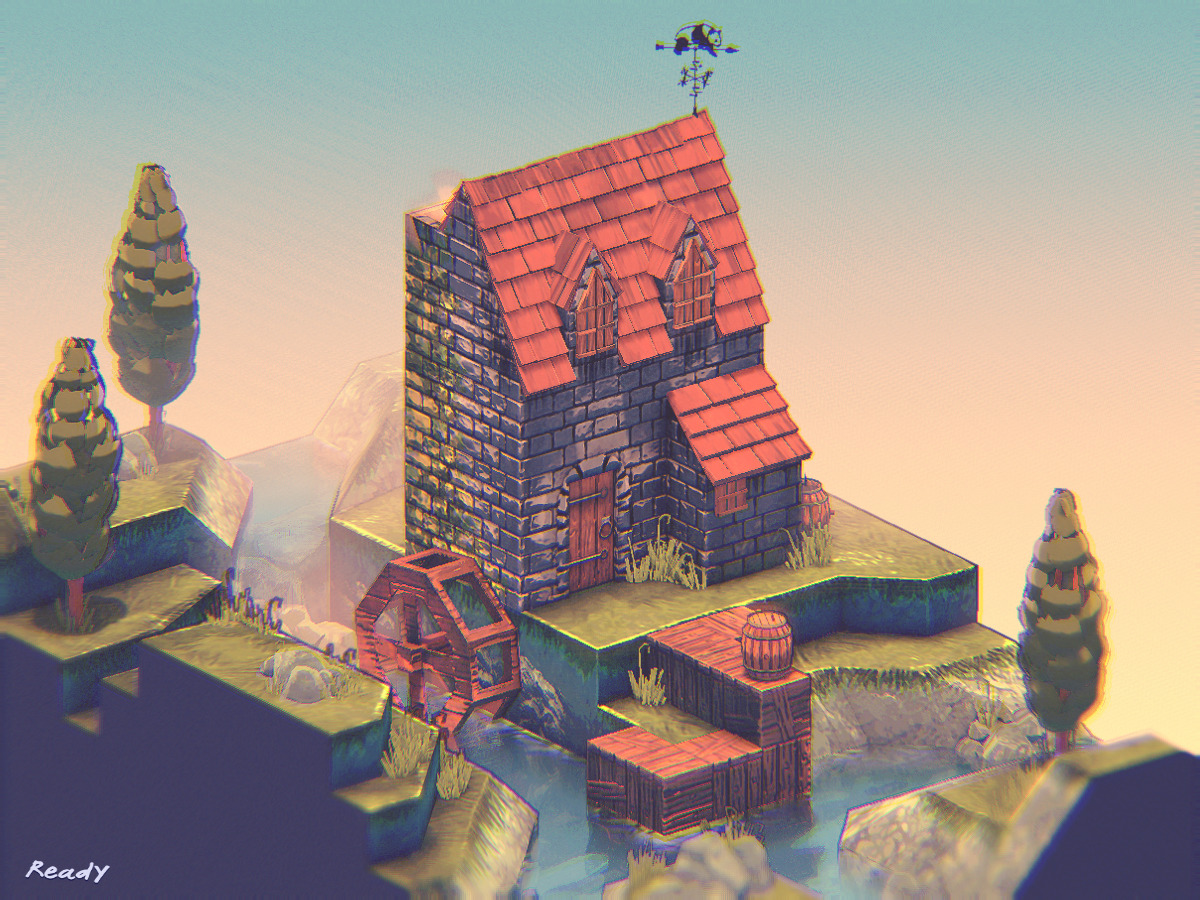}
      \caption[base-day]{Ground-truth front-day scene.}
      \label{fig:base-day}
    \end{subfigure}
    \begin{subfigure}[b]{\picwidth}
      \centering
      \includegraphics[width=\textwidth]{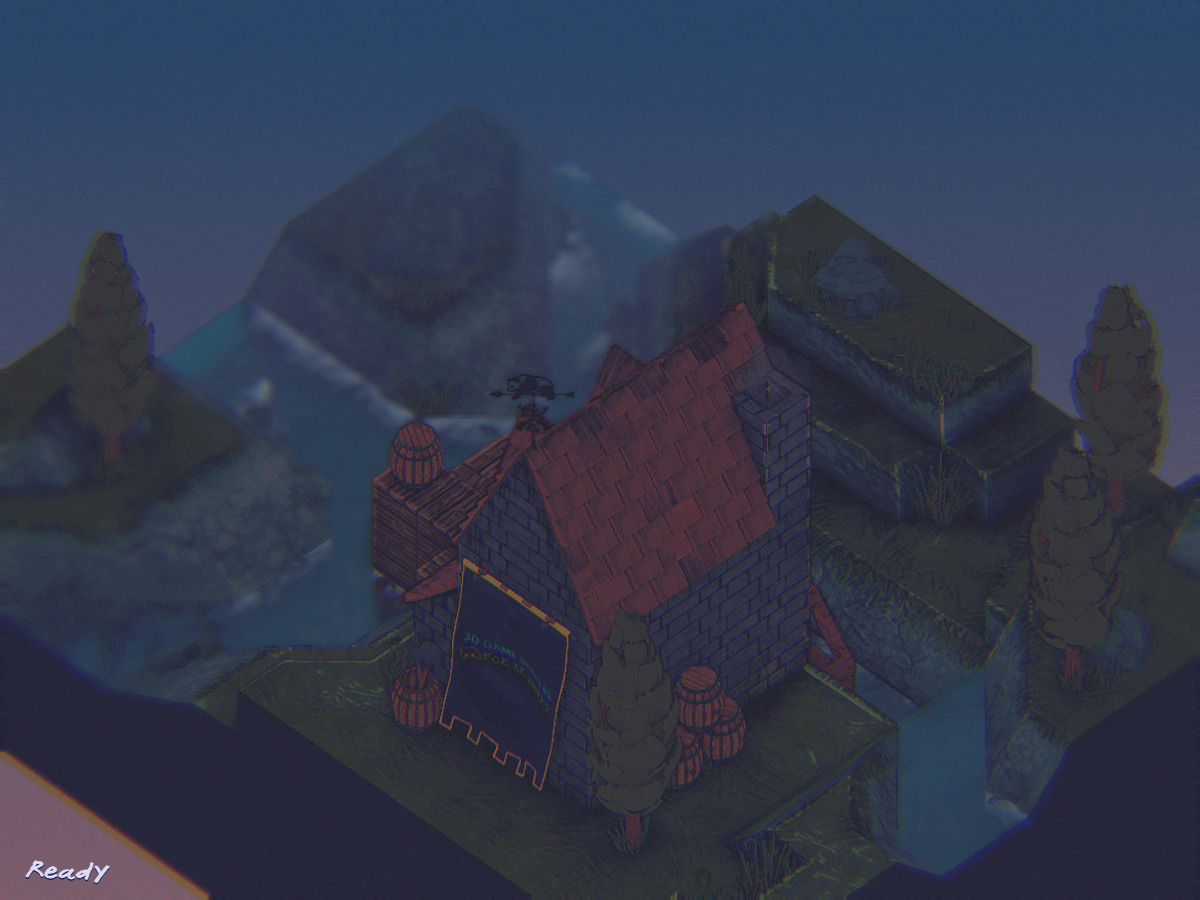}
      \caption[top-night]{Ground-truth top-night scene.}
      \label{fig:top-night}
    \end{subfigure}
  \end{center}
  \begin{center}
  \begin{subfigure}{0.323\textwidth}
    \includegraphics[width=\textwidth]{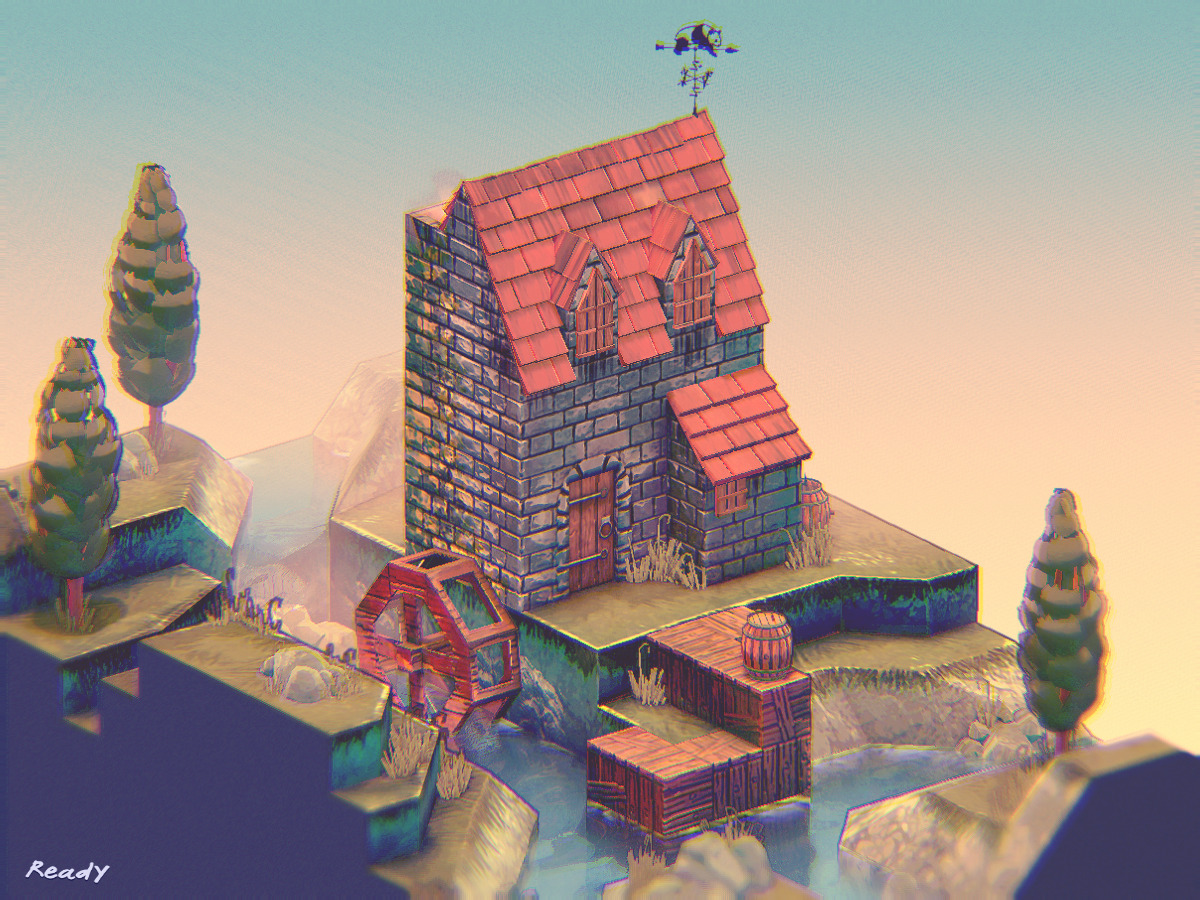}
  \end{subfigure}
  \begin{subfigure}{0.323\textwidth}
    \includegraphics[width=\textwidth]{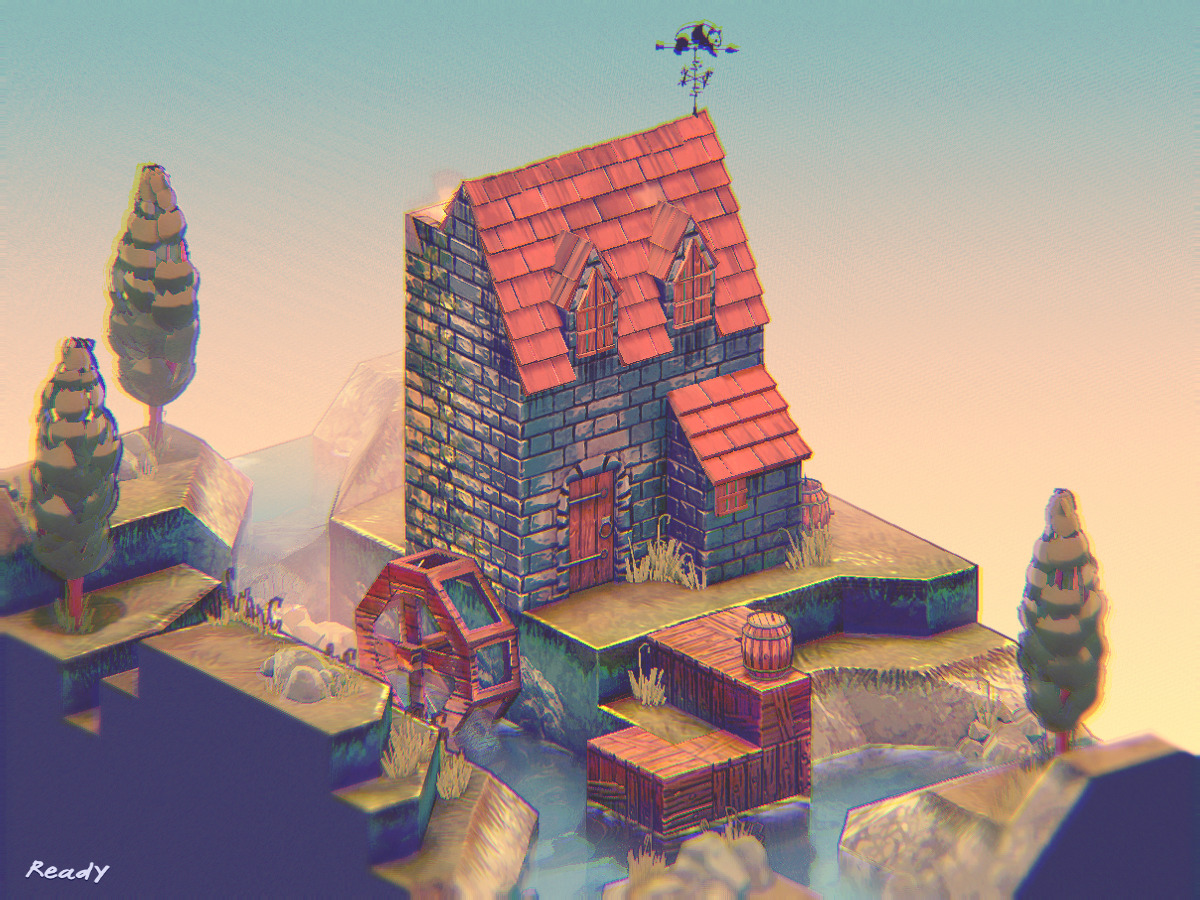}
  \end{subfigure}
  \begin{subfigure}{0.323\textwidth}
    \includegraphics[width=\textwidth]{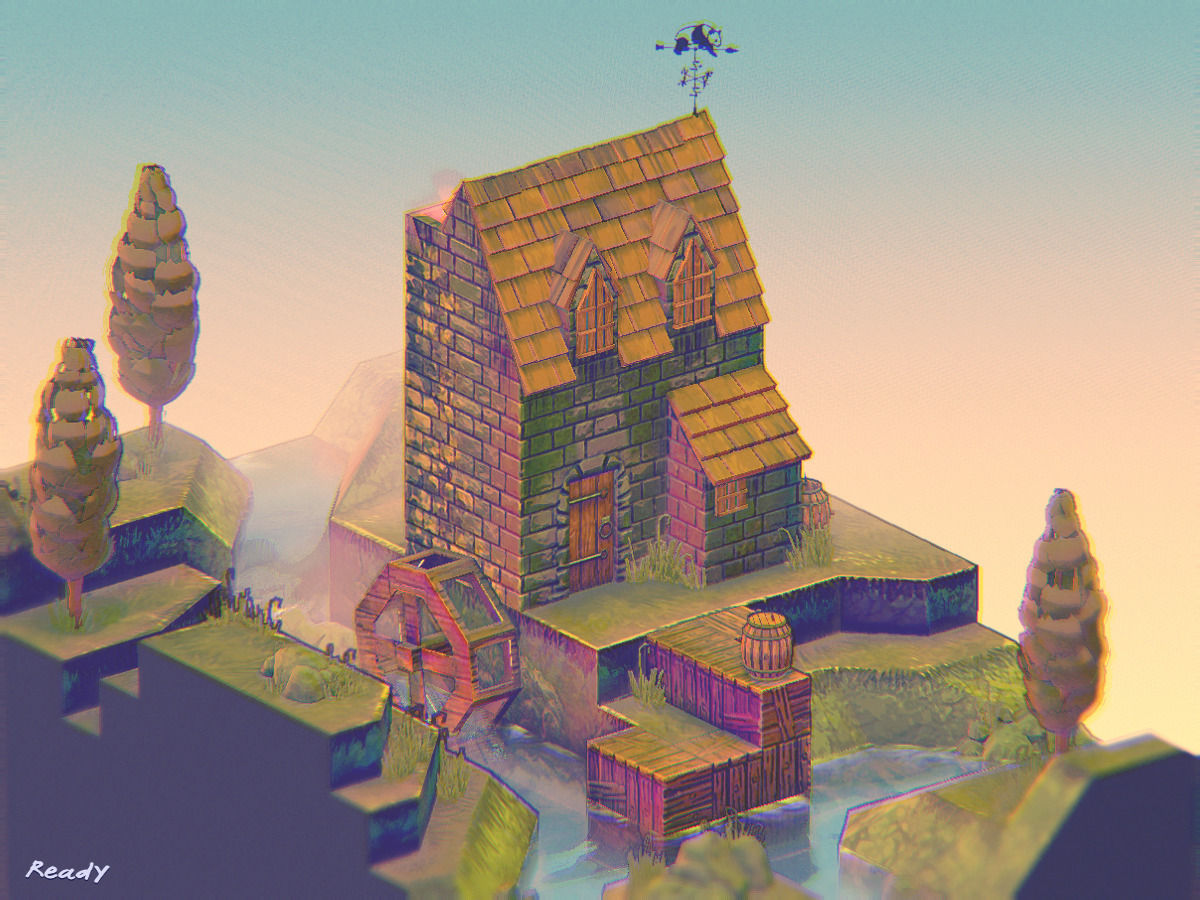}
  \end{subfigure}
\\
  \begin{subfigure}{0.323\textwidth}
    \includegraphics[width=\textwidth]{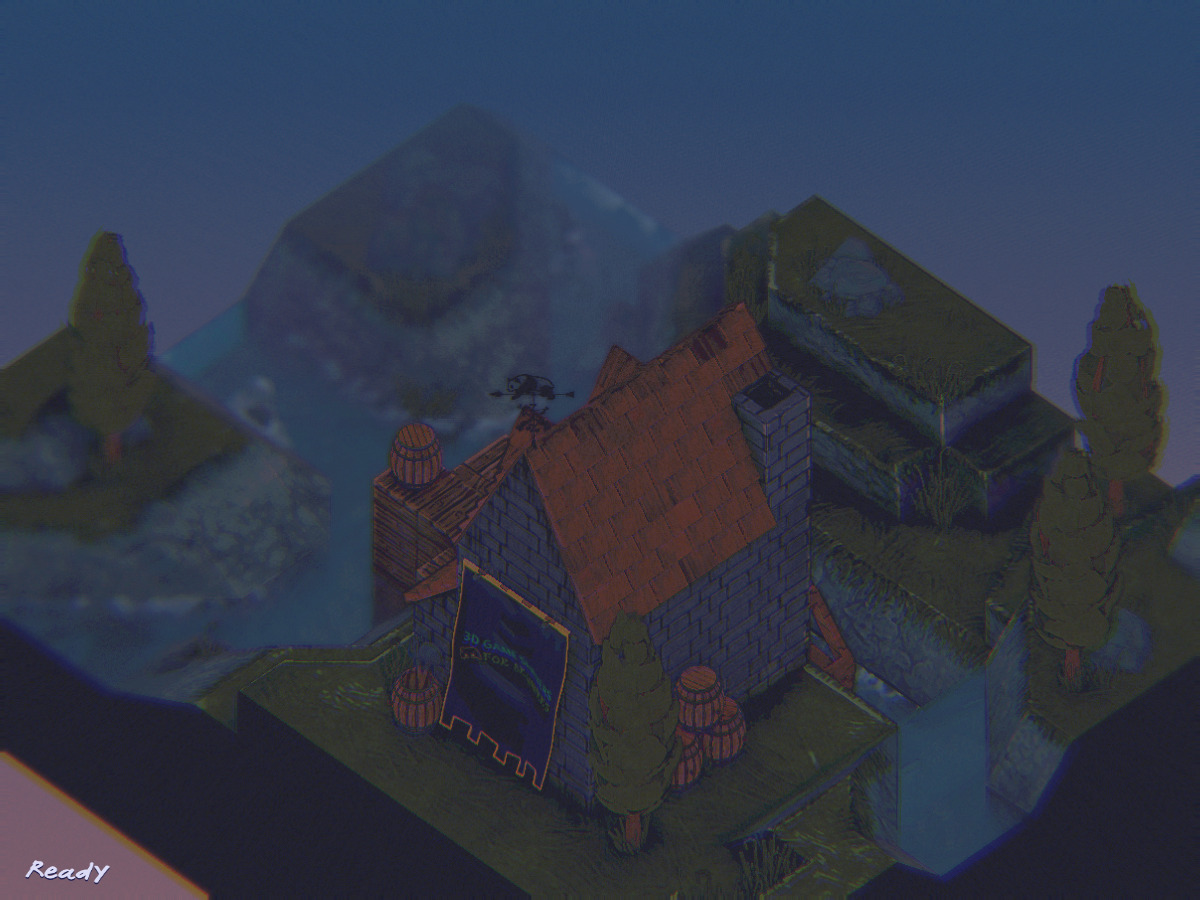}
    \caption{\mOptimal{} surrogate.}
    \label{fig:optimal-surrogate-renderings}
  \end{subfigure}
  \begin{subfigure}{0.323\textwidth}
    \includegraphics[width=\textwidth]{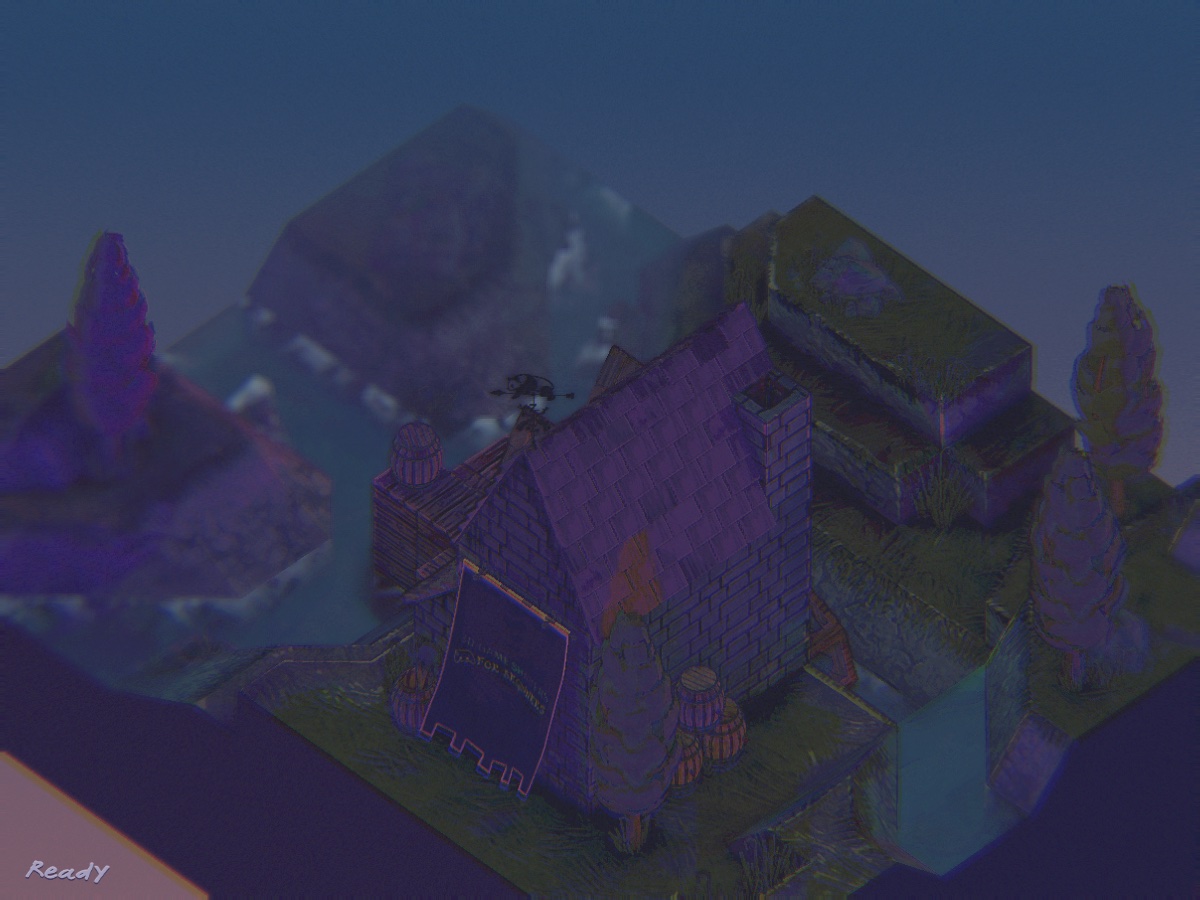}
    \caption{Frequency-based surrogate.}
    \label{fig:frequency-surrogate-renderings}
  \end{subfigure}
  \begin{subfigure}{0.323\textwidth}
    \includegraphics[width=\textwidth]{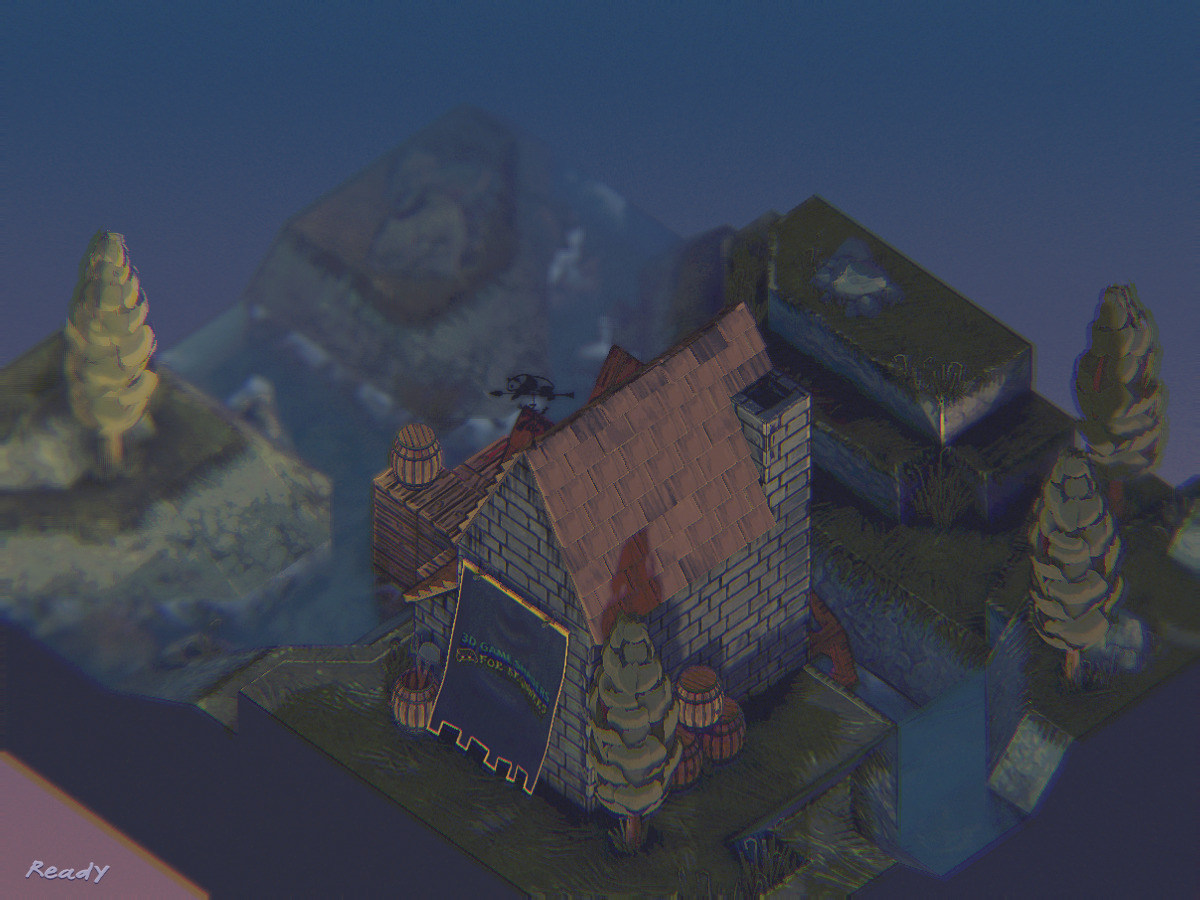}
    \caption{Uniform surrogate.}
    \label{fig:uniform-surrogate-renderings}
  \end{subfigure}
\end{center}
\caption{Ground-truth (top) and surrogate renderings (bottom) of scenes generated by the renderer.}
\label{fig:surrogate-renderings}
\end{figure*}

\section{Renderer Demonstration}
\label{sec:renderer}

In this section we present a case study of our
\moptimal{} sampling results and
complexity analysis.
The program under study is a demonstration 3D renderer~\citep{renderer}, \NA{such as forms the core} of a graphics rendering pipeline for a movie or 3D game engine~\citep{renderman,real_time_3d_rendering}.
\Cref{fig:base-day,fig:top-night} show scenes that the renderer generates.
We demonstrate that the sampling and analysis techniques in \Cref{sec:formalism,sec:language} consistently result in more accurate surrogates than those trained using baseline distributions (the frequency distribution of paths and~the~uniform~distribution).

Compared to training surrogates on the frequency distribution of paths, \moptimal{} sampling decreases error by $17\%$.
Compared to training on the uniform distribution of paths, \moptimal{} sampling decreases error by $44\%$.
These improvements in error correspond to perceptual improvements in the generated images, as shown in \Cref{fig:optimal-surrogate-renderings,fig:frequency-surrogate-renderings,fig:uniform-surrogate-renderings}.

\begin{table*}
  \caption{Top: the identifier, lines of code, complexity, and description of each path present in our datasets.
    Bottom: the distribution (abbreviated distr.) of each path across each dataset: the frequency (Freq.) of each observed path, and the \moptimal{} sampling rate (Com.) of that path.}
  \label{tab:renderer-paths-app}

  \begin{center}
    \resizebox{\textwidth}{!}{
    \begin{tabular}{cc|ccccccccc}
      \toprule

      \multicolumn{2}{c|}{\textbf{Path}}
& \textbf{lrrllr} & \textbf{lrrlrl} & \textbf{lrrlrr} & \textbf{lrrrlr} & \textbf{lrrrrl} & \textbf{lrrrrr} & \textbf{rrrllr} & \textbf{rrrlrl} & \textbf{rrrlrr} \\
      \midrule
      \multicolumn{2}{c|}{\textbf{Lines of Code}} & 17 & 17 & 17 & 18 & 18 & 18 & 17 & 17 & 17 \\
      \multicolumn{2}{c|}{\textbf{Complexity}} &
                                                 6115 & 5806 & 6272 & 6401 & 6084 & 6562 & 8804 & 8433 & 8993 \\
      \multicolumn{2}{c|}{\multirow{2}{*}{\textbf{Description}}}
        & \multirow{2}{1.2cm}{\centering Twilight \\ Water} & \multirow{2}{1.2cm}{\centering Twilight \\ Smoke} & \multirow{2}{1.2cm}{\centering Twilight \\ Solids}
        & \multirow{2}{1.2cm}{\centering Nighttime \\ Water} & \multirow{2}{1.2cm}{\centering Nighttime \\ Smoke} & \multirow{2}{1.2cm}{\centering Nighttime \\ Solids}
        & \multirow{2}{1.2cm}{\centering Daytime \\ Water} & \multirow{2}{1.2cm}{\centering Daytime \\ Smoke} & \multirow{2}{1.2cm}{\centering Daytime \\ Solids}
      \\
      \multicolumn{2}{c|}{}
      \\

      \toprule
      \textbf{Dataset} & \multicolumn{1}{c|}{\textbf{Distr.}}
                         & \textbf{lrrllr} & \textbf{lrrlrl} & \textbf{lrrlrr} & \textbf{lrrrlr} & \textbf{lrrrrl} & \textbf{lrrrrr} & \textbf{rrrllr} & \textbf{rrrlrl} & \textbf{rrrlrr}
      \\
      \midrule

Front Day & Freq. &  &  &  &  &  &  & 5.0\% & 7.9\% & 87.1\%\\
& Com. &  &  &  &  &  &  & 11.0\% & 14.7\% & 74.3\%\\
\midrule
Front Night & Freq. &  &  &  & 5.0\% & 7.9\% & 51.4\% &  &  & 35.6\%\\
& Com. &  &  &  & 8.9\% & 11.9\% & 42.4\% &  &  & 36.9\%\\
\midrule
Top Day & Freq. &  &  &  &  &  &  & 6.7\% & 13.1\% & 80.1\%\\
& Com. &  &  &  &  &  &  & 12.9\% & 19.8\% & 67.4\%\\
\midrule
Top Night & Freq. & 0.16\% & 0.06\% & 1.2\% & 0.3\% & 0.1\% & 2.4\% & 6.3\% & 12.9\% & 76.5\%\\
& Com. & 0.87\% & 0.45\% & 3.3\% & 1.4\% & 0.7\% & 5.3\% & 11.1\% & 17.7\% & 59.2\%\\
\midrule
Front & Freq. &  &  &  & 2.5\% & 4.0\% & 25.7\% & 2.5\% & 4.0\% & 61.4\%\\
& Com. &  &  &  & 5.2\% & 7.0\% & 24.9\% & 5.8\% & 7.8\% & 49.4\%\\
\midrule
Top & Freq. & 0.08\% & 0.03\% & 0.6\% & 0.2\% & 0.1\% & 1.2\% & 6.5\% & 13.0\% & 78.3\%\\
& Com. & 0.56\% & 0.29\% & 2.1\% & 0.9\% & 0.5\% & 3.5\% & 11.7\% & 18.4\% & 62.1\%\\
\midrule
Day & Freq. &  &  &  &  &  &  & 5.9\% & 10.5\% & 83.6\%\\
& Com. &  &  &  &  &  &  & 11.9\% & 17.4\% & 70.7\%\\
\midrule
Night & Freq. & 0.08\% & 0.03\% & 0.6\% & 2.7\% & 4.0\% & 26.9\% & 3.1\% & 6.5\% & 56.1\%\\
& Com. & 0.50\% & 0.26\% & 1.9\% & 5.2\% & 6.7\% & 24.4\% & 6.4\% & 10.3\% & 44.3\%\\
\midrule
All & Freq. & 0.04\% & 0.02\% & 0.3\% & 1.3\% & 2.0\% & 13.5\% & 4.5\% & 8.5\% & 69.8\%\\
        & Com. & 0.33\% & 0.17\% & 1.2\% & 3.4\% & 4.4\% & 16.0\% & 8.5\% & 12.8\% & 53.2\%\\

      \bottomrule
    \end{tabular}
    }
  \end{center}
\end{table*}

\begin{table*}
  \captionof{table}{Average decrease in error across all budgets from using \moptimal{} sampling compared to baselines on each dataset (higher values means \moptimal{} sampling has lower error).}
  \label{fig:renderer-surrogate-errors-table}
  \begin{center}
      \begin{tabular}{c|ccccccccc|c}
      \toprule
\textbf{Baseline} & \textbf{\makecell{Front\\Day}} & \textbf{\makecell{Front\\Night}} & \textbf{\makecell{Top\\Day}} & \textbf{\makecell{Top\\Night}} & \textbf{Front} & \textbf{Top} & \textbf{Day} & \textbf{Night} & \textbf{All} & \textbf{Mean} \\\midrule
        Frequency &
                    5\% & -3\% & -1\% & 48\% & 3\% & 31\% & 2\% & 21\% & 27\% & 17\% \\
        Uniform &
        39\% & 31\% & 36\% & 40\% & 42\% & 61\% & 34\% & 52\% & 52\% & 44\% \\
      \end{tabular}
  \end{center}
\end{table*}

\subsection{Program Under Study}

The full renderer program is a 2750 lines-of-code C++ program, which invokes 38 different GLSL shader programs totaling 2446 lines of code.
We learn a surrogate of a section of one core shader, totaling 60 lines of code.%
\footnote{Lines 278 through 337 of \url{https://github.com/lettier/3d-game-shaders-for-beginners/blob/29700/demonstration/shaders/fragment/base.frag}.}
\Cref{listing:renderer-full-learnguage} in \Cref{app:renderer} presents the code for the renderer case study.

\paragraph{Input-output specification.}
This program is a shader which assigns colors to pixels in the image based on the scene geometry, materials, lights, and other properties.
The program is called for each pixel that is rendered in the image.
Each invocation the of program takes as input a set of 11 fixed-size vectors, totaling 35 inputs.
The program returns as output a set of 4 fixed-size vectors, totaling 8 outputs. %
These outputs represent two RGBA colors, the first representing the base color of the pixel, and the second representing the color and intensity of a specular map~at~that~pixel.

\paragraph{Scenes and datasets.}
We evaluate the renderer on four different scenes, which we combine into nine different datasets.
\Cref{fig:base-day,fig:top-night} present two of the four different scenes under consideration; the four scenes are all combinations of views from the front and top, during the day and night.
We combine these scenes into nine datasets: a dataset with each scene, a dataset combining each scene from each angle (front day and front night, top day and top night), a dataset combining each scene from each time of day, and a dataset combining all scenes.
\Cref{fig:renderer-scenes-app} in \Cref{app:renderer} presents the full set~of~scenes~under~study.

\paragraph{Paths.}
The program is a conjunction of 48 different paths, 9 of which are exercised by the renderer.
The top part of \Cref{tab:renderer-paths-app} presents statistics about the paths under study, showing the identifier (a trace of \texttt{l} and \texttt{r} characters denoting which branch of each if statement the path takes), the lines of code in the corresponding trace, and the complexity of the corresponding trace according to the analysis in \Cref{sec:analysis}.
The paths are broken up into a path for rendering smoke particles from the chimney, water particles in the river, and the solids of the ground and house.
Each set of paths is duplicated for twilight, nighttime, and daytime.
Within each time of day, the smoke paths are the least complex, followed by water then solids.
Across different times, twilight paths are the least complex, followed by nighttime then daytime.

\begin{figure}
  \begin{center}
    \begin{subfigure}[b]{0.32\textwidth}
      \centering
      \includegraphics[width=\textwidth]{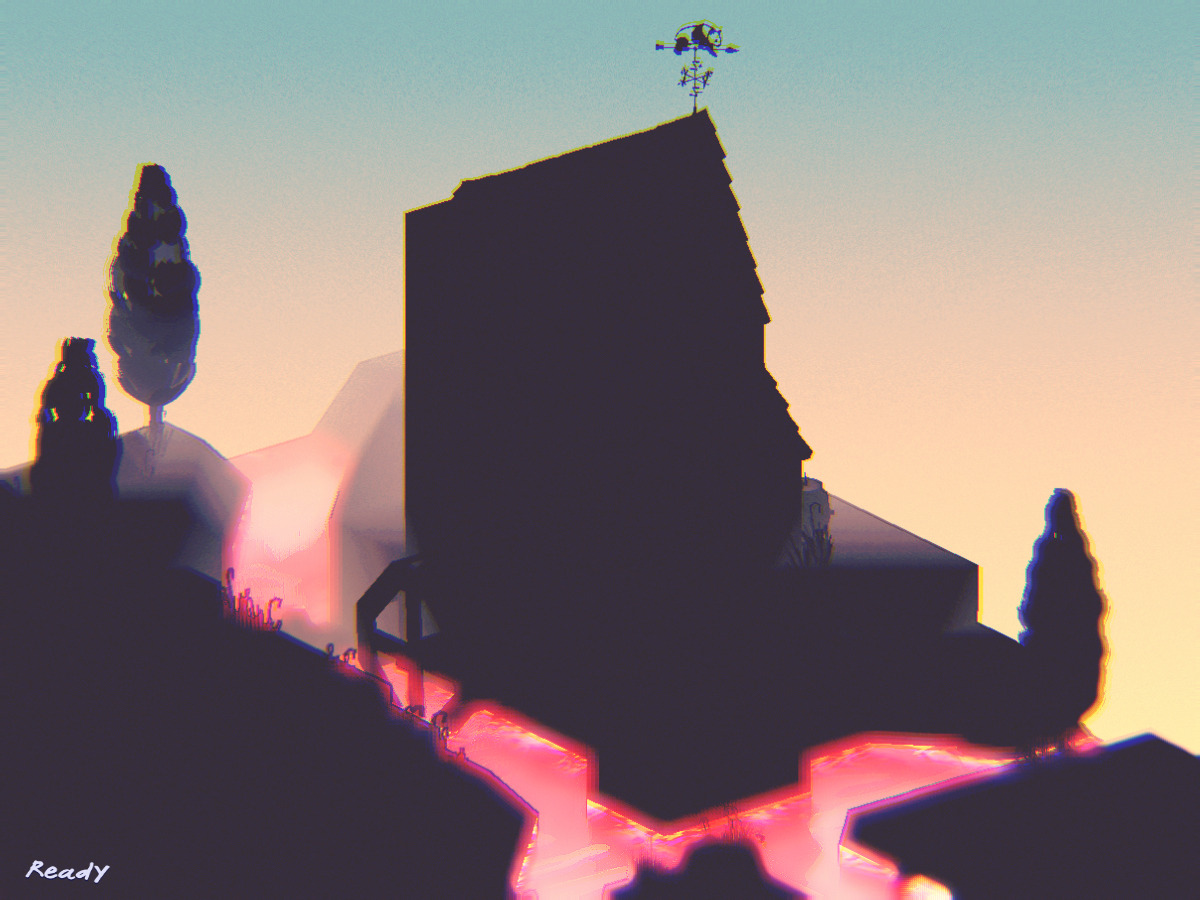}
      \caption[daytime-water]{Water path.}
      \label{fig:daytime-water}
    \end{subfigure}
    \begin{subfigure}[b]{0.32\textwidth}
      \centering
      \includegraphics[width=\textwidth]{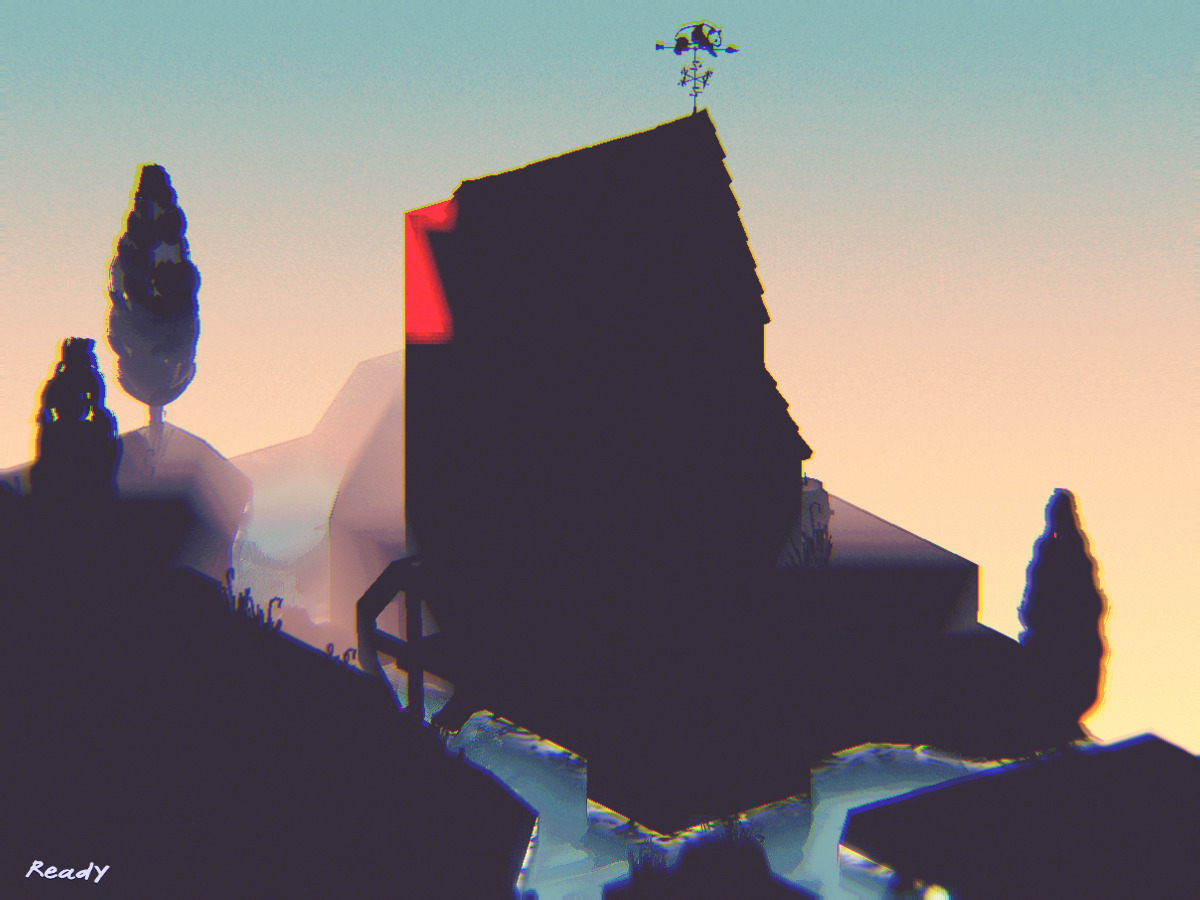}
      \caption[daytime-smoke]{Smoke path.}
      \label{fig:daytime-smoke}
    \end{subfigure}
    \begin{subfigure}[b]{0.32\textwidth}
      \centering
      \includegraphics[width=\textwidth]{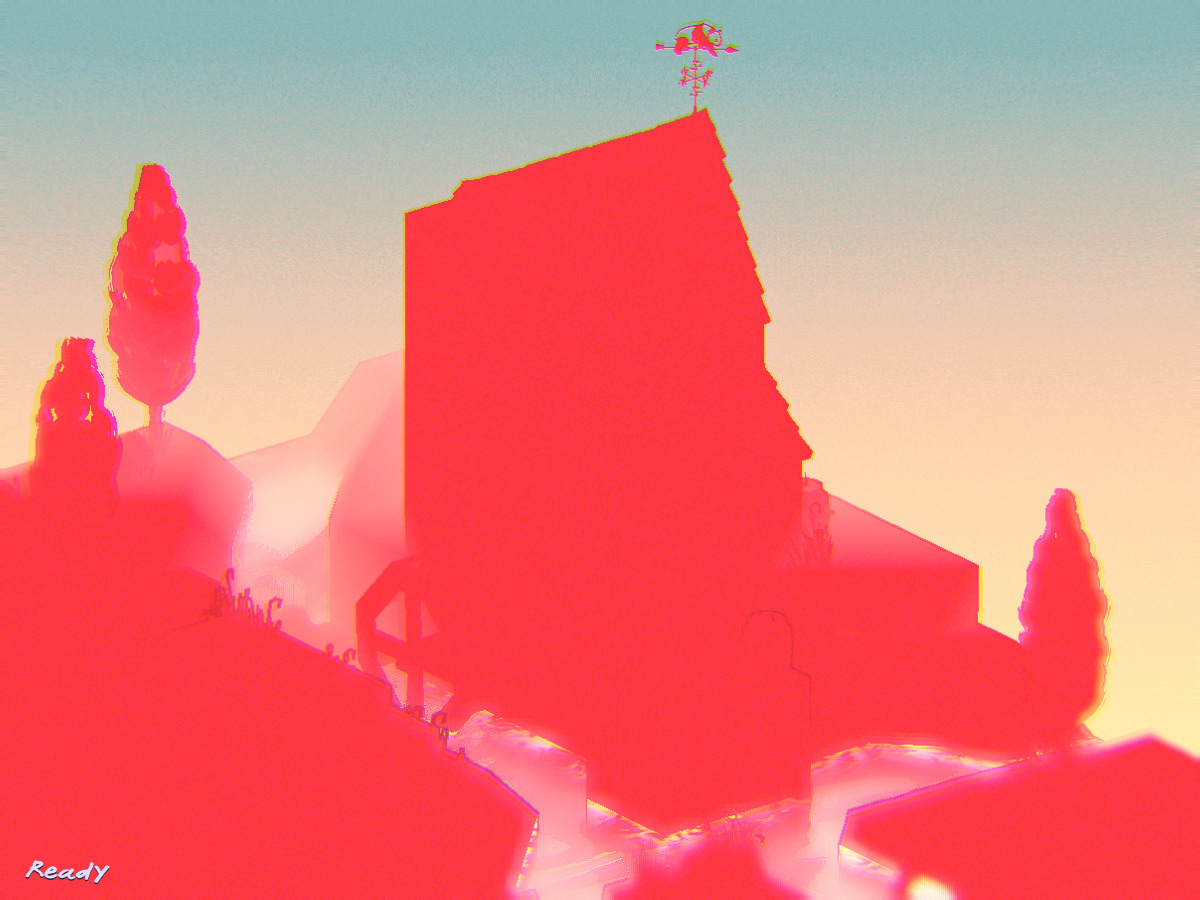}
      \caption[daytime-solids]{Solids path.}
      \label{fig:daytime-solids}
    \end{subfigure}
  \end{center}
  \caption{Daytime scene with each different path highlighted red, and all others black.}
  \label{fig:path-image-comparison-app}
\end{figure}

\Cref{fig:path-image-comparison-app} shows side-by-side comparisons of the three classes of paths: water, smoke, and solids.
In each of these images, one path returns red for all pixels while the other paths return black for all pixels.
The base scene is the front daytime scene in \Cref{fig:base-day}.

\Cref{tab:renderer-paths-app} also presents the observed distribution and the \moptimal{} distribution of paths for each dataset.
In general, the twilight paths are rarer than the nighttime paths, which are rarer than the daytime paths: this is because data collection for the nighttime scenes extends through twilight and into the morning.
For all datasets, the smoke paths are rarer than the water paths, which are in turn rarer than the solids paths; this is purely due to the scene geometry.

\subsection{Surrogate Training and Deployment Methodology}
To create and deploy a surrogate of the renderer, we train a surrogate of each path,
then create a stratified surrogate which branches on the set of path conditions and applies~the~corresponding~surrogate.

Our goal is to compare the errors achieved by training on the \moptimal{} distribution of paths against those of baseline distributions of paths.
We compare the approaches across different training datasets, different total numbers of training data points, and evaluating across different evaluation sets, all with multiple trials.

For the design of each surrogate, we use a simple MLP architecture with a single hidden layer of 512 units and a ReLU activations.
This architecture matches that of \citet{agarwala_monolithic_2021}, except using 512 rather than 1000 hidden units (we found that the accuracy of each path surrogate plateaued by 512 hidden units).
We train the surrogate using the Adam optimizer with a learning rate of $0.0001$ and a batch size of 256 for 50,000 steps.
We run 5 trials of all experiments, and report the error as an arithmetic mean when reported in isolation for a given surrogate (e.g., as in \Cref{fig:renderer-surrogate-errors}) and a geometric mean error when comparing relative error rates across different settings (e.g., as in the headline error improvement numbers in \Cref{fig:renderer-surrogate-errors-table}).

\subsection{Surrogate Errors}
\label{sec:renderer-errors}
\Cref{fig:renderer-surrogate-errors-table} presents the geometric mean decrease in error of using \moptimal{} path sampling compared to each baseline, on each dataset.
Across most datasets, \moptimal{} path sampling results in lower error than both frequency-based path sampling and uniform path sampling.
On datasets with few paths (Front Day) and in which all paths are well represented (minimum 5\% frequency), the gap is minimal, and frequency-based path sampling matches or outperforms \moptimal{} path sampling.
On datasets with more and rarer paths (e.g., Top Night), the gap widens and \moptimal{} path sampling outperforms frequency-based path sampling; we discuss this phenomenon in \Cref{sec:analysis-complexity-success}.
On all datasets, \moptimal{} path sampling outperforms uniform path sampling.

\begin{figure}
  \begin{center}
    \begin{subfigure}[b]{0.49\textwidth}
      \centering
      \includegraphics[width=\textwidth]{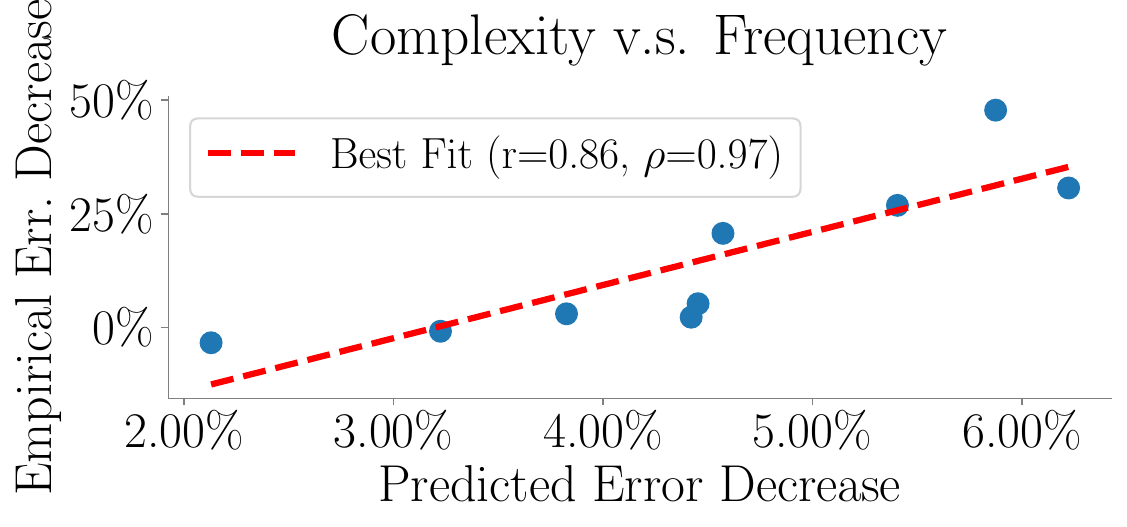}
      \caption{Correlation between predicted and empirical surrogate error decrease between surrogates trained with complexity-guided path sampling compared to frequency-based path sampling.}
    \end{subfigure}
    \hfill
    \begin{subfigure}[b]{0.49\textwidth}
      \centering
      \includegraphics[width=\textwidth]{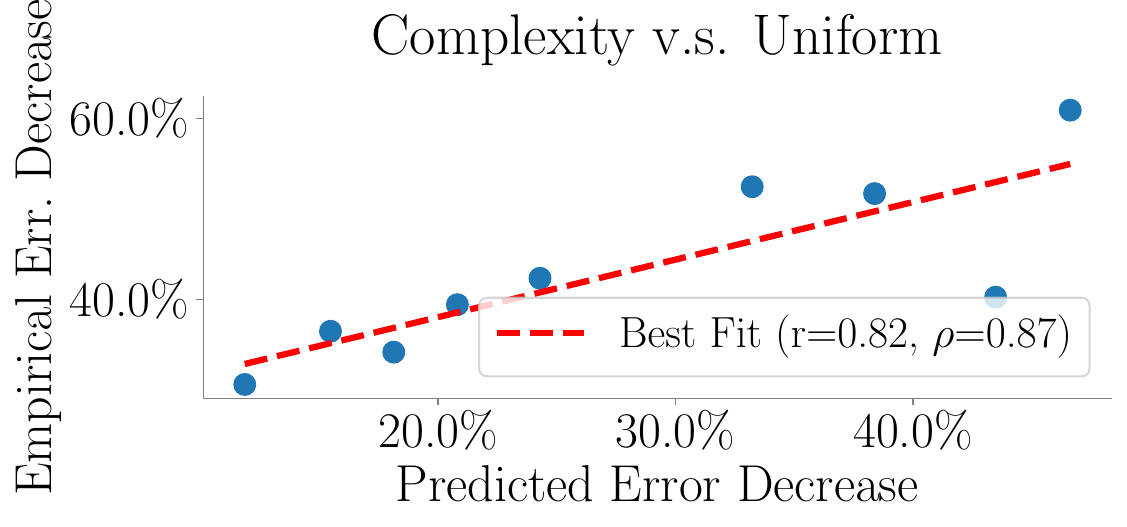}
      \caption{Correlation between predicted and empirical surrogate error decrease between surrogates trained with complexity-guided path sampling compared to uniform path sampling.}
    \end{subfigure}
  \end{center}
  \caption{Correlation between predicted and empirical surrogate error decreases for the renderer case study.}
  \label{fig:renderer-surrogate-correlation}
\end{figure}

\Cref{fig:renderer-surrogate-correlation} presents the correlation between the predicted error (\Cref{eq:predicted-error}) and the observed empirical error for each dataset, showing a strong correlation.
The left plot shows this correlation for surrogates trained with frequency-based path sampling, and the right plot shows this correlation for surrogates trained with uniform path sampling.
The x axis is the decrease in predicted error (specifically, the mean percentage error between the predicted error for complexity-guided sampling and the sampling method in the plot), and the y axis is the decrease in empirical error (the mean percentage error between the error observed from complexity-guided sampling and the sampling method in the plot).
Each point represents a different dataset (e.g., front-day, top-night, etc.).
The red dotted line shows the line of best fit.
For the frequency-based surrogates, the Pearson correlation is $r=0.86$ and the Spearman correlation is $\rho=0.97$.
For the uniform surrogates, the Pearson correlation is $r=0.82$ and the Spearman correlation is $\rho=0.87$.

\begin{figure}
  \begin{center}
    \includegraphics[width=\textwidth]{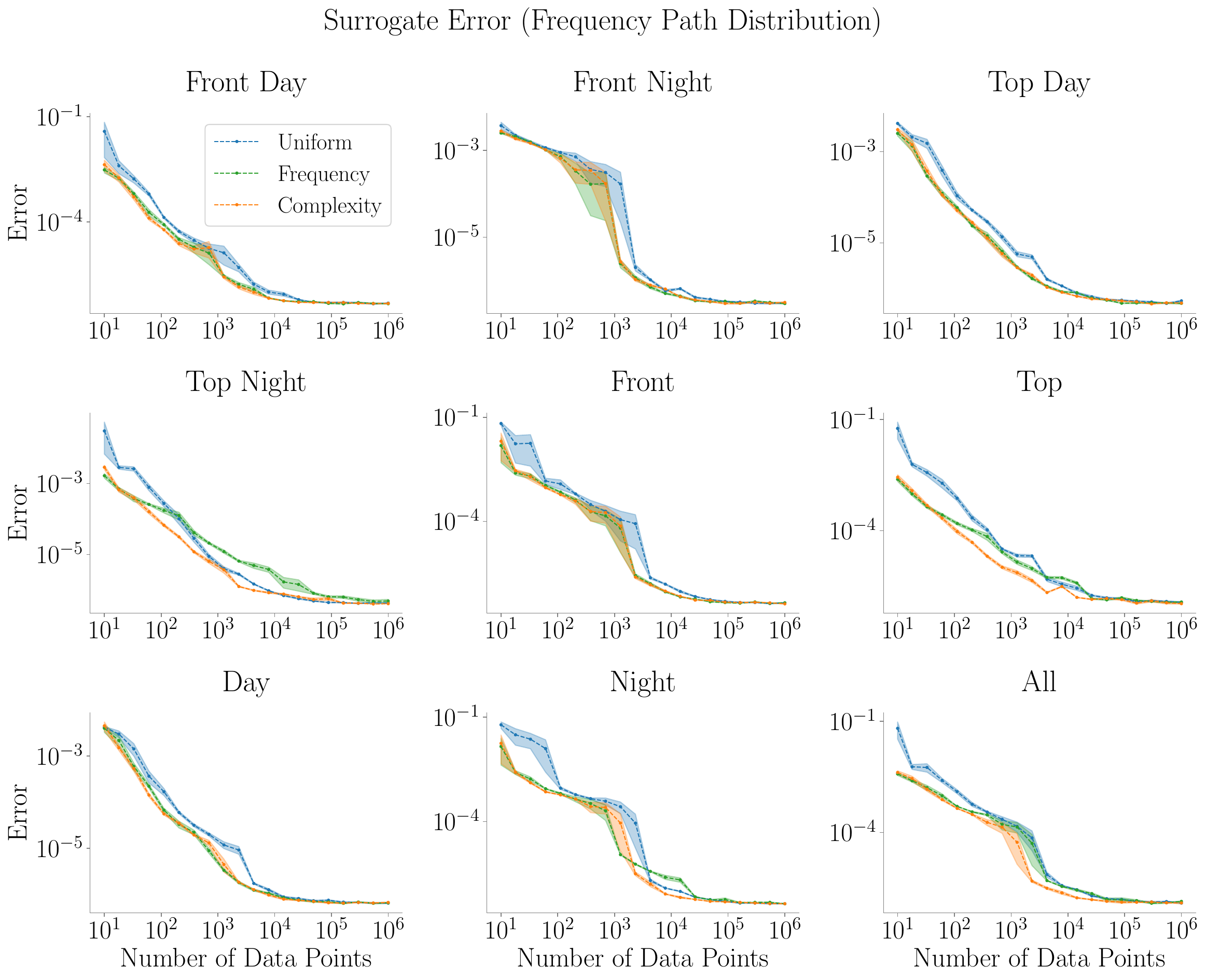}
  \end{center}
  \caption{Errors of stratified surrogates of each dataset.}
  \label{fig:renderer-surrogate-errors}
\end{figure}

\Cref{fig:renderer-surrogate-errors} presents the error of surrogates on each dataset.
Each plot shows the error for a different dataset.
Each plot has three different lines, respectively showing the error of each surrogate training distribution (\moptimal{}, frequency-based, and uniform).
Each x axis is the total training data budget.
Each y axis is the error of the resulting stratified surrogate.

For a given dataset budget, our complexity-guided sampling approach results in lower error than baseline sampling approaches.
Generally, increasing this dataset budget also results in lower error for all approaches.
These two approaches to decreasing surrogate error (better sampling techniques and sampling more data) are not in conflict with each other.

In \Cref{fig:renderer-surrogate-errors}, the sampling approaches converge in error with large dataset budgets.
This convergence is due to our evaluation methodology: following the convention of prior work which established these bounds~\citep{arora_fine-grained_2019,agarwala_monolithic_2021}, we use a fixed width for all neural networks, resulting in neural networks that saturate in error with large datasets.
An alternative methodology would be to grow the width of the neural network along with the size of the dataset, requiring a full hyperparameter search at each network scale.
With such a methodology, the errors would not plateau in the way that they do in \Cref{fig:renderer-surrogate-errors}.

\subsection{Visualization}
\label{sec:renderer-visualization}

\Cref{fig:surrogate-renderings} presents the renderings generated by the surrogates for the Front Day and the Top Night scene. %
These budgets correspond to the smallest budget that lead to a validation error less than $2\%$, which was qualitatively chosen as a threshold around which surrogate renders converge on the ground truth (i.e., the rendered scenes visually approach the quality of the original scene).

The top row shows the Front Day scene using surrogates trained on the Front Day dataset.
In this scene, the \moptimal{} and frequency-based surrogates result in similarly accurate renders, with the primary difference being that the frequency-based surrogate rendering has slightly darker green shadows on the front of the house.
This similarity is expected given the similar errors observed in \Cref{fig:renderer-surrogate-errors-table}.
Uniform sampling results in an~inaccurate~render, as expected~given~its~high~error.

The bottom row show the Top Night scene using surrogates trained on the dataset combining all scenes.
In this scene the \moptimal{} surrogate has the most accurate render, as expected given the errors observed in \Cref{fig:renderer-surrogate-errors-table}.
The frequency-trained surrogate colors everything much darker purple.
The uniform-trained surrogate in contrast colors everything much more tan.

In sum, the error improvements in \Cref{fig:renderer-surrogate-errors-table} correspond with improvements in~the~rendered~images.

}

\section{Related Work}

In this section we survey related work for each contribution.

\paragraph{Optimal stratified sampling.}
Optimal stratified sampling is a classic area in statistics~\citep{thompson_sampling_2012}.
Most work in this domain focuses on optimal parameter estimation, and uses stratified sampling to reduce the variance of estimates by ensuring sufficient independent samples are taken from each subpopulation.
Our approach is novel in the assumptions we make for training stratified surrogates of programs, and in the specific sample complexity bounds we base our results on.

\citet{santner_design_2018} survey sampling techniques for computer experiments.
Chapter 5.2.3 discusses stratified random sampling in particular, showing optimality criteria for sampling for unbiased estimators.
These approaches are generic for minimizing the variance of estimators, and do not consider specifically training a neural network.
These approaches also do not consider the different complexity of different strata.

\citet{cortes_region_2019} present an active learning approach for learning in the regime where the input space is partitioned into separate regions (strata, using our terminology) and a separate hypothesis (surrogate) is trained of each, and derive a similar allocation of data points.
This approach has several differences from our approach.
First, it assumes a different form for sample complexity and derives correspondingly different sampling bounds than ours.
The definition of complexity ($\zeta$ in our formalism) that \citeauthor{cortes_region_2019} use is a function of the number of hypotheses in the hypothesis class, the total number of data points used, and the number of data points for a given stratum that have been queried thus far; it is not a function of any complexity metric of the function being learned.
More concretely, \citeauthor{cortes_region_2019}'s approach assumes a small, finite hypothesis class (the set of possible outputs of the training algorithm) of binary classifiers, and has runtime proportional to the size of the hypothesis class, requires samples proportional to the log of the size of the hypothesis class, and bounds the error of the result as a function of the log of the size of the hypothesis class.
In their evaluation, \citeauthor{cortes_region_2019} use a hypothesis class of a set of \num{3000} random hyperplanes.
However, this approach is not tractable when using a neural network as the hypothesis class: a neural network with \num{43000} 32-bit floating point weights (as in the case study in \Cref{sec:renderer}) induces a hypothesis class of size $10^{\num{414217}}$. This results in intractable runtime and large or meaningless bounds.
Beyond these distinctions, \citeauthor{cortes_region_2019}'s approach is also an active learning approach that determines whether or not to query a label of a given data point for an input stream of data points, whereas our approach operates offline.
\citeauthor{cortes_region_2019}'s approach is thus a better fit when learning stratified functions of unknown complexity (e.g., non-analytic functions) using a finitely sized hypothesis class (not a neural network), and is targeted at the online setting when given a sampler of the overall data distribution but not one for each stratum.
Our approach is a better fit when learning \textit{a priori} known stratified analytic functions with neural networks, and is targeted at the offline setting when given~a~sampler~for~each~stratum.

\paragraph{Sample complexity program analysis.}
Program analysis is a broad set of techniques to determine properties of programs~\citep{nielson_principles_1999,cousot_abstract_1977}.
Our analysis in \Cref{sec:analysis} is a novel nonstandard interpretation calculating the tilde, combined with a standard implementation of forward-mode automatic differentiation~\citep{wengert_simple_1964,griewank_derivatives_2008} and a standard symbolic execution which executes all paths in the program~\citep{king_symbolic_1976,cadar_klee_2008}.

\citet{bao_white_2012} present a program analysis that decomposes programs into continuous regions, with the goal of characterizing the sensitivity of each continuous region to input noise.
This analysis computes a different notion of complexity than ours, and does not represent the sample complexity of learning a surrogate of each region.

\citet{hoffmann_polynomial_2010} present a program analysis that calculates the algorithmic complexity of a program.
This complexity again does not lead to bounds on the sample complexity of learning a surrogate of the program.

\section{Assumptions and Limitations}

Our contributions in \Cref{sec:formalism,sec:language} make assumptions about the programs under study, the functions that those programs denote, and the surrogate training algorithms.
Here we document these assumptions and note possible failure modes for our techniques. %

\paragraph{Assumptions imported from prior work.}
Our sample complexity results are subject to all assumptions from the prior work that gives the sample complexity bounds for neural networks that we use~\citep{agarwala_monolithic_2021}.
These sample complexity bounds only apply to analytic functions.
They further assume that inputs come from the unit sphere; this does not match many practical applications, including those in \Cref{sec:evaluation}.
Finally, these sample complexity bounds assume that the neural network under study is a 2-layer, sufficiently wide neural network trained with SGD with an infinitely small step size, using a 1-Lipschitz loss function.

Despite these assumptions, \citet[Appendix~B.2]{agarwala_monolithic_2021} empirically verify that the sample complexity bounds hold.
We also show in \Cref{sec:example,sec:renderer} that the theoretical sample complexity bounds correlate with empirical sample complexity results.

\paragraph{\mOptimal{} sampling.}
The first assumption is that we know the distribution of inputs ahead of time $D(x)$, both in terms of the distribution of strata $\mdiststratum{i}$ and the distribution of inputs within a given stratum $D(x|s_i)$.
The second assumption is that optimizing the upper bound of the per-stratum loss results in a reasonable optimum for the combined surrogate.
The third is the assumption we make that $\forall i,j.\ \mprob_i=\mprob_j$, which we make to ensure a closed-form solution; this is not guaranteed to be optimal.

\paragraph{Convergence in the limit of strata.}
In the limit of infinite strata ($\lim_{c \to \infty}$), the complexity-guided sampling approach induced by \Cref{th:distributionsampling} converges to sampling each stratum with probability proportional to $\mleft(\mdiststratum{i}\mright)^{\frac{2}{3}}$ (see \Cref{note:convergence}).
In the limit, this distribution does not account for complexity.
However in practice the complexity still guides sampling.
\Cref{sec:renderer} evaluates an example with all complexities $\zeta(f_i) \geq 5899$ and $\delta=0.01$.
For the $\log(\delta_i^{-1})$ term to match the contribution of the complexity term there would need to be $\approx 10^{2600}$ strata; this example only has 9.
Thus while in the infinite limit of strata our approach is complexity-agnostic, in practice it is dominated by the complexity.

We note that this property necessarily occurs with any underlying PAC-style bound with a term that sums complexity and $\log\mleft(\frac{1}{\delta}\mright)$ (e.g., those of \citet{vapnik_convergence_1971} and \citet{valiant_theory_1984}): almost surely, paths are sampled with probability that does not depend on their complexity (see \Cref{note:worst-case-convergence}).

\paragraph{Program analysis.}
The main assumption here is that \citet{agarwala_monolithic_2021}'s algebra on tilde functions results in a sufficiently precise upper bound on the tilde.
This is not always the case, as discussed in \Cref{sec:precision}.
We also note that \lang{}'s program analysis could be make more precise in multiple well-known ways.
For instance, we could perform constant propagation, algebraic simplification, and automatic inference of constraints (which would be useful for \texttt{\color{blue}log} expressions).
We have excluded such extensions for the sake of simplicity of presentation.

\paragraph{Analysis compute cost.}
For a given path, computing the tilde and its derivative has essentially the same cost as executing the path twice.
Thus in the most pessimistic case this would allocate 2 more samples per path to the baseline approaches.
However, this pessimistic case assumes that sampling a program input is free, which it may not be: for example the renderer case study in \Cref{sec:renderer} requires executing the video game engine (including running physics simulations) to get a program input for the shader of which we learn a surrogate.

\paragraph{Stratified surrogates.}
We provide sample complexity bounds for constructing stratified surrogates, assuming that for a given program every path is a different function.
This assumes both that it is tractable to compute which stratum a given input resides in before applying the surrogate.
This evaluation-time stratum check must not preclude the use of the surrogate for its downstream task.
We therefore
adopt
a standard modeling assumption in the approximate computing literature: that precisely determining paths is an acceptable cost during approximate program execution~\citep{sampson_enerj_2011,carbin_rely_2013}.\footnote{
\ ``EnerJ ... prohibit[s] approximate values in conditions that affect control flow.''~\citep{sampson_enerj_2011}.
}\textsuperscript{,}\footnote{
\ ``Rely assumes that ... control flow branch targets \NA{are computed}
}

This also assumes that there are a tractable number of paths, which excludes programs with a large number of if statements or loops.
The assumption that there are a tractable number of paths is a common assumption among techniques like concolic testing~\citep{king_symbolic_1976,cute_sen_2005,cadar_klee_2008}.
Similar to prior work, we find that in practice the programs we evaluate only use a fraction of the syntactically possible paths (e.g., the Jmeint benchmark in \Cref{sec:evaluation} uses 18 out of 1728 possible paths).

\paragraph{Empirical evaluation limitations.}
We note two limitations in our empirical evaluations.
The first is that some evaluations are in the ultra-low-data regime, where rounding to an integer number of data samples affects the accuracy.
\NA{The second is that the $\delta$ parameter is set to an arbitrary value.}

\section{Conclusion}
We present an approach to allocating samples among strata to train stratified neural network surrogates of stratified functions.
We also present a programming language, \lang{}, in which all programs are learnable stratified functions and a program analysis to determine the complexity of learning surrogates of those programs.
Our results take a step towards a cohesive, end-to-end methodology for programming using surrogates of programs.

\begin{acks}                            %
  We would like to thank Alana Marzoev, Cambridge Yang, Charles Yuan, Ellie Cheng, Eric Atkinson, Jesse Michel, and the anonymous reviewers for their helpful comments and suggestions.

  This work was supported by the
  \grantsponsor{GS100000001}{National Science Foundation}{http://dx.doi.org/10.13039/100000001}
  (\grantnum{GS100000001}{CCF-1918839} and \grantnum{GS100000001}{CCF-2217064}),
  and the \grantsponsor{GS100000002}{Defense Advanced Research Projects Agency}{http://dx.doi.org/10.13039/100000185} (\grantnum{GS100000002}{\#HR00112190046}).
  Yi Ding is supported by the
  \grantsponsor{GS100000001}{National Science
    Foundation}{http://dx.doi.org/10.13039/100000001}
  under Grant No.~\grantnum{GS100000001}{2030859}
  to the Computing Research Association for the CIFellows Project.
  Any opinions, findings, and conclusions or recommendations expressed in this material are those of the authors and do not necessarily reflect the~views~of~the~sponsors.
\end{acks}

\bibliography{references}

\clearpage
\appendix

\nocite{kwon_transfer_2020,mendis_vemal_2019,munk_deep_2019}

\section{Complexity-Guided Sampling}
\label{app:formalism}

This appendix provides the proof for \Cref{th:distributionsampling} in \Cref{sec:formalism}. %

For convenience we duplicate the definitions of learnability.
A \emph{learnable function} is one such that:
\begin{equation}
 \forall \mdistribution{}, \mnat{}, \merror, \mprob.\, \exists n.\, \probability_{\msurrogate{} \sim \mlearner{}\mleft(\mfunction{}, \mdistribution{}, \mnat{}, \mloss{}\mright)}\mleft(\expectation_{x \sim D} \mleft[\mloss{}\mleft(\msurrogate{}\mleft(\minput{}\mright), \mfunction{}\mleft(\minput{}\mright)\mright)\mright] \leq \merror{} \mright) \geq 1 - \mprob
 \tag{\ref{eq:learnability}}
\end{equation}
Following \citet{arora_fine-grained_2019} and \citet{agarwala_monolithic_2021} we study functions and learning algorithms for which the relationship between $n$, $\epsilon$, and $\delta$ in \Cref{eq:learnability} is:
\begin{equation}
   \exists K.\, \mnat \leq K \mleft[\frac{\mcomplexity{\mfunction} + \log \mleft(\mprob^{-1}\mright)}{\merror^2}\mright]
 \tag{\ref{eq:sample-complexity}}
\end{equation}

We define a \emph{stratified function} $\mfunction$ as follows: \[
  \mfunction{}\mleft(\minput{}\mright) \triangleq \begin{cases}
    \mfunction{}_1\mleft(\minput{}\mright) & \text{if}\; \minput{} \in \mstratum{}_1 \\
    \vdots \\
    \mfunction{}_\mnstrata{}\mleft(\minput{}\mright) & \text{if}\; \minput{} \in \mstratum{}_\mnstrata{} \\
  \end{cases}
\]
where $\mnstrata{}$ is the number of strata, $\{\mstratum{}_i\}_{i=1}^c$ are strata,  $\forall i \neq j.\; \mstratum{}_i \cap \mstratum{}_j = \varnothing$, and $\cup_i \mstratum{}_i = \mInput{}$.

We define a \emph{stratified surrogate} $\msurrogate{}$ as a stratified function with components $\msurrogate{}_i$.

For a data distribution $\mdistribution{}$, let $\mdistribution{}\mleft(\minput{}\mright)$ be the probability that $\minput{}$ is sampled from $\mdistribution{}$, and $\mdistribution{}\mleft(\mstratum{}_i\mright)$ be the total probability mass of all data points within $\mstratum{}_i$ over $\mdistribution{}$ (i.e., $\mdistribution{}\mleft(\mstratum{}_i\mright)=\int_{\minput{}\in \mstratum{}_i}\mdistribution{}\mleft(\minput{}\mright)$).
Let $\mdistribution{}\mleft(\minput{} \middle| \mstratum{}_i\mright)$ be the probability of a data point $\minput{}$ sampled from $\mdistribution{}$ if $\minput\in \mstratum{}_i$.

Our goal is to learn a stratified surrogate $\msurrogate{}$ of a stratified function $\mfunction{}$, where each component function $\mfunction{}_i$ is learnable.
We are given a data distribution $\mdistribution{}$, a maximum sample budget $n$, a learning algorithm $\mlearner{}$, a loss function $\mloss{}$, and a failure probability $\mprob$.
As described in \Cref{sec:formalism}, we formalize this with the following optimization problem:
\begin{equation}
  \begin{split}
    \argmin_{\vec{n}}
    \expectation_{s_i \sim \mdiststratum{i}} \mleft[
    \sqrt{\frac{\mcomplexity{\mfunction_i} + \log \mleft(\mprob_i^{-1}\mright)}{\vec{\mnat}_i}}
    \mright]
    \text{~~such that~~}
    \sum_i \vec{\mnat{}}_i \leq n
  \end{split}
\end{equation}

\distributionsampling*

\vspace*{-1em}

\begin{proof}

  The task is to find $n_i$ for each surrogate $\msurrogate{}_i$ that find a minimal error $\epsilon$ using (using the upper bound in \Cref{eq:sample-complexity}), while also meeting the total sample size constraint:  $\sum_i n_i \leq n$.

  For convenience, define: \[
    p_i \triangleq \mleft(\mdiststratum{i}\mright)\sqrt{\mleft[\zeta\mleft(f_i\mright) + \log \mleft(\delta_i^{-1}\mright)\mright]}
  \]

  We must show the following KKT conditions:
  \allowdisplaybreaks[3]
  \begin{align*}
    \mathcal{L}\mleft(n_i, \mu_n, \mleft\{\mu_i\mright\}\mright) &= \expectation_{s_i \sim \mleft\{\mleft(\mdiststratum{i}\mright)\mright\}}\mleft[
                                                  \sqrt{\frac{1}{n} \mleft[\mcomplexity{\mfunction_i} + \log \mleft(\mprob_i^{-1}\mright)\mright]}
                                                  \mright]
                                                  + \mu_n \mleft(\sum_i n_i - n\mright)
                                                  + \sum_i \mu_i n_i \\
                                                &= \sum_i p_i n_i^{-\frac{1}{2}} + \mu_n \mleft(\sum_i n_i - n\mright) + \sum_i \mu_i n_i \\
    \textbf{Stationarity:} \hspace*{1em}&
\forall i.\,
     0 = -\frac{1}{2} p_i n_i^{-\frac{3}{2}} + \mu_n + \mu_i \\
    \textbf{Primal feasibility:} \hspace*{1em}& \sum n_i \leq n \text{~~and~~} \forall i.\, 0 \leq n_i \\
    \textbf{Dual feasibility:} \hspace*{1em}& 0 \leq \mu_n \text{~~and~~} \forall i.\, 0 \leq \mu_i \\
    \shortstack{\textbf{Complementary}\\ \textbf{slackness}}\textbf{:} \hspace*{1em}& 0 = \mu_n \mleft(n - \sum_i n_i\mright) \text{~~and~~} \forall i.\, \mu_i n_i = 0
  \end{align*}

  We now show that the following is a solution to the optimization problem:
  \begin{align*}
    n_i &= n\frac{p_i^{\frac{2}{3}}}{\sum_j p_j^{\frac{2}{3}}} \\
    \mu_n &= \mleft(\frac{1}{n} \sum_i \mleft(\frac{1}{2} p_i\mright)^{\frac{2}{3}}\mright)^\frac{3}{2} \\
    \mu_i &= 0
  \end{align*}

  \paragraph{Stationarity.}
  \begin{align*}
    -\frac{1}{2} p_i n_i^{-\frac{3}{2}} + \mu_n + \mu_i &= -\frac{1}{2} p_i \mleft(n\frac{p_i^{\frac{2}{3}}}{\sum_j p_j^{\frac{2}{3}}}\mright)^{-\frac{3}{2}} + \mleft(\frac{1}{n} \sum_j \mleft(\frac{1}{2} p_j\mright)^{\frac{2}{3}}\mright)^\frac{3}{2} + 0 \\
                                                        &= -\frac{1}{2} p_i p_i^{-1} n^{-\frac{3}{2}} \mleft(\sum_j p_j^{\frac{2}{3}}\mright)^{\frac{3}{2}} + \frac{1}{2} n^{-\frac{3}{2}} \mleft(\sum_j p_j^{\frac{2}{3}}\mright)^{\frac{3}{2}} \\
                                                        &= 0
  \end{align*}

  \paragraph{Primal feasibility.}
  \begin{align*}
    n_i &= n\frac{p_i^{\frac{2}{3}}}{\sum_j p_j^{\frac{2}{3}}} \geq 0 \hspace*{1em} \text{(if $n \geq 0$)} \\
    \sum_i n_i &= n\frac{1}{\sum_j p_j^{\frac{2}{3}}} \sum_i p_i^{\frac{2}{3}} = n \leq n
  \end{align*}

  \paragraph{Dual feasibility.}
  \begin{align*}
    \mu_n &= \mleft(\frac{1}{n} \sum_i \mleft(\frac{1}{2} p_i\mright)^{\frac{2}{3}}\mright)^\frac{3}{2} \geq 0 \\
    \mu_i &= 0 \geq 0
  \end{align*}

  \paragraph{Complementary slackness.}
  \begin{align*}
    \mu_n \mleft(n - \sum_i n_i\mright) &= \mu_n \mleft(n - \sum_i n\frac{p_i^{\frac{2}{3}}}{\sum_j p_j^{\frac{2}{3}}}\mright) = \mu_n \mleft(n - n\mright) = 0 \\
    \mu_i n_i &= 0
  \end{align*}

  We note that the objective is convex; therefore this is the globally optimal solution. %

  Expanding $p_i$, we have: \[
    n_i = n\frac{\mleft(\mleft(\mdiststratum{i}\mright) \sqrt{\zeta\mleft(f_i\mright) + \log \mleft(\delta_i^{-1}\mright)}\mright)^{\frac{2}{3}}}{\sum_j \mleft(\mleft(\mdiststratum{j}\mright) \sqrt{\zeta\mleft(f_j\mright) + \log \mleft(\delta_j^{-1}\mright)}\mright)^{\frac{2}{3}}}
    \]
\end{proof}

\begin{note}
  \label{note:convergence}
  Assuming $\exists Z.\, \forall i.\, \zeta\mleft(f_i\mright) \leq Z$, then with $n_i$ as above, $\lim_{c \to \infty} n_i = \frac{\mleft(\mdiststratum{i}\mright) ^\frac{2}{3}}{\sum_j \mleft(\mdiststratum{j}\mright) ^\frac{2}{3}}$.
\end{note}

\begin{note}
  \label{note:worst-case-convergence}
  For a given $\delta = 1-\prod_i\mleft(1-\delta_i\mright)$, all choices of $\delta_i$ result in most strata being dominated by the $\log \delta_i^{-1}$ term in the limit of infinite paths:
  \[
    \forall k.\, \lim_{c \to \infty}
    \mleft[
    \min_{\mleft\{\delta_i \mid i \in [c]\mright\}}
    \begin{matrix}
    \mathbb{E} \mleft[\indicator \mleft[\log \delta_i^{-1} > k\mright] \mright]
    \text{~such that~} 0 < \delta_i \leq 1 \text{~and~} 1 - \prod_i (1 - \delta_i) \geq \delta
    \end{matrix}
    \mright]
    = 1
  \]
\end{note}
\begin{proof}
  The $\delta_i$ constraint is equivalent to: \[
    \log (1 - \delta) \geq \sum_i \log \mleft(1 - \delta_i\mright)
  \]
  To minimize $\mathbb{E} \mleft[\indicator \mleft[\log \delta_i^{-1} > k\mright] \mright]$ subject to $1 - \prod (1 - \delta_i) \geq \delta$, we must set as many $\delta_i$ as small as possible without exceeding $\log \delta_i^{-1} > k$. To do this, we set $\delta_i = e^{-k}$ for as many as possible.
  However, because of the constraint, we can do this for no more than $\frac{\log\mleft(1 - \delta\mright)}{\log \mleft(1 - e^{-k}\mright)}$ strata; the rest must exceed $\log \delta_i^{-1} > k$.
  Thus, $    \forall k.\, \lim_{c \to \infty}
    \min_{\mleft\{\delta_i\mright\}}
    \mathbb{E} \mleft[\indicator \mleft[\log \delta_i^{-1} > k\mright] \mright] = 1
    $.
\end{proof}

\section{Complexity Calculus}
\label{app:complexity-algebra}

This section contains proofs for the sample complexity calculus.
The theorems and proofs are extensions of those of \citet{agarwala_monolithic_2021}.

\complexity*
\begin{proof}
  The proof is similar to that of Theorem 8 in \citet{agarwala_monolithic_2021}, with two deviations.

  First, we note that Equation 15 in \citet{agarwala_monolithic_2021} has a typo, which we correct below:
  \begin{equation*}
    \sqrt{M_g} = \sum_k k\mleft|a_k\mright| \mleft\Vert\mathbf{\beta}_k\mright\Vert^k_2 = \mleft\Vert\mathbf{\beta}_k\mright\Vert_2 \sum_{k=1}^\infty k\mleft|a_k\mright| \mleft\Vert\mathbf{\beta}_k\mright\Vert^{k-1}_2
  \end{equation*}
  Thus, the $\mleft|a_0\mright|$ term is not necessary, meaning that the $\tilde{g}\mleft(0\mright)$ term in Equation 14 in \citet{agarwala_monolithic_2021} is not necessary.

  Second, we note that the proof of Corollary 3 in \citet{agarwala_monolithic_2021} involves appending a 1 to the neural network input; thus, the complexity of learning any function $f\mleft(\vec{x}\mright)$ is the same as learning the complexity of a function $f\mleft(\vec{x}\Vert1\mright)$.

  Other than these two modifications, the proof is identical to Theorem 8 in \citet{agarwala_monolithic_2021}.
\end{proof}

We now prove each component of \Cref{lem:tilde-algebra}:
\tildealgebra*

Aspects of these are shown without proof in \citet{agarwala_monolithic_2021}.
We flesh out the proof (using approaches from \citet{agarwala_monolithic_2021}), and further prove bounds on the tilde derivatives.
\begin{lemma}
  \label{lem:tilde-addition}
  If $f\mleft(\vec{x}\mright) = g\mleft(\vec{x}\mright) + h\mleft(\vec{x}\mright)$, then for $x \geq 0$, $\tilde{f}(x) \leq \tilde{g}(x) + \tilde{h}(x)$ and $\tilde{f}'(x) \leq \tilde{g}'(x) + \tilde{h}'(x)$.
\end{lemma}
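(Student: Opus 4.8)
The plan is to work directly at the level of power-series coefficients, where the sum rule is almost immediate and the only real content is the triangle inequality combined with the sign condition $x \geq 0$. First I would write $g$ and $h$ in their analytic representations as in \Cref{eq:tilde-definition-multivar}, so that a representation of $f = g + h$ is obtained by concatenating the two families of terms order by order. Grouping the terms of each fixed order $k$ and merging any coincident monomials, the order-$k$ coefficient of $f$ is the sum of the corresponding coefficients of $g$ and $h$; by the triangle inequality, the magnitude that enters the tilde at order $k$ is therefore at most the sum of the two individual magnitudes.

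The tilde is, by definition, a univariate power series in $x$ with nonnegative coefficients. Since we evaluate at $x \geq 0$, every monomial $x^k$ is nonnegative, so the coefficientwise bound from the previous step is preserved when summing over all $k$. This yields $\tilde{f}(x) \leq \tilde{g}(x) + \tilde{h}(x)$ for all $x \geq 0$.

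For the derivative bound I would differentiate the tilde series termwise, which is valid inside the radius of convergence. Writing $\tilde{f}'(x) = \sum_{k} k\,(\text{coeff}_k)\, x^{k-1}$, the new coefficients $k \cdot (\text{coeff}_k)$ are again nonnegative and $x^{k-1} \geq 0$ for $x \geq 0$, so the same coefficientwise triangle-inequality bound propagates through the summation and gives $\tilde{f}'(x) \leq \tilde{g}'(x) + \tilde{h}'(x)$.

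The main obstacle is more a bookkeeping point than a genuine difficulty: the multivariate tilde in \Cref{eq:tilde-definition-multivar} is defined relative to a chosen analytic representation, so I must check that forming a representation of $f$ by concatenating those of $g$ and $h$ (and then reducing coincident monomials) is consistent with the definition, and that the triangle inequality is applied precisely where like terms are merged. Because concatenation yields a valid representation whose tilde equals exactly $\tilde{g} + \tilde{h}$, while merging coincident terms can only shrink coefficient magnitudes, the desired upper bound holds regardless of which representation is taken as canonical.
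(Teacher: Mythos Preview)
Your proposal is correct and follows essentially the same approach as the paper: the paper defines $V_{f,k} = V_{g,k} \sqcup V_{h,k}$ (the disjoint union), computes $\tilde f$ for this representation, and obtains $\tilde f(x) = \tilde g(x) + \tilde h(x)$ exactly (and similarly for the derivative), then remarks that other representations of $f$ could only make $\tilde f$ smaller. Your extra step of ``merging coincident monomials'' and invoking the triangle inequality is not needed in the multivariate setting---concatenation already gives equality---but it is correct and, as you note in your final paragraph, is precisely what would be required in the univariate case where the power-series representation is unique.
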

\begin{proof}
  Given:
  \begin{align*}
    g\mleft(\vec{x}\mright) &= \sum_k \sum_{v \in V_{g,k}} a_{g,v} \prod_{i=1}^k \beta_{g,v,i} \cdot \vec{x} \\
    h\mleft(\vec{x}\mright) &= \sum_k \sum_{v \in V_{h,k}} a_{h,v} \prod_{i=1}^k \beta_{h,v,i} \cdot \vec{x}
  \end{align*}
  Define:
  \begin{align*}
    V_{f,k} &= V_{g,k} \sqcup V_{h,k} \hspace*{3em} \text{(where $\sqcup$ is the disjoint union)} \\
    a_{k,f,v} &= \begin{cases} a_{g,v} & v \in V_{g,k} \\  a_{h,v} & \text{otherwise} \end{cases} \\
    \beta_{f,v,i} &= \begin{cases} \beta_{g,v,i} & v \in V_{g,k} \\  \beta_{h,v,i} & \text{otherwise} \end{cases}
  \end{align*}
  Then:
  \allowdisplaybreaks[3]
  \begin{align*}
    f\mleft(\vec{x}\mright) &= g\mleft(\vec{x}\mright) + h\mleft(\vec{x}\mright) \\
         &= \sum_k \sum_{v \in V_{g,k}} a_{g,v} \prod_{i=1}^k \beta_{g,v,i} \cdot \vec{x} + \sum_k \sum_{v \in V_{h,k}} a_{h,v} \prod_{i=1}^k \beta_{h,v,i} \cdot \vec{x} \\
         &= \sum_k \sum_{v \in V_{f,k}} a_{f,v} \prod_{i=1}^k \beta_{f,v,i} \cdot \vec{x} \\
    \tilde{f}(x) &= \sum_k \mleft(\sum_{v \in V_{f,k}} |a_{f,v}| \prod_{i=1}^k \mleft\Vert\beta_{f,v,i}\mright\Vert_2 \mright) x^k \\
         &= \sum_k \mleft(\sum_{v \in V_{g,k}} |a_{g,v}| \prod_{i=1}^k \mleft\Vert\beta_{g,v,i}\mright\Vert_2 + \sum_{v \in V_{h,k}} |a_{h,v}| \prod_{i=1}^k \mleft\Vert\beta_{h,v,i}\mright\Vert_2 \mright) x^k \\
         &= \sum_k \mleft(\sum_{v \in V_{g,k}} |a_{g,v}| \prod_{i=1}^k \mleft\Vert\beta_{g,v,i}\mright\Vert_2 \mright) x^k + \sum_k \mleft(\sum_{v \in V_{h,k}} |a_{h,v}| \prod_{i=1}^k \mleft\Vert\beta_{h,v,i}\mright\Vert_2 \mright) x^k \\
         &= \tilde{g}(x) + \tilde{h}(x)\\
           \tilde{f}'(x) &= \sum_k \mleft(\sum_{v \in V_{f,k}} |a_{f,v}| \prod_{i=1}^k \mleft\Vert\beta_{f,v,i}\mright\Vert_2 \mright) k x^{k-1} \\
         &= \sum_k \mleft(\sum_{v \in V_{g,k}} |a_{g,v}| \prod_{i=1}^k \mleft\Vert\beta_{g,v,i}\mright\Vert_2 + \sum_{v \in V_{h,k}} |a_{h,v}| \prod_{i=1}^k \mleft\Vert\beta_{h,v,i}\mright\Vert_2 \mright) k x^{k-1} \\
         &= \sum_k \mleft(\sum_{v \in V_{g,k}} |a_{g,v}| \prod_{i=1}^k \mleft\Vert\beta_{g,v,i}\mright\Vert_2 \mright) k x^{k-1} + \sum_k \mleft(\sum_{v \in V_{h,k}} |a_{h,v}| \prod_{i=1}^k \mleft\Vert\beta_{h,v,i}\mright\Vert_2 \mright) k x^{k-1} \\
          &= \tilde{g}'(x) + \tilde{h}'(x)
  \end{align*}
  Note that other choices of $\beta_f$ are possible and may result in a smaller $\tilde{g}$ (hence the imprecision noted in \Cref{sec:precision}).
\end{proof}

\begin{lemma}
  \label{lem:tilde-multiplication}
  If $f\mleft(\vec{x}\mright) = g\mleft(\vec{x}\mright) \cdot h\mleft(\vec{x}\mright)$, then for $x \geq 0$, $\tilde{f}(x) \leq \tilde{g}(x) \cdot \tilde{h}(x)$ and $\tilde{f}'(x) \leq \tilde{g}'(x) \tilde{h}(x) + \tilde{g}(x) \tilde{h}'(x)$.
\end{lemma}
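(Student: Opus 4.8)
The plan is to mirror the proof of \Cref{lem:tilde-addition}, replacing the disjoint-union bookkeeping of the additive case with the Cauchy-product structure of multiplication. First I would write the given analytic representations of $g$ and $h$ in the multivariate form of \Cref{eq:tilde-definition-multivar}:
\[
  g(\vec{x}) = \sum_k \sum_{v \in V_{g,k}} a_{g,v}\prod_{i=1}^k \beta_{g,v,i}\cdot\vec{x}\Vert1, \qquad h(\vec{x}) = \sum_l \sum_{w \in V_{h,l}} a_{h,w}\prod_{j=1}^l \beta_{h,w,j}\cdot\vec{x}\Vert1.
\]
Distributing the product $f = g\cdot h$, each degree-$k$ term of $g$ times each degree-$l$ term of $h$ produces a degree-$(k+l)$ term of $f$ that is again a product of linear forms. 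I would collect these by total degree $m = k+l$, defining the index set $V_{f,m} = \bigsqcup_{k+l=m}(V_{g,k}\times V_{h,l})$, coefficients $a_{f,(v,w)} = a_{g,v}\,a_{h,w}$, and the concatenated list of $k+l$ vectors $(\beta_{g,v,1},\dots,\beta_{g,v,k},\beta_{h,w,1},\dots,\beta_{h,w,l})$. This exhibits one valid representation of $f$.

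The key step is computing the tilde of this representation. Using $|a_{g,v}\,a_{h,w}| = |a_{g,v}|\,|a_{h,w}|$ and the fact that the product of the $\ell_2$-norms of the concatenated list factors as $\left(\prod_{i=1}^k\|\beta_{g,v,i}\|_2\right)\left(\prod_{j=1}^l\|\beta_{h,w,j}\|_2\right)$, the degree-$m$ coefficient of $\tilde{f}$ becomes $\sum_{k+l=m} c_{g,k}\,c_{h,l}$, where $c_{g,k}$ and $c_{h,l}$ denote the degree coefficients of $\tilde{g}$ and $\tilde{h}$. This is exactly the Cauchy product, so $\tilde{f}(x) = \tilde{g}(x)\,\tilde{h}(x)$ for this representation. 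No slack is introduced at this step, since $|a_{g,v}\,a_{h,w}| = |a_{g,v}|\,|a_{h,w}|$ and the norm factorization are both exact equalities; the $\leq$ in the statement arises, exactly as in \Cref{lem:tilde-addition}, only because the constructed representation of $f$ need not be the minimal one.

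For the derivative bound, since both $\tilde{g}$ and $\tilde{h}$ are power series with nonnegative coefficients that converge for $x \geq 0$, I would differentiate the identity $\tilde{f}(x) = \tilde{g}(x)\,\tilde{h}(x)$ term by term and apply the product rule, obtaining $\tilde{f}'(x) = \tilde{g}'(x)\,\tilde{h}(x) + \tilde{g}(x)\,\tilde{h}'(x)$, whence the claimed $\leq$. The main obstacle I anticipate is purely bookkeeping: indexing the double sum so that the regrouping over $k+l=m$ is manifestly a Cauchy product, and confirming that the concatenation of the $\beta$-vectors preserves both the total degree and the product-of-norms factorization. Everything else is a direct term-by-term computation as in the additive case, and nonnegativity of all coefficients ensures the series manipulations and the derivative inequality are valid on $x \geq 0$.
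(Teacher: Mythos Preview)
Your proposal is correct and follows essentially the same approach as the paper: you build the same Cauchy-product representation of $f$ with index sets $V_{f,m}=\bigsqcup_{k+l=m}V_{g,k}\times V_{h,l}$, coefficients $a_{g,v}a_{h,w}$, and concatenated $\beta$-vectors, then use the exact factorizations $|a_{g,v}a_{h,w}|=|a_{g,v}||a_{h,w}|$ and $\prod\|\beta\|_2$ to obtain $\tilde f(x)=\tilde g(x)\tilde h(x)$ for this representation. The only cosmetic difference is that the paper carries out the derivative step by explicitly splitting $(j+k)x^{j+k-1}=jx^{j-1}x^k+kx^{k-1}x^j$ and regrouping, whereas you invoke the product rule on the series identity directly; these are the same computation.
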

\begin{proof}
  Given:
  \begin{align*}
    g\mleft(\vec{x}\mright) &= \sum_k \sum_{v \in V_{g,k}} a_{g,v} \prod_{i=1}^k \beta_{g,v,i} \cdot \vec{x} \\
    h\mleft(\vec{x}\mright) &= \sum_k \sum_{v \in V_{h,k}} a_{h,v} \prod_{i=1}^k \beta_{h,v,i} \cdot \vec{x}
  \end{align*}
  Define:
  \begin{align*}
    V_{f,l} &= \mleft\{ \mleft(v_g, v_h\mright) \mid j + k = l \land v_g \in V_{g,j} \land v_h \in V_{h,k} \mright\} \\
    a_{f,\mleft(v_g, v_h\mright)} &= a_{g,v_g} \cdot a_{h,v_h} \\
    \beta_{f,\mleft(v_g, v_h\mright),i} &= \begin{cases} \beta_{g,v_g,i} & i \leq j \\ \beta_{h,v_h,i-j} & i > j \end{cases}
  \end{align*}
  Then:
  \allowdisplaybreaks[3]
  \begin{align*}
    f\mleft(\vec{x}\mright) &= g\mleft(\vec{x}\mright) \cdot h\mleft(\vec{x}\mright) \\
         &= \mleft(\sum_k \sum_{v \in V_{g,k}} a_{g,v} \prod_{i=1}^k \beta_{g,v,i} \cdot \vec{x}\mright) \cdot \mleft(\sum_k \sum_{v \in V_{h,k}} a_{h,v} \prod_{i=1}^k \beta_{h,v,i} \cdot \vec{x}\mright) \\
         &= \sum_j \sum_k \sum_{v_g \in V_{g,j}} \sum_{v_h \in V_{h,k}} a_{g,v_g} a_{h,v_h} \prod_{i=1}^j \mleft(\beta_{g,v_g,i} \cdot \vec{x}\mright) \prod_{i=1}^k \mleft(\beta_{h,v_h,i} \cdot \vec{x}\mright) \\
         &= \sum_l \sum_{j + k = l} \sum_{v_g \in V_{g,j}} \sum_{v_h \in V_{h,k}} a_{g,v_g} a_{h,v_h} \prod_{i=1}^j \mleft(\beta_{g,v_g,i} \cdot \vec{x}\mright) \prod_{i=1}^k \mleft(\beta_{h,v_h,i} \cdot \vec{x}\mright) \\
         &= \sum_l \sum_{v \in V_{f,l}} a_{f,v} \prod_{i=1}^l \beta_{f,v,i} \cdot \vec{x} \\
    \tilde{f}(x) &= \sum_l x^l \sum_{v \in V_{f,l}} |a_{f,v}| \prod_{i=1}^l \mleft\Vert\beta_{f,v,i}\mright\Vert_2 \\
         &= \sum_l \sum_{j + k = l} x^{j+k}  \sum_{v_g \in V_{g,j}} \sum_{v_h \in V_{h,k}} |a_{g,v_g} a_{h,v_h}| \prod_{i=1}^j \mleft\Vert\beta_{g,v_g,i}\mright\Vert_2 \prod_{i=1}^k \mleft\Vert\beta_{h,v_h,i} \mright\Vert_2 \\
         &=\sum_j \sum_k x^{j + k} \sum_{v_g \in V_{g,j}} \sum_{v_h \in V_{h,k}} |a_{g, v_g}| |a_{h, v_h}| \prod_{i=1}^j \mleft\Vert\beta_{g, v_g, i}\mright\Vert_2 \prod_{i=1}^k \mleft\Vert\beta_{h, v_h, i}\mright\Vert_2 \\
         &= \mleft(\sum_j x^j \sum_{v_g \in V_{g,j}} |a_{g, v_g}| \prod_{i=1}^j \mleft\Vert\beta_{g, v_g, i}\mright\Vert_2\mright) \cdot \mleft(\sum_k x^k \sum_{v_h \in V_{h,k}} |a_{h, v_h}| \prod_{i=1}^k \mleft\Vert\beta_{h, v_h, i}\mright\Vert_2\mright) \\
         &= \tilde{g}(x) \cdot \tilde{h}(x) \\
    \tilde{f}'(x) &= \sum_l l x^{l-1} \sum_{v \in V_{f,l}} |a_{f,v}| \prod_{i=1}^l \mleft\Vert\beta_{f,v,i}\mright\Vert_2 \\
         &= \sum_l \sum_{j + k = l} (j + k) x^{j + k - 1} \sum_{v_g \in V_{g,j}} \sum_{v_h \in V_{h,k}} |a_{g,v_g} a_{h,v_h}| \prod_{i=1}^j \mleft\Vert\beta_{g,v_g,i}\mright\Vert_2 \prod_{i=1}^k \mleft\Vert\beta_{h,v_h,i} \mright\Vert_2 \\
         &= \mleft(\sum_j j x^{j-1} x^k \sum_{v_g \in V_{g,j}} \sum_{v_h \in V_{h,k}} |a_{g,v_g} a_{h,v_h}| \prod_{i=1}^j \mleft\Vert\beta_{g,v_g,i}\mright\Vert_2 \prod_{i=1}^k \mleft\Vert\beta_{h,v_h,i} \mright\Vert_2\mright) \\ &\hspace*{4em} + \mleft(\sum_k k x^{k-1} x^j \sum_{v_g \in V_{g,j}} \sum_{v_h \in V_{h,k}} |a_{g,v_g} a_{h,v_h}| \prod_{i=1}^j \mleft\Vert\beta_{g,v_g,i}\mright\Vert_2 \prod_{i=1}^k \mleft\Vert\beta_{h,v_h,i} \mright\Vert_2\mright) \\
         &= \mleft(\mleft(\sum_j j x^{j-1} \sum_{v_g \in V_{g,j}} |a_{g,v_g}| \prod_{i=1}^j \mleft\Vert\beta_{g,v_g,i}\mright\Vert_2\mright) \cdot \mleft(\sum_k x^k \sum_{v_h \in V_{h,k}} |a_{h,v_h}| \prod_{i=1}^k \mleft\Vert\beta_{h,v_h,i} \mright\Vert_2\mright)\mright) \\ &\hspace*{4em} + \mleft(\mleft(\sum_k k x^{k-1} \sum_{v_h \in V_{h,k}} |a_{h,v_h}| \prod_{i=1}^k \mleft\Vert\beta_{h,v_h,i} \mright\Vert_2\mright) \cdot \mleft(\sum_j x^j \sum_{v_g \in V_{g,j}} |a_{g,v_g}| \prod_{i=1}^j \mleft\Vert\beta_{g,v_g,i}\mright\Vert_2\mright)\mright) \\
         &= \tilde{g}'(x) \cdot \tilde{h}(x) + \tilde{g}(x) \cdot \tilde{h}'(x)
  \end{align*}
\end{proof}

\begin{lemma}
  \label{lem:tilde-composition}
  If $f\mleft(\vec{x}\mright) = g\mleft(h\mleft(\vec{x}\mright)\mright)$, then $\tilde{h}(x) \leq \tilde{g}(\tilde{f}(x))$ and $\tilde{h}'(x) \leq \tilde{g}'(\tilde{f}(x)) \cdot \tilde{f}'(x)$.
\end{lemma}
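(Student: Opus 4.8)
The plan is to prove this as the composition case, with $g$ as the outer analytic function, $f$ as the inner function, and $h=g\circ f$ the composite whose tilde we bound; this is the configuration that yields the stated inequalities $\tilde{h}(x)\le\tilde{g}(\tilde{f}(x))$ and $\tilde{h}'(x)\le\tilde{g}'(\tilde{f}(x))\,\tilde{f}'(x)$. First I would expand the outer function as a univariate power series $g(y)=\sum_{m=0}^{\infty}b_m y^m$, so that by definition $\tilde{g}(y)=\sum_{m=0}^{\infty}|b_m|\,y^m$, and write the composite as $h=\sum_{m=0}^{\infty}b_m f^m$, where each power $f^m$ is again an analytic function in the multivariate form of \Cref{eq:tilde-definition-multivar}.

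The heart of the argument reduces composition to the already-established sum and product rules. I would first show, by applying \Cref{lem:tilde-multiplication} inductively $m-1$ times, that for $x\ge 0$ we have $\widetilde{f^m}(x)\le(\tilde{f}(x))^m$ together with the derivative bound $(\widetilde{f^m})'(x)\le m\,(\tilde{f}(x))^{m-1}\,\tilde{f}'(x)$. Scaling by the constant $|b_m|$ and summing over $m$ via \Cref{lem:tilde-addition} then gives, coefficientwise on the nonnegative power series in $x$, the majorization $\tilde{h}(x)\le\sum_{m}|b_m|\,(\tilde{f}(x))^m=\tilde{g}(\tilde{f}(x))$. Since $\tilde{h}$ and the majorant are both power series in $x$ with nonnegative coefficients and the bound holds term by term, differentiating preserves the inequality for $x\ge 0$; evaluating the derivative of the majorant by the chain rule yields $\sum_{m}|b_m|\,m\,(\tilde{f}(x))^{m-1}\,\tilde{f}'(x)=\tilde{g}'(\tilde{f}(x))\,\tilde{f}'(x)$, which is the second stated bound.

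I expect the main obstacle to be convergence and the justification of interchanging summation with differentiation. The reduction above is only meaningful when $\tilde{f}(x)$ lies within the radius of convergence of $g$ (equivalently of $\tilde{g}$), so that the majorant $\sum_m|b_m|\,(\tilde{f}(x))^m$ converges; this is the analogue of the side condition attached to the composition case of \Cref{lem:tilde-algebra}. Within that radius the relevant series converge uniformly on compact subintervals of $[0,\infty)$, which legitimizes the term-by-term differentiation used to pass from the value bound to the derivative bound. The remaining bookkeeping -- verifying that the iterated application of \Cref{lem:tilde-multiplication} composes the derivative factors into exactly the $m\,(\tilde{f})^{m-1}\,\tilde{f}'$ pattern, and that the disjoint-index constructions underlying the sum and product rules assemble into a single nonnegative majorant -- is routine, but it is where the bulk of the care is needed.
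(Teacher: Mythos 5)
Your proof is correct and takes essentially the same route as the paper's: expand the outer function as a power series, reduce the composite to iterated applications of the addition and multiplication lemmas to obtain $\widetilde{f^m}(x) \leq \bigl(\tilde{f}(x)\bigr)^m$ and the derivative analogue $m\bigl(\tilde{f}(x)\bigr)^{m-1}\tilde{f}'(x)$, and sum with the coefficients $|b_m|$ under the radius-of-convergence side condition. You also correctly resolved the typo in the lemma statement (whose hypothesis and conclusion swap the roles of the composite and the inner function) in exactly the orientation that the paper's own proof establishes, namely that the tilde of the composite is majorized by $\tilde{g}$ applied to the tilde of the inner function.
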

\begin{proof}
  The proof is similar to that of Corollary 2 in \citet{agarwala_monolithic_2021}, following fairly directly from Fact 1 in \citet{agarwala_monolithic_2021}:
  \allowdisplaybreaks[3]
  \begin{align*}
    f(x) &= \sum_k a_{g,k} h\mleft(\vec{x}\mright)^k \\
    \tilde{f}(x) &\leq \sum_k \widetilde{a_{g,k} h\mleft(\vec{x}\mright)^k} & (f = g + h \Rightarrow \tilde{f} \leq \tilde{g} + \tilde{h})\\
    \    &\leq \sum_k |a_{g,k}| \widetilde{h\mleft(\vec{x}\mright)^k} & (f(x) = g(c x) \Rightarrow \tilde{f}(x) = |c|\tilde{g}(x)\\
    \    &\leq \sum_k |a_{g,k}| \mleft(\tilde{h}\mleft(x\mright)\mright)^k & (f = g \cdot h \Rightarrow \tilde{f} \leq \tilde{g} \cdot \tilde{h})\\
         &\leq \tilde{g}\mleft(\tilde{h}\mleft(x\mright)\mright) \\
    \tilde{f}'(x) &= \frac{d}{dx} \reallywidetilde{\sum_k a_{g,k} h\mleft(\vec{x}\mright)^k} \\
         &\leq \sum_k \frac{d}{dx} \widetilde{a_{g,k} h\mleft(\vec{x}\mright)^k} & (f = g + h \Rightarrow \tilde{f}' \leq \tilde{g}' + \tilde{h}')\\
         &= \sum_k |a_{g,k}| \frac{d}{dx}  \widetilde{h\mleft(\vec{x}\mright)^k} \\
         &\leq \sum_k |a_{g,k}| k \tilde{h}(x)^{k-1} \tilde{h}'(x) & (f = g \cdot h \Rightarrow \tilde{f}' \leq \tilde{g}' \cdot \tilde{h} + \tilde{g} \cdot \tilde{h}')\\
         &= \tilde{g}'\mleft(\tilde{h}\mleft(x\mright)\mright) \cdot \tilde{h}'(x)
  \end{align*}
\end{proof}

\begin{lemma}
  \label{lem:tilde-monotonic}
  The tilde turns the function and its derivative into a monotonic function for $x\geq0$:

  \begin{equation*}
    \forall x \geq x' \geq 0. \tilde{f}\mleft(x\mright) \geq \tilde{f}\mleft(x\mright) \land \tilde{f}'\mleft(x\mright) \geq \tilde{f}'\mleft(x\mright)
  \end{equation*}
\end{lemma}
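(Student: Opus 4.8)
The plan is to observe that, directly from the definition of the tilde in \Cref{eq:tilde-definition-univar,eq:tilde-definition-multivar}, both $\tilde{f}$ and $\tilde{f}'$ are power series in the scalar argument $x$ whose coefficients are \emph{nonnegative}. Writing $\tilde{f}(x) = \sum_{k=0}^\infty c_k x^k$ where $c_k = \sum_{v \in V_k} |a_v| \prod_{i=1}^k \|\beta_{v,i}\|_2 \geq 0$, the monotonicity claim (reading the statement as $\tilde{f}(x) \geq \tilde{f}(x')$ and $\tilde{f}'(x) \geq \tilde{f}'(x')$ for $x \geq x' \geq 0$) will reduce to the elementary fact that each monomial $x \mapsto x^k$ is nondecreasing on $[0,\infty)$. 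This is precisely the property needed to justify substituting upper bounds into the monotone functions $\tilde{g}$ appearing in the composition case of \Cref{lem:tilde-algebra}.

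Concretely, I would first fix $0 \leq x' \leq x$ and note that $x'^k \leq x^k$ for every $k \geq 0$. Multiplying by the nonnegative coefficient $c_k$ preserves the inequality, so $c_k x'^k \leq c_k x^k$ holds termwise; summing over $k$ (a sum of nonnegative terms, hence well-defined in $[0,+\infty]$) yields $\tilde{f}(x') \leq \tilde{f}(x)$, establishing the first claim.

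For the derivative I would use that $\tilde{f}'(x) = \sum_{k=1}^\infty (k c_k)\, x^{k-1}$ is again a power series with nonnegative coefficients $k c_k \geq 0$. The identical termwise argument applied to this series gives $\tilde{f}'(x') \leq \tilde{f}'(x)$, completing the proof.

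The only point needing care is convergence. The termwise argument is valid verbatim within the radius of convergence of $\tilde{f}$, and for arguments outside it both sides evaluate to $+\infty$, so the inequality persists in the extended reals. Since the lemma is invoked only where the relevant tildes are finite (e.g. under the radius-of-convergence hypothesis of the composition rule in \Cref{lem:tilde-algebra}), this causes no difficulty. I expect this convergence bookkeeping to be the sole nonroutine aspect of an otherwise immediate argument.
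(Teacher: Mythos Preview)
Your proposal is correct and takes essentially the same approach as the paper: both arguments rest on the observation that $\tilde{f}$ and $\tilde{f}'$ are power series with nonnegative coefficients. The paper phrases the conclusion as ``the derivative is a nonnegative power series, hence nonnegative on $x\geq 0$, hence $\tilde{f}$ and $\tilde{f}'$ are monotone,'' whereas you do a direct termwise comparison $c_k x'^k \leq c_k x^k$; these are equivalent one-line arguments, and your added remark on convergence is a small bonus the paper omits.
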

\begin{proof}
  \begin{equation*}
    \tilde{f}\mleft(x\mright) = \sum_{k=0}^\infty \mleft|a_k\mright| x^k
  \end{equation*}
  The derivative of a power series with all nonnegative coefficients is also a power series with all nonnegative coefficients:
  \begin{equation*}
    \tilde{f}'\mleft(x\mright) = \sum_{k=0}^\infty k \mleft|a_k\mright| x^{k-1}
  \end{equation*}
  For $x > 0$, and any power series with all nonnegative coefficients $g$, $g\mleft(x\mright) \geq 0$.
  Thus, the derivative of $\tilde{f}$ and $\tilde{f}'$ are both nonnegative everywhere, thus they are both monotonic.
\end{proof}

\section{\lang{} Statement Big-Step Semantics}
\label{app:language-semantics}

This appendix presents the remaining semantics for \lang{} not presented in \Cref{sec:language}.

\begin{figure}
  \begin{mathpar}
    \centering
    \inferrule{
    }{\mleft\langle \sigma, \texttt{{\color{blue}skip}} \mright\rangle \Downarrow \sigma}

    \inferrule{
      \mleft\langle \sigma, s_1 \mright\rangle \Downarrow \sigma' \\
      \mleft\langle \sigma', s_2 \mright\rangle \Downarrow \sigma''
    }{\mleft\langle \sigma, s_1 \,\texttt{;}\, s_2 \mright\rangle \Downarrow \sigma''}

    \inferrule{
      \mleft\langle\sigma, e\mright\rangle \Downarrow v
    }{\mleft\langle \sigma, x \,\texttt{=}\, e \mright\rangle \Downarrow \sigma\mleft[x \mapsto v\mright]}

    \inferrule{
      \langle \sigma , e\rangle \Downarrow v \\
      v > 0 \\
      \mleft\langle \sigma, s_1 \mright\rangle \Downarrow \sigma_l
    }{\mleft\langle \sigma,
      \texttt{{\color{blue}if} (}e \texttt{ > 0) \{} s_1 \texttt{\} {\color{blue}else} \{} s_2 \texttt{\}}
      \mright\rangle \Downarrow
      \sigma_l
    }

    \inferrule{
      \langle \sigma , e\rangle \Downarrow v \\
      v \leq 0 \\
      \mleft\langle \sigma, s_2 \mright\rangle \Downarrow \sigma_r
    }{\mleft\langle \sigma,
      \texttt{{\color{blue}if} (}e \texttt{ > 0) \{} s_1 \texttt{\} {\color{blue}else} \{} s_2 \texttt{\}}
      \mright\rangle \Downarrow
      \sigma_r
    }
  \end{mathpar}
  \caption{Big-step evaluation relation for statements.}
  \label{fig:evaluation-statement-relational-app}
  \begin{mathpar}
    \centering
    \inferrule{
      \mleft\langle \sigma, s \mright\rangle \Downarrow \sigma'\\
    }{\mleft\langle \sigma, \texttt{{\color{blue}fun} (}x_0, x_1 \dots, x_n\texttt{) \{}s \,\texttt{;}\, \texttt{{\color{blue}return} } x \texttt{\}}\mright\rangle \Downarrow \sigma'\mleft(x\mright)}
  \end{mathpar}
  \caption{Big-step evaluation relation for \lang{}.}
  \label{fig:evaluation-program-relational-app}
\end{figure}

\Cref{fig:evaluation-statement-relational-app} presents the big-step evaluation relation for statements in \lang{}.
The statement relation $\langle \sigma, s \rangle \Downarrow \sigma'$ says that under variable store $\sigma$, the statement $s$ evaluates to a new variable store $\sigma'$.

\Cref{fig:evaluation-program-relational-app} presents the big-step evaluation relation for \lang{} programs.
The program relation $
  \mleft\langle \sigma, \texttt{{\color{blue}fun} (}x_0, x_1 \dots, x_n\texttt{) \{}s \,\texttt{;}\, \texttt{{\color{blue}return} } x \texttt{\}}\mright\rangle \Downarrow v
  $
says that under variable store $\sigma$ (representing the inputs to the program), the program evaluates to value $v$.

\section{\lang{} Analysis Proof of Soundness}
\label{app:language}

This appendix proves that the \lang{} complexity analysis is sound: that it computes an upper bound on the true complexity of learning the program.

We first note that the standard execution semantics big-step relation $\Downarrow$ both for expressions and for traces is a function.
We use $\mleft\llbracket \cdot \mright\rrbracket$ as notation to refer to that function for expressions and $\mleft\llbracket \cdot \mright\rrbracket_x$ to refer to that function for traces followed by taking the value of the variable $x$:
\begin{gather*}
  \mleft\llbracket e \mright\rrbracket\mleft(\sigma\mright) = v \Leftrightarrow \mleft\langle\sigma, e\mright\rangle \Downarrow v \\
  \mleft\llbracket t \mright\rrbracket_x\mleft(\sigma\mright) = v \Leftrightarrow \mleft\langle\sigma, t\mright\rangle \Downarrow \sigma' \land \sigma'\mleft(x\mright) = v
\end{gather*}
We use the notation $\mleft\{f_x\mright\}$ as shorthand for $\mleft\{f_x \,\middle|\, x \in \sigma\mright\}$, a set of functions indexed by $x \in \sigma$.
We use the notation $\tilde{\sigma} \vdash \mleft\{f_x\mright\}$ to denote the predicate that $f_x$ have tildes and tilde derivatives that are bounded by $\tilde{\sigma}$:
\begin{gather*}
  \tilde{\sigma} \vdash \mleft\{f_x\mright\} \Leftrightarrow \forall x \in \tilde{\sigma}.\, \mleft(\tilde{\sigma}\mleft(x\mright) = \mleft(\tilde{v}, \tilde{v}'\mright) \Rightarrow \mleft(
  0 \leq \tilde{f_x}\mleft(1\mright) \leq \tilde{v} \land 0 \leq \tilde{f_x}'\mleft(1\mright) \leq \tilde{v}'\mright)\mright)
\end{gather*}
We use the notation $\circ$ to denote function composition:
\begin{gather*}
  \mleft(\mleft\llbracket \cdot \mright\rrbracket \circ \mleft\{f_x\mright\}\mright)\mleft(\sigma\mright) \triangleq \mleft\llbracket \cdot \mright\rrbracket \mleft(\mleft\{x \mapsto f_x\mleft(\sigma\mright)\mright\}\mright)
\end{gather*}

\tildeexpression*
\begin{proof}
  We prove this by induction using facts from \Cref{app:complexity-algebra}.

  \textbf{Case:} $\inferrule{ }{\langle \tilde{\sigma}, v \rangle \;\tilde\Downarrow\; \mleft(|v|, 0\mright)}$
  \\
  $\mleft\llbracket v \mright\rrbracket\mleft(\sigma\mright) = v$, so $\mleft(\mleft\llbracket v \mright\rrbracket \circ \mleft\{f_x\mright\}\mright)\mleft(\sigma\mright) = v$. Thus, $\widetilde{\mleft\llbracket v \mright\rrbracket \circ \mleft\{f_x\mright\}}\mleft(1\mright) = |v|$ and $\widetilde{\mleft\llbracket v \mright\rrbracket \circ \mleft\{f_x\mright\}}'\mleft(1\mright) = 0$, proving this case.

  \textbf{Case:} $\inferrule{ }{\langle \tilde{\sigma}, x \rangle \;\tilde\Downarrow\; \tilde{\sigma}(x)}$
  \\
  $\mleft\llbracket x \mright\rrbracket\mleft(\sigma\mright) = \sigma\mleft(x\mright)$, so $\mleft(\mleft\llbracket x \mright\rrbracket \circ \mleft\{f_x\mright\}\mright)\mleft(\sigma\mright) = f_x\mleft(\sigma\mright)$.
  We know that for $\tilde{\sigma}\mleft(x\mright) = \mleft(\tilde{v}, \tilde{v}'\mright)$, $\tilde{f_x}\mleft(1\mright) \leq \tilde{v}$ and $\tilde{f_x}'\mleft(1\mright) \leq \tilde{v}'$.
  Thus, $\widetilde{\mleft\llbracket x \mright\rrbracket \circ \mleft\{f_x\mright\}}\mleft(1\mright) = \tilde{f_x}\mleft(1\mright) \leq \tilde{v}$ and $\widetilde{\mleft\llbracket x \mright\rrbracket \circ \mleft\{f_x\mright\}}'\mleft(1\mright) = \tilde{f_x}'\mleft(1\mright) \leq \tilde{v}'$, proving this case.

  \textbf{Case:} $    \inferrule{
      \mleft\langle \tilde{\sigma},  e\mright\rangle \cstepsto \mleft(\tilde{v},\tilde{v}'\mright) \\
    }{\mleft\langle \tilde{\sigma},  {\texttt{-}e}\mright\rangle \cstepsto \mleft(\tilde{v},\tilde{v}'\mright)}
    $
    \\
    $\mleft\llbracket -e \mright\rrbracket\mleft(\sigma\mright) = -\mleft\llbracket e \mright\rrbracket\mleft(\sigma\mright)$, so $\mleft(\mleft\llbracket -e \mright\rrbracket \circ \mleft\{f_x\mright\}\mright)\mleft(\sigma\mright) = -\mleft(\mleft\llbracket e \mright\rrbracket \circ \mleft\{f_x\mright\}\mright)\mleft(\sigma\mright)$.
    By the inductive hypothesis, we know that $\widetilde{\mleft\llbracket e \mright\rrbracket \circ \mleft\{f_x\mright\}}\mleft(1\mright) \leq \tilde{v}$ and $\widetilde{\mleft\llbracket e \mright\rrbracket \circ \mleft\{f_x\mright\}}'\mleft(1\mright) \leq \tilde{v}'$.
    Thus, $\widetilde{\mleft\llbracket -e \mright\rrbracket \circ \mleft\{f_x\mright\}}\mleft(1\mright) = \widetilde{\mleft\llbracket e \mright\rrbracket \circ \mleft\{f_x\mright\}}\mleft(1\mright) \leq \tilde{v}$ and $\widetilde{\mleft\llbracket -e \mright\rrbracket \circ \mleft\{f_x\mright\}}'\mleft(1\mright) = \widetilde{\mleft\llbracket e \mright\rrbracket \circ \mleft\{f_x\mright\}}'\mleft(1\mright) \leq \tilde{v}'$, proving this case.

  \textbf{Case:} $        \inferrule{
      \mleft\langle \tilde{\sigma},  e\mright\rangle \cstepsto \mleft(\tilde{v},\tilde{v}'\mright) \\
    }{\mleft\langle \tilde{\sigma},  \texttt{{\color{blue}sin}(}e\texttt{)}\mright\rangle \cstepsto \mleft(\sinh{\mleft(\tilde{v}\mright)}, \tilde{v}'\cosh\mleft(\tilde{v}\mright)\mright)}
    $
    \\
    $\mleft\llbracket \texttt{{\color{blue}sin}(}e\texttt{)} \mright\rrbracket\mleft(\sigma\mright) = \sin\mleft(\mleft\llbracket e \mright\rrbracket\mleft(\sigma\mright)\mright)$, so $\mleft(\mleft\llbracket \texttt{{\color{blue}sin}(}e\texttt{)} \mright\rrbracket \circ \mleft\{f_x\mright\}\mright)\mleft(\sigma\mright) = \sin\mleft(\mleft\llbracket e \mright\rrbracket \circ \mleft\{f_x\mright\} \mleft(\sigma\mright)\mright)$. By the inductive hypothesis, $\widetilde{\mleft\llbracket e \mright\rrbracket \circ \mleft\{f_x\mright\}}\mleft(1\mright) \leq \tilde{v}$ and $\widetilde{\mleft\llbracket e \mright\rrbracket \circ \mleft\{f_x\mright\}}'\mleft(1\mright) \leq \tilde{v}'$.
    Note that $\widetilde{\sin}\mleft(x\mright) = \sinh\mleft(x\mright)$.
    By \Cref{lem:tilde-composition}, $\widetilde{\mleft\llbracket \texttt{{\color{blue}sin}(}e\texttt{)} \mright\rrbracket \circ \mleft\{f_x\mright\}}\mleft(1\mright) \leq \sinh\mleft(\widetilde{\mleft\llbracket e \mright\rrbracket \circ \mleft\{f_x\mright\}}\mleft(1\mright)\mright) \leq \sinh\mleft(\tilde{v}\mright)$ (\Cref{lem:tilde-monotonic}) and $\widetilde{\mleft\llbracket \texttt{{\color{blue}sin}(}e\texttt{)} \mright\rrbracket \circ \mleft\{f_x\mright\}}'\mleft(1\mright) \leq \widetilde{\mleft\llbracket e \mright\rrbracket \circ \mleft\{f_x\mright\}}'\mleft(1\mright) \cosh\mleft(\widetilde{\mleft\llbracket e \mright\rrbracket \circ \mleft\{f_x\mright\}}\mleft(1\mright)\mright) \leq \tilde{v}' \cosh\mleft(\tilde{v}\mright)$ (\Cref{lem:tilde-monotonic}).

    \textbf{Case:} $        \inferrule{
      \mleft\langle \tilde{\sigma},  e\mright\rangle \cstepsto \mleft(\tilde{v},\tilde{v}'\mright) \\
    }{\mleft\langle \tilde{\sigma},  \texttt{{\color{blue}exp}(}e\texttt{)} \mright\rangle \cstepsto \mleft(\exp{\mleft(\tilde{v}\mright)}, \tilde{v}'\exp\mleft(\tilde{v}\mright)\mright)}
    $
    \\
    $\mleft\llbracket \texttt{{\color{blue}exp}(}e\texttt{)} \mright\rrbracket\mleft(\sigma\mright) = \exp\mleft(\mleft\llbracket e \mright\rrbracket\mleft(\sigma\mright)\mright)$, so $\mleft(\mleft\llbracket \texttt{{\color{blue}exp}(}e\texttt{)} \mright\rrbracket \circ \mleft\{f_x\mright\}\mright)\mleft(\sigma\mright) = \exp\mleft(\mleft\llbracket e \mright\rrbracket \circ \mleft\{f_x\mright\} \mleft(\sigma\mright)\mright)$. By the inductive hypothesis, $\widetilde{\mleft\llbracket e \mright\rrbracket \circ \mleft\{f_x\mright\}}\mleft(1\mright) \leq \tilde{v}$ and $\widetilde{\mleft\llbracket e \mright\rrbracket \circ \mleft\{f_x\mright\}}'\mleft(1\mright) \leq \tilde{v}'$.
    Note that $\widetilde{\exp}\mleft(x\mright) = \exp\mleft(x\mright)$.
    By \Cref{lem:tilde-composition}, $\widetilde{\mleft\llbracket \texttt{{\color{blue}exp}(}e\texttt{)} \mright\rrbracket \circ \mleft\{f_x\mright\}}\mleft(1\mright) \leq \exp\mleft(\widetilde{\mleft\llbracket e \mright\rrbracket \circ \mleft\{f_x\mright\}}\mleft(1\mright)\mright) \leq \exp\mleft(\tilde{v}\mright)$ and $\widetilde{\mleft\llbracket \texttt{{\color{blue}exp}(}e\texttt{)} \mright\rrbracket \circ \mleft\{f_x\mright\}}'\mleft(1\mright) \leq \widetilde{\mleft\llbracket e \mright\rrbracket \circ \mleft\{f_x\mright\}}'\mleft(1\mright) \cdot \widetilde{\exp}\mleft(\widetilde{\mleft\llbracket e \mright\rrbracket \circ \mleft\{f_x\mright\}}\mleft(1\mright)\mright) \leq \tilde{v}' \exp\mleft(\tilde{v}\mright)$ (\Cref{lem:tilde-monotonic}).

    \textbf{Case:} $\inferrule{
      \mleft\langle \tilde{\sigma},  e\mright\rangle \cstepsto \mleft(\tilde{v},\tilde{v}'\mright) \\
      b > \tilde{v}\sqrt{b^2+1}
    }{\mleft\langle \tilde{\sigma},  \texttt{{\color{blue}log}\{}b\texttt{\}(}e\texttt{)}\mright\rangle \cstepsto
      \mleft(
      \mleft|\log\mleft(b\mright)\mright| + \log\mleft(b\mright) - \log\mleft(b - \tilde{v}\sqrt{b^2+1}\mright),
      \frac{\tilde{v}'}{b - \tilde{v}\sqrt{b^2+1}}
      \mright)
    }$
    \\
    $\mleft\llbracket \texttt{{\color{blue}log}\{}b\texttt{\}(}e\texttt{)} \mright\rrbracket\mleft(\sigma\mright) = \log\mleft(\mleft\llbracket e \mright\rrbracket\mright)$, so $\mleft(\mleft\llbracket \texttt{{\color{blue}log}\{}b\texttt{\}(}e\texttt{)} \mright\rrbracket \circ \mleft\{f_x\mright\} \mright)\mleft(\sigma\mright) = \log\mleft(\mleft(\mleft\llbracket e \circ \mleft\{f_x\mright\}\mright\rrbracket\mright)\mleft(\sigma\mright)\mright)$. By the inductive hypothesis we have $\mleft\llbracket e \circ \mleft\{f_x\mright\} \mright\rrbracket\mleft(\sigma\mright) \leq \tilde{v}$ and $\mleft\llbracket e \circ \mleft\{f_x\mright\} \mright\rrbracket'\mleft(\sigma\mright) \leq \tilde{v}'$.
    Expanding around $x=b$, we have $\log\mleft(x\mright) = \log\mleft(b\mright) + \sum_{k=1}^\infty \mleft(-1\mright)^{k+1} \frac{\mleft(x - b\mright)^k}{kb^k}$.
    Note that this is a multivariate analytic power series in the sense of \Cref{eq:tilde-definition-multivar} (with the $1$ appended) with $V_k=\mleft\{v_k\mright\}$, $a_{v_0} = \log\mleft(b\mright)$, $a_{v_{k>0}} = \frac{\mleft(-1\mright)^{k+1}}{kb^k}$, and $\beta_{v_k,i} = \begin{bmatrix}1 \\ -b\end{bmatrix}$.
    Thus, $\widetilde{\log}\mleft(x\mright) = \mleft|\log\mleft(b\mright)\mright| + \sum_{k=1}^\infty \frac{\mleft(1 + b^2\mright)^\frac{k}{2}}{kb^k} x^k = \mleft|\log\mleft(b\mright)\mright| + \log\mleft(b\mright) - \log\mleft(b - x\sqrt{b^2 + 1}\mright)$, which converges for $b > x\sqrt{b^2+1}$.
    By \Cref{lem:tilde-composition}, $\widetilde{\mleft\llbracket \texttt{{\color{blue}log}\{}b\texttt{\}(}e\texttt{)} \mright\rrbracket\mleft(\sigma\mright)} \leq \mleft|\log\mleft(b\mright)\mright| + \log\mleft(b\mright) - \log\mleft(b - \tilde{v}\sqrt{b^2+1}\mright)$ and $\widetilde{\mleft\llbracket \texttt{{\color{blue}log}\{}b\texttt{\}(}e\texttt{)} \mright\rrbracket'\mleft(\sigma\mright)} \leq \frac{\tilde{v}'}{b - \tilde{v}\sqrt{b^2+1}}$ (\Cref{lem:tilde-monotonic}).

    \textbf{Case:} $
    \inferrule{
      \mleft\langle \tilde{\sigma},  e_1\mright\rangle \cstepsto \mleft(\tilde{v},\tilde{v}'\mright) \\
      \mleft\langle \tilde{\sigma},  e_2\mright\rangle \cstepsto \mleft(\tilde{v}_2, \tilde{v}'_2\mright) \\
    }{\mleft\langle \tilde{\sigma},  e_1 \texttt{+} e_2\mright\rangle \cstepsto \mleft(\tilde{v}+ \tilde{v}_2, \tilde{v}' + \tilde{v}'_2\mright)}
    $
    \\
    $\mleft\llbracket e_1 \,\texttt{+}\, e_2 \mright\rrbracket\mleft(\sigma\mright) = \mleft\llbracket e_1\mright\rrbracket\mleft(\sigma\mright) + \mleft\llbracket e_2\mright\rrbracket\mleft(\sigma\mright)$, so $\mleft(\mleft\llbracket e_1 + e_2 \mright\rrbracket \circ \mleft\{f_x\mright\} \mright)\mleft(\sigma\mright) = \mleft(\mleft\llbracket e_1 \mright\rrbracket \circ \mleft\{f_x\mright\} \mright)\mleft(\sigma\mright) + \mleft(\mleft\llbracket e_2 \mright\rrbracket \circ \mleft\{f_x\mright\} \mright)\mleft(\sigma\mright)$.
    By the inductive hypothesis, $\widetilde{\mleft\llbracket e_1 \mright\rrbracket \circ \mleft\{f_x\mright\}}\mleft(\sigma\mright) \leq \tilde{v}$, $\widetilde{\mleft\llbracket e_1 \mright\rrbracket \circ \mleft\{f_x\mright\}}'\mleft(\sigma\mright) \leq \tilde{v}'$, $\widetilde{\mleft\llbracket e_2 \mright\rrbracket \circ \mleft\{f_x\mright\}}\mleft(\sigma\mright) \leq \tilde{v}_2$, and $\widetilde{\mleft\llbracket e_2 \mright\rrbracket \circ \mleft\{f_x\mright\}}'\mleft(\sigma\mright) \leq \tilde{v}_2'$
    Thus by \Cref{lem:tilde-addition}, $\widetilde{\mleft\llbracket e_1 + e_2 \mright\rrbracket \circ \mleft\{f_x\mright\}}\mleft(\sigma\mright) \leq \tilde{v} + \tilde{v}_2$ and $\widetilde{\mleft\llbracket e_1 + e_2 \mright\rrbracket \circ \mleft\{f_x\mright\}}'\mleft(\sigma\mright) \leq \tilde{v}' + \tilde{v}_2'$.

    \textbf{Case:} $
    \inferrule{
      \mleft\langle \tilde{\sigma},  e_1\mright\rangle \cstepsto \mleft(\tilde{v},\tilde{v}'\mright) \\
      \mleft\langle \tilde{\sigma},  e_2\mright\rangle \cstepsto \mleft(\tilde{v}_2, \tilde{v}'_2\mright) \\
    }{\mleft\langle \tilde{\sigma},  e_1\texttt{*}e_2\mright\rangle \cstepsto \mleft(\tilde{v} \cdot \tilde{v}_2, \tilde{v}' \cdot \tilde{v}_2 + \tilde{v} \cdot \tilde{v}'_2\mright)}
    $
    \\
    $\mleft\llbracket e_1 \,\texttt{*}\, e_2 \mright\rrbracket\mleft(\sigma\mright) = \mleft\llbracket e_1\mright\rrbracket\mleft(\sigma\mright) \cdot \mleft\llbracket e_2\mright\rrbracket\mleft(\sigma\mright)$, so $\mleft(\mleft\llbracket e_1 \texttt{*} e_2 \mright\rrbracket \circ \mleft\{f_x\mright\} \mright)\mleft(\sigma\mright) = \mleft(\mleft\llbracket e_1 \mright\rrbracket \circ \mleft\{f_x\mright\} \mright)\mleft(\sigma\mright) \cdot \mleft(\mleft\llbracket e_2 \mright\rrbracket \circ \mleft\{f_x\mright\} \mright)\mleft(\sigma\mright)$.
    By the inductive hypothesis we have $\widetilde{\mleft\llbracket e_1 \mright\rrbracket \circ \mleft\{f_x\mright\}}\mleft(\sigma\mright) \leq \tilde{v}$, $\widetilde{\mleft\llbracket e_1 \mright\rrbracket \circ \mleft\{f_x\mright\}}'\mleft(\sigma\mright) \leq \tilde{v}'$, $\widetilde{\mleft\llbracket e_2 \mright\rrbracket \circ \mleft\{f_x\mright\}}\mleft(\sigma\mright) \leq \tilde{v}_2$, and $\widetilde{\mleft\llbracket e_2 \mright\rrbracket \circ \mleft\{f_x\mright\}}'\mleft(\sigma\mright) \leq \tilde{v}_2'$
    Thus by \Cref{lem:tilde-multiplication}, $\widetilde{\mleft\llbracket e_1 \,\texttt{*}\, e_2 \mright\rrbracket \circ \mleft\{f_x\mright\}}\mleft(\sigma\mright) \leq \tilde{v} \cdot \tilde{v}_2$ and $\widetilde{\mleft\llbracket e_1 \,\texttt{*}\, e_2 \mright\rrbracket \circ \mleft\{f_x\mright\}}'\mleft(\sigma\mright) \leq \tilde{v}' \cdot \tilde{v}_2 + \tilde{v} \cdot \tilde{v}'_2$.

  \end{proof}

\tildetrace*
\begin{proof}
  The proof proceeds by induction on $t$.

  \textbf{Case:} $\inferrule{ }{\mleft\langle \tilde{\sigma}, \texttt{{\color{blue}skip}} \mright\rangle \;\tilde{\Downarrow}\; \tilde{\sigma}}$ \\
  We know that $\mleft\llbracket\texttt{\color{blue}skip}\mright\rrbracket_x\mleft(\sigma\mright) = \sigma\mleft(x\mright)$.
  Thus $\mleft\llbracket\texttt{\color{blue}skip}\mright\rrbracket_y \circ \mleft\{f_x\mright\} = f_y$.
  We know $\tilde\sigma \vdash \mleft\{f_x\mright\}$, that $\tilde{\sigma}' = \tilde{\sigma}$, and that $\mleft\{\mleft\llbracket\texttt{\color{blue}skip}\mright\rrbracket_y \circ \mleft\{f_x\mright\}\mright\} = \mleft\{f_x\mright\}$, so $\tilde{\sigma}' \vdash \mleft\{ \mleft\llbracket t \mright\rrbracket_y \circ \mleft\{f_x\mright\} \mright\}$.

  \textbf{Case:}
    $\inferrule{
      \mleft\langle \tilde{\sigma}, t_1 \mright\rangle \;\tilde{\Downarrow}\; \tilde{\sigma}' \\
      \mleft\langle \tilde{\sigma}' , t_2 \mright\rangle \;\tilde{\Downarrow}\; \tilde{\sigma}''
    }{\mleft\langle \tilde{\sigma}, t_1 \,\texttt{;}\, t_2 \mright\rangle \;\tilde{\Downarrow}\; \tilde{\sigma}''}$ \\
    We know that $\mleft\llbracket t_1 \,\texttt{;}\, t_2\mright\rrbracket_x\mleft(\sigma\mright) = \mleft\llbracket t_2\mright\rrbracket_x\mleft(\mleft\{y \mapsto \mleft\llbracket t_1\mright\rrbracket_y\mleft(\sigma\mright)\mright\}\mright)$.
    By the IH, $\tilde{\sigma}' \vdash \mleft\{\mleft\llbracket t_1\mright\rrbracket_y \circ \mleft\{f_x\mright\}\mright\}$, and $\tilde{\sigma}'' \vdash \mleft\{\mleft\llbracket t_2\mright\rrbracket_z \circ \mleft\{\mleft\llbracket t_1\mright\rrbracket_y \circ \mleft\{f_x\mright\}\mright\} \mright\}$.
    We further know that composition is associative, meaning that we can rewrite this as
    $\tilde{\sigma}'' \vdash \mleft\{\mleft(\mleft\llbracket t_2\mright\rrbracket_z \circ \mleft\{ \mleft\llbracket t_1\mright\rrbracket_y \mright\} \mright\}\mright) \circ \mleft\{f_x\mright\}$.
    Because $\mleft\llbracket t_2\mright\rrbracket_z \circ \mleft\{ \mleft\llbracket t_1\mright\rrbracket_y \mright\} = \mleft\llbracket t_1 \,\texttt{;}\, t_2\mright\rrbracket_z$, this proves this case.

  \textbf{Case:} $
    \inferrule{
      \mleft\langle \tilde{\sigma}, e \mright\rangle \cstepsto \mleft(\tilde{v}, \tilde{v}'\mright)
    }{\mleft\langle \tilde{\sigma}, x \,\texttt{=}\, e \mright\rangle \;\tilde{\Downarrow}\; \tilde{\sigma}\mleft[x \mapsto \mleft(\tilde{v}, \tilde{v}'\mright)\mright]} $ \\
    We know that $\mleft\llbracket x \,\texttt{=}\, e \mright\rrbracket_y\mleft(\sigma\mright)$ is equal to $\mleft\llbracket e \mright\rrbracket\mleft(\sigma\mright)$ for $y=x$ and $\sigma\mleft(y\mright)$ for $y \neq x$.
    By \Cref{lem:tilde-expression} we know $\widetilde{\mleft\llbracket e \mright\rrbracket \circ \mleft\{f_x\mright\}}\mleft(1\mright) \leq \tilde{v}$ and $\widetilde{\mleft\llbracket e \mright\rrbracket \circ \mleft\{f_x\mright\}}'\mleft(1\mright) \leq \tilde{v}'$.
    Thus $\tilde{\sigma}\mleft[x \mapsto \mleft(\tilde{v}, \tilde{v}'\mright)\mright] \vdash \mleft\{\mleft\llbracket e \mright\rrbracket \circ \mleft\{f_x\mright\}\mright\}$.

\end{proof}

\analysisupperbound*
\begin{proof}
  We know that $\mleft\langle \mleft\{x_i \mapsto \mleft(1, 1\mright)\mright\}, t \mright\rangle \;\tilde{\Downarrow}\; \tilde{\sigma}$.
  We note that $\mleft\{x_i \mapsto \mleft(1, 1\mright)\mright\} \vdash \mleft\{f_{x_i} \mleft(x\mright) = x\mright\}$.
  Thus by \Cref{lem:tilde-trace}, $\tilde{\sigma} \vdash \mleft\{\mleft\llbracket t\mright\rrbracket_x\mright\}$, so for $\tilde{\sigma}\mleft(x\mright) = \mleft(\tilde{v}, \tilde{v}'\mright)$ and $z=\tilde{v}'^2$ we know $\widetilde{\mleft\llbracket t\mright\rrbracket_x}' \leq \tilde{v}'$, so $\mleft(\widetilde{\mleft\llbracket t\mright\rrbracket_x}'\mright)^2 \leq z$.
\end{proof}

\section{Log Rule}
\label{app:log-rule}

The $\texttt{{\color{blue}log}\{}b\texttt{\}(}e\texttt{)}$ rule requires the parameter $b$ to expand around since $\log$ is not analytic around 0.
The value set for this parameter must satisfy the $|b - v| < b$ condition for all inputs in the standard interpretation (to ensure that all values are in the radius of convergence) and the $b > \tilde{v}\sqrt{b^2+1}$ condition in the tilde interpretation, but is otherwise free to be set to a value that minimizes the upper bound on the complexity.

As an example of a program that uses value of $b$ value other than $1$, consider the following program:

\begin{lstlisting}[numbers=none,language=learnguage,frame=single,
  breaklines=true,postbreak=\mbox{\textcolor{red}{$\hookrightarrow$}\space},
  escapeinside=``
 ,linewidth=0.99\linewidth,
 ]
fun(x) {
   x = log{3.88}(0.75 * x);
   x = x * x;
   return x;
}
\end{lstlisting}
This program cannot use $b=1$ (since it fails the condition in the tilde interpretation: $\tilde{v}=0.75$ so $\tilde{v}\sqrt{b^2+1} > b$). Instead, this program requires $b>\frac{3}{\sqrt{7}}$, and is minimized around $b=3.88$. It would be interesting to automatically infer a value for this parameter that satisfies the constraints and optimizes the complexity bound, but this is future work requiring separate techniques that are outside of the scope of this paper.

\section{Evaluation Programs}
\label{app:evaluation}

This appendix presents the longer programs in the evaluation that were not presented in \Cref{sec:evaluation}.
\Cref{fig:camera} presents the Camera benchmark.
\Cref{fig:quake} presents the EQuake benchmark.
\Cref{fig:jmeint} presents the Jmeint benchmark.

\begin{figure}[b]
  \input{code/camera.tex}
  \phantomcaption
  \label{fig:camera}
\end{figure} %
\begin{figure}
  \ContinuedFloat
  \input{code/camera-2.tex}
  \caption{Camera benchmark, which performs a part of the conversion from blackbody radiator color temperature to the CIE 1931 x,y chromaticity approximation function.}
\end{figure}

\begin{figure}
  \input{code/equake.tex}
  \caption{EQuake benchmark, which computes the displacement of an object after one timestep in an earthquake simulation.}
  \label{fig:quake}
\end{figure}

\begin{figure}
  \input{code/jmeint.tex}
  \phantomcaption
  \label{fig:jmeint}
\end{figure} %
\begin{figure}
  \ContinuedFloat
  \phantomcaption
  \input{code/jmeint-2.tex}
\end{figure} %
\begin{figure}
  \ContinuedFloat
  \input{code/jmeint-3.tex}
\end{figure} %
\begin{figure}
  \ContinuedFloat
  \input{code/jmeint-4.tex}
  \caption{Jmeint benchmark, which calculates whether two 3D triangles intersect, and several auxiliary variables related to their intersection.}
\end{figure}

\clearpage
\section{Extended Renderer Demonstration}
\label{app:renderer}

This appendix provides further details on the evaluation on the renderer program in \Cref{sec:renderer}.

\Cref{listing:renderer-full-learnguage} presents the full code for the renderer program under study.%
\footnote{The original program is written in GLSL. We present a semantically equivalent translation (preserving all paths) of the program to \lang{} for simplicity of presentation.}

\begin{figure}
  \begin{center}
    \hspace*{2.05em}
    \resizebox{!}{0.45\textheight}{%
    \input{code/renderer_learnguage_code.tex}
  }
  \end{center}
  \caption{Full code for the renderer case study.}
  \label{listing:renderer-full-learnguage}
\end{figure}

\Cref{fig:renderer-scenes-app} presents the full set of scenes used in our evaluation.

\begin{figure}
  \begin{center}
    \begin{subfigure}[b]{0.49\textwidth}
      \centering
      \includegraphics[width=\textwidth]{figures/base_day}
      \caption[base-day]{Front-day scene.}
    \end{subfigure}
    \begin{subfigure}[b]{0.49\textwidth}
      \centering
      \includegraphics[width=\textwidth]{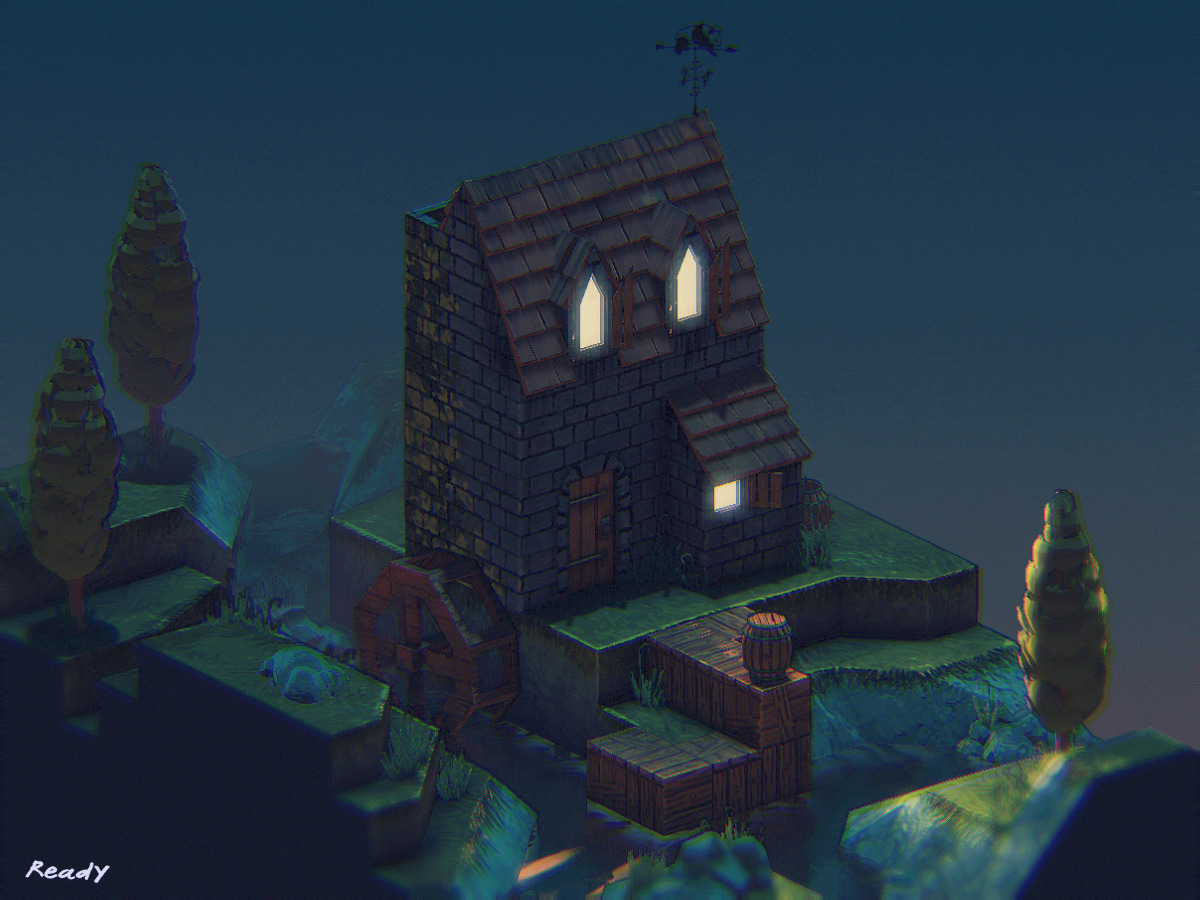}
      \caption[base-night]{Front-night scene.}
      \label{fig:base-night-app}
    \end{subfigure}
    \\
    \begin{subfigure}[b]{0.49\textwidth}
      \centering
      \includegraphics[width=\textwidth]{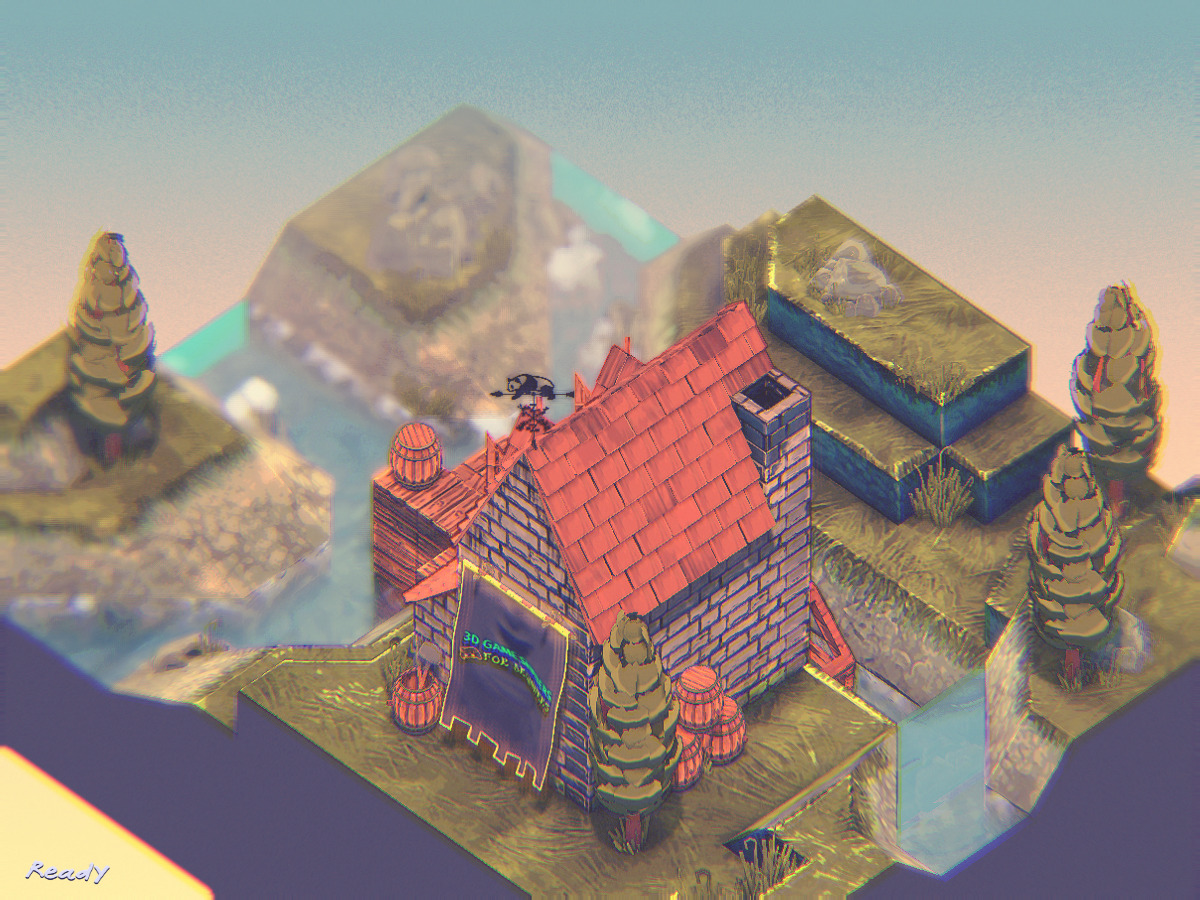}
      \caption[top-day]{Top-day scene.}
      \label{fig:top-day-app}
    \end{subfigure}
    \begin{subfigure}[b]{0.49\textwidth}
      \centering
      \includegraphics[width=\textwidth]{figures/top_night}
      \caption[top-night]{Top-night scene.}
    \end{subfigure}
    \caption[renderer-scenes]{Ground-truth scenes generated by the renderer.}
    \label{fig:renderer-scenes-app}
  \end{center}
\end{figure}

This program is a good candidate to train a surrogate of for several reasons.
First, it is an approximable program: as long as the outputs of a surrogate of the program are sufficiently close to the ground-truth outputs, the generated image will be perceptually indistinguishable.
Second, its paths are all determined by \emph{uniform} input variables, variables in GLSL that are constant across each invocation of the shader.
This means that relative to the cost of executing the program, it is cheap to determine which path a given input induces in the program (and thus which surrogate to apply).
Third, its execution environment is well suited to be replaced with a neural network, since the original program itself performs batch processing on a GPU.

\Cref{fig:base-day-all-app,fig:top-night-all-app} present larger versions of the renders in \Cref{sec:renderer-visualization} for ease of comparison.
The training budgets for these surrogates are 61 samples for the front-day surrogate and 2335 samples for the top-night surrogate.

\begin{figure}
  \begin{center}
    \begin{subfigure}[b]{0.49\textwidth}
      \centering
      \includegraphics[width=\textwidth]{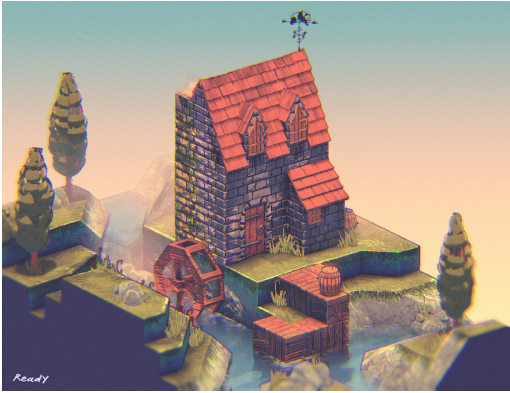}
      \caption[base-day]{Ground truth front-day scene.}
    \end{subfigure}
    \begin{subfigure}[b]{0.49\textwidth}
      \centering
      \includegraphics[width=\textwidth]{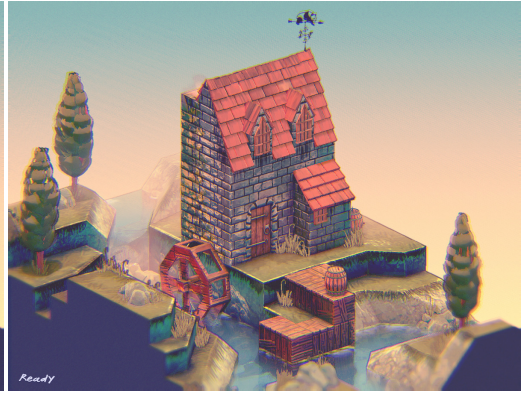}
      \caption[base-night]{\mOptimal{} surrogate front-day scene.}
    \end{subfigure}
    \\
    \begin{subfigure}[b]{0.49\textwidth}
      \centering
      \includegraphics[width=\textwidth]{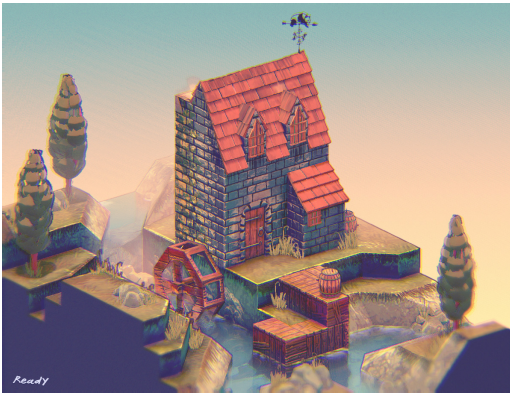}
      \caption[top-day]{Frequency surrogate front-day scene.}
    \end{subfigure}
    \begin{subfigure}[b]{0.49\textwidth}
      \centering
      \includegraphics[width=\textwidth]{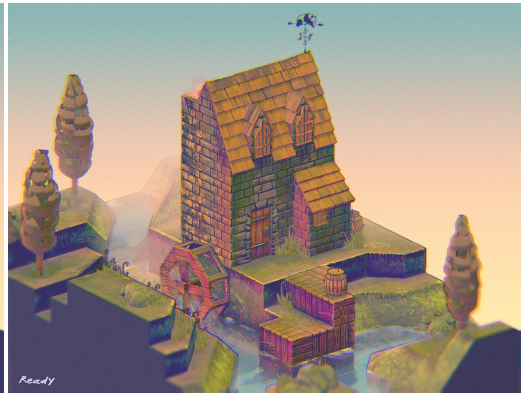}
      \caption[top-night]{Uniform surrogate front-day scene.}
    \end{subfigure}
    \caption[renderer-scenes]{Front-day scene (ground truth in the top left; surrogates in others).}
    \label{fig:base-day-all-app}
  \end{center}
\end{figure}

\begin{figure}
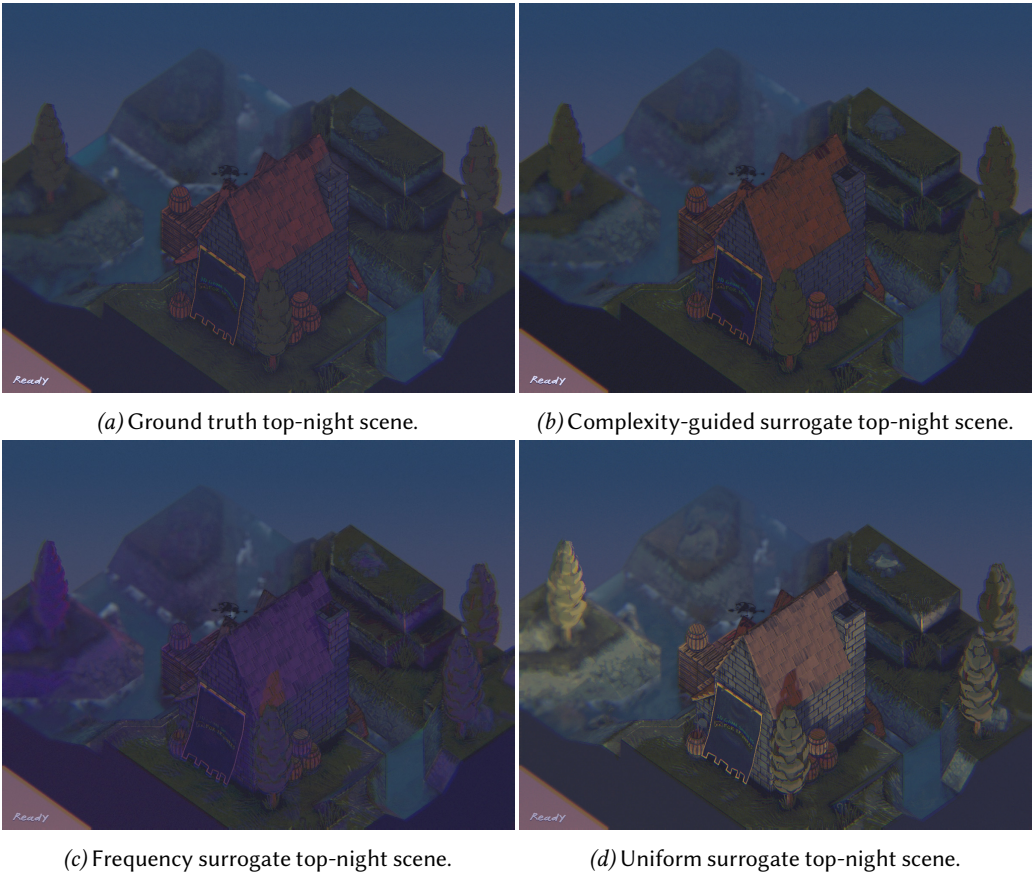

  \begin{center}
    \begin{subfigure}[b]{0.49\textwidth}
      \centering
      \includegraphics[width=\textwidth]{figures/top_night}
      \caption[top-night]{Ground truth top-night scene.}
    \end{subfigure}
    \begin{subfigure}[b]{0.49\textwidth}
      \centering
      \includegraphics[width=\textwidth]{figures/tnightall-optimal-2335-1-s2}
      \caption[top-night]{\mOptimal{} surrogate top-night scene.}
    \end{subfigure}
    \\
    \begin{subfigure}[b]{0.49\textwidth}
      \centering
      \includegraphics[width=\textwidth]{figures/tnightall-test-2335-1-s2}
      \caption[top-night]{Frequency surrogate top-night scene.}
    \end{subfigure}
    \begin{subfigure}[b]{0.49\textwidth}
      \centering
      \includegraphics[width=\textwidth]{figures/tnightall-uniform-2335-3-s2}
      \caption[top-night]{Uniform surrogate top-night scene.}
    \end{subfigure}
    \caption[renderer-scenes]{Top-night scene (ground truth in the top left; surrogates from the All dataset in others).}
    \label{fig:top-night-all-app}
  \end{center}
\end{figure}

\end{document}